\newcommand{\mat}[1]{\ensuremath{\mathbf{#1}}}
\newcommand{\alert}[1]{\textcolor{black}{#1}}
\renewcommand{\vec}[1]{\ensuremath{\mathbf{#1}}}
\newtheorem{theorem}{Theorem}
\newtheorem{lemma}{Lemma}
\newtheorem{remark}{Remark}
\newtheorem{corollary}{Corollary}
\newtheorem{definition}{Definition}
\renewcommand{\sb}[0]{\ensuremath{\mathsf{b}}}
\renewcommand{\sc}[0]{\ensuremath{\mathsf{c}}}
\newcommand{\su}[0]{\ensuremath{\mathsf{u}}}
\begin{document}

\title{The Approximate Capacity Region of the Gaussian Y-Channel}
\author{Anas~Chaaban,~\IEEEmembership{Student Member,~IEEE}, Aydin~Sezgin,~\IEEEmembership{Senior Member,~IEEE}
\thanks{The authors are with the Chair of Digital Communication Systems, Ruhr-Universit\"at Bochum (RUB), Universit\"atsstrasse 150, 44780 Bochum, Germany. Email: anas.chaaban@rub.de, aydin.sezgin@rub.de.}

\thanks{This work is supported by the German Research Foundation, Deutsche
Forschungsgemeinschaft (DFG), Germany, under grant SE 1697/3.} 

\thanks{Part of the paper has been presented in ISIT 2012 \cite{ChaabanSezginISIT} and 2013 \cite{ChaabanSezgin_ISIT12_Y}.}

}

\maketitle

\begin{abstract}
A full-duplex wireless network with three users that want to establish full message-exchange via a relay is considered. Thus, the network known as the Y-channel has a total of 6 messages, 2 outgoing and 2 incoming at each user. The users are not physically connected, and thus the relay is essential for their communication. The linear-shift deterministic Y-channel is considered first, its capacity region is characterized and shown not to be given by the cut-set bounds. The capacity achieving scheme has three different components (strategies): a bi-directional, a cyclic, and a uni-directional strategy. Network coding is used to realize the bi-directional and the cyclic strategies, and thus to prove the achievability of the capacity region. The result is then extended to the Gaussian Y-channel where the capacity region is characterized within a constant gap independent of the channel parameters.
\end{abstract}

\begin{IEEEkeywords}
Multi-way relaying, compute-forward, cyclic communication, capacity region, constant gap.
\end{IEEEkeywords}

\section{Introduction}
\IEEEPARstart{M}{ulti}-way communication refers to scenarios where nodes communicate with each other in a bi-directional manner. That is, nodes can be sources and destinations at the same time. The first studied multi-way communications setup is the two-way channel \cite{Shannon_TWC} where 2 nodes communicate with each other, and each has a message to deliver to the other node. The capacity of this setup is not known in general.

Several extensions of this setup were studied in the past decade, with more nodes and different message exchange scenarios. One such extension is obtained by combining relaying and multi-way communications to obtain the so-called multi-way relay channel. For instance, in the two-way relay channel (or the bi-directional relay channel (BRC)), two nodes communicate with each other via a relay. This extension models scenarios where communicating nodes are distributed and are only connected by intermediate nodes such as satellites. The BRC was introduced in \cite{RankovWittneben}, where relaying protocols were analyzed. Further research in this direction include work in \cite{KimDevroyeMitranTarokh} where transmission schemes employing several classical strategies (amplify, decode, and compress-forward) have been compared. In \cite{GunduzTuncelNayak, AvestimehrSezginTse, WilsonNarayananPfisterSprintson}, an approximate characterization of the capacity region of the Gaussian BRC was given. \alert{Furthermore, bi-directional communication with more users has been studied in \cite{SezginAvestimehrKhajehnejadHassibi} where the approximate capacity region of the multi-pair BRC has been characterized.}

The multi-way relay channel (MRC), consisting of more than 2 users and a relay, was studied in \cite{GunduzYenerGoldsmithPoor} where in this case, users communicate in a multi-way manner by multi-casting a message to other users via the relay. Upper and lower bounds for the capacity of the Gaussian MRC were given. In their setup, G\"und\"uz {\it et al.} divided users into several clusters, where each user in a cluster has a single message intended to all other users in the same cluster, which is referred to as multi-cast. On the other hand, Ong {\it et al.} considered a similar setup \cite{OngKellettJohnson}, where all users belong to the same cluster and all channel gains are equal. The authors of \cite{OngKellettJohnson} obtained the sum-capacity of this Gaussian setup with more than 2 users. 

A broadcast variant of this multi-way relaying setup, the so called Y-channel, was considered in \cite{LeeLimChun} where the nodes have multiple antennas. That is, three multiple antenna nodes communicate via a multiple antenna relay, and each node has two messages to broadcast to the other nodes. Each node in the Y-channel is thus a source of 2 messages and a destination of 2 messages. A transmission scheme exploiting interference alignment \cite{MaddahAliMotahariKhandani_XChannel,CadambeJafar_KUserIC} was proposed, and its corresponding achievable degrees of freedom were calculated. Note that the capacity of the Y-channel is not known in general. However, in \cite{LeeLimChun}, it was shown that if the relay has more than $\lceil{3M/2}\rceil$ antennas where $M$ is the number of antennas at the other nodes, then the sum-capacity cut-set bound \cite{CoverThomas} is asymptotically achievable, thus characterizing the degrees of freedom (DoF) of the MIMO Y-channel under this condition.

\alert{Notice that the work in \cite{LeeLimChun} considered a special case of the MIMO Y-channel, and thus the problem of the tightness of the cut-set bound in the general Gaussian Y-channel is left open. In order to resolve this problem, \cite{ChaabanSezginAvestimehr_YC_SC} has studied the SISO Y-channel, where all nodes have single antennas, and shown that the cut-set bounds are not tight, not even in the asymptotic sense (DoF sense). Note that the SISO Y-channel does not fall under the special case considered in \cite{LeeLimChun} and hence, the statement in \cite{LeeLimChun} does not apply here. The results of \cite{ChaabanSezginAvestimehr_YC_SC} showed that new bounds are required for an approximate characterization of the sum-capacity of the SISO Y-channel within a constant gap. The ideas in \cite{ChaabanSezginAvestimehr_YC_SC} have lead later on to \cite{ChaabanOchsSezgin}, where the DoF characterization of the MIMO Y-channel with an arbitrary number of antennas at all nodes has been completed. The work on the MIMO Y-channel has been pushed further to include cases with more users. For instance, \cite{LeeLeeLee} derived achievable DoF for the $K$-user MIMO Y-channel under some conditions on the numbers of antennas, and \cite{TianYenerMIMOMW} provided DoF characterization for some MIMO multi-way relay channels with $L$ clusters of users and $K$ users per cluster.}

\alert{As an approximate sum-capacity characterization for the SISO Y-channel was provided in \cite{ChaabanSezginAvestimehr_YC_SC}, the next goal is to determine the capacity region of the network within a constant gap. This is the challenge we take in this paper. It turns out that the bounds provided in \cite{ChaabanSezginAvestimehr_YC_SC} suffice for an approximate capacity region characterization of the Gaussian SISO Y-channel (GYC) within a constant gap, where the achievability is shown by using novel transmission strategies as described next.}

While interference alignment in space was used in the multi-antenna case \cite{LeeLimChun}, we use a different kind of alignment. Namely, in our single antenna case, we use alignment in the utilized codes to obtain a finite gap characterization of the capacity region of the GYC. This is accomplished by using lattice codes \alert{\cite{ErezZamir} that are aligned in such a way that facilitates computation at the relay \cite{NazerGastpar}.}

\alert{But before we study the GYC, we gain insights from the simpler linear-shift deterministic Y-channel (DYC), which is easier to handle. This special deterministic approximation has been proposed as a tool for approximating the capacity of wireless networks by Avestimehr {\it et al.} \cite{AvestimehrDiggaviTse}. By obtaining the capacity of the deterministic approximation of some wireless network, we can draw conclusions on its capacity in the Gaussian variant. For instance, the capacity region of the deterministic BRC was obtained in \cite{AvestimehrSezginTse} and used to obtain the capacity region of the Gaussian BRC relay channel within a constant gap. Similarly, the capacity of the deterministic multi-pair BRC was derived in \cite{AvestimehrKhajehnejadSezginHassibi}, which lead to the approximate capacity of the Gaussian counterpart in \cite{SezginKhajehnejadAvestimehrHassibi}.}

Our contribution for the deterministic Y-channel can be summarized as follows:
\begin{itemize}
\item We provide a new outer bound on the capacity region which is tighter than the cut-set outer bound. This is contrary to the deterministic BRC \cite{AvestimehrSezginTse} and the multi-pair BRC \cite{AvestimehrKhajehnejadSezginHassibi} where the cut-set bounds characterize the capacity region.
\item We then show that our outer bound is achievable. Network coding is used to achieve this capacity region, in a scheme which combines bi-directional, uni-directional, and a novel cyclic communication strategy.
\item Consequently, we characterize the capacity region of the DYC\footnote{\alert{It is worth to mention that the capacity of the 4-user deterministic case has been characterized recently in \cite{ZewailMohassebNafieElGamal}}.}.
\end{itemize}

Then, we use the capacity achieving scheme of the DYC to build an achievable scheme for the GYC. Superposition coding, nested lattices, and successive decoding are the main components of the proposed scheme in the Gaussian case. We provide an achievable rate region using this scheme, compare it to an outer bound, and show that it characterizes the 6-dimensional capacity region of this setup within a constant gap of 7/6 bits per dimension, regardless of the channel parameters.

The rest on the paper is organized as follows. The used notation and the system model are given in section \ref{Sec:NotationAndSystemModel}. Upper bounds for the deterministic setup are given in section \ref{Sec:DYCUpperBounds} and the capacity achieving transmit strategy is described in section \ref{Sec:DYCAchievability}. The Gaussian Y-channel is considered next, with an outer bound in section \ref{Sec:GYCOuterBound} and an inner bound in section \ref{Sec:GYCInnerBound}. The gap between the outer and the inner bounds is analyzed in section \ref{Sec:Gap}. We conclude the paper with a discussion in section \ref{Sec:Discussion}.

\section{Notation and System Model}
\label{Sec:NotationAndSystemModel}

\subsection{Notation}
Throughout the paper, we use the following notation. Scalars are represented by normal font, vectors and matrices by bold face font, and sets by calligraphic font. For instance, $x$, $\vec{x}$, and $\mathcal{X}$ are a scalar, a vector, and a set respectively. The set $\mathcal{S}^\sc$ denotes the complement of a set $\mathcal{S}$. A sequence of $n$-vectors $(\vec{x}_1,\dots,\vec{x}_n)$ is denoted $\vec{x}^n$. The modulo-2 addition (XOR) of symbols in the binary field $\mathbb{F}_2$ is denoted $\oplus$. The function $C(x)$ is given by $\frac{1}{2}\log(1+x)$ and $C^+(x)$ denotes $\max\{0,C(x)\}$.

\subsection{System Model}
The Y-channel is a multi-way relaying setup where 3 users communicate with each other in a bi-directional manner via a relay. That is, each user has a message to each other user, resulting in two outgoing messages and two incoming messages at each user, for a total of 6 messages. The users do not have direct links between each other, and hence the relay is essential for communication. This setup is shown in Figure \ref{Fig:Model}.

\begin{figure}[t]
\centering
\includegraphics[width=0.8\columnwidth]{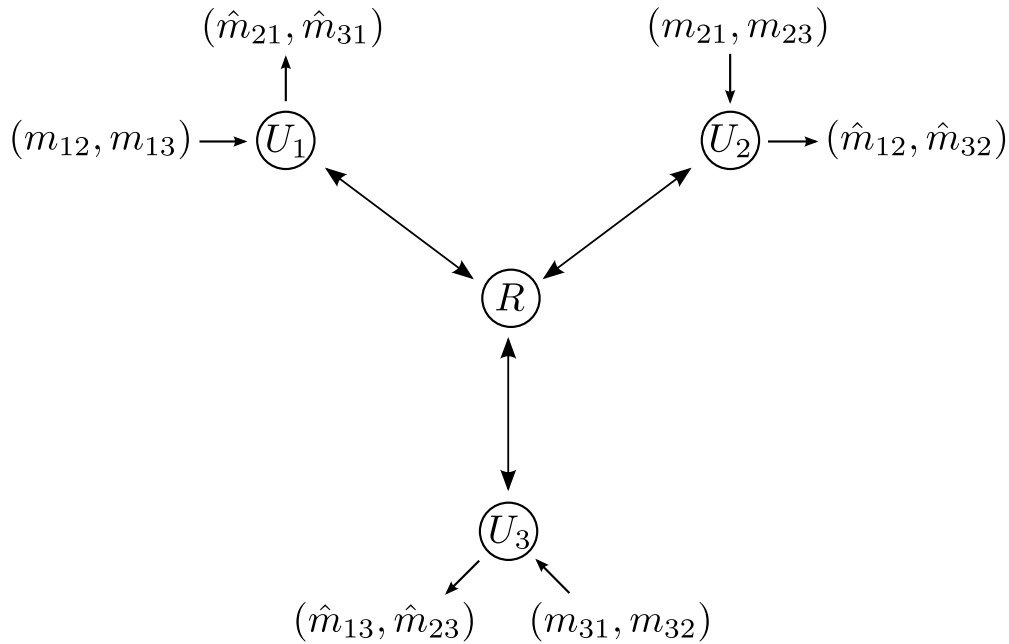}
\caption{The Y-channel showing incoming messages and outgoing messages at each node.}
\label{Fig:Model}
\end{figure}

As aforementioned, we have 6 messages, where the message $m_{jk}$ from user $j$ to user $k$, $j,k\in\{1,2,3\}$ is a \alert{realization of a random variable $M_{jk}$} uniformly distributed over the set $\mathcal{M}_{jk}\triangleq\{1,\dots,2^{nR_{jk}}\}$ for all $j\neq k$ where $R_{jk}\in\mathbb{R}_+$. For communication, each user $j$ sends a codeword of length $n$, $x_j^n$, with symbols from an alphabet $\mathcal{X}_j$. The relay receives the length-$n$ signal denoted $y_r^n\in\mathcal{Y}_r^n$, which is then processed and forwarded as the relay transmit signal $x_r^n\in\mathcal{X}_r^n$. This signal is received then at the users as $y_j^n\in\mathcal{Y}_j^n$. The sets $\mathcal{Y}_r$, $\mathcal{Y}_j$, and $\mathcal{X}_r$ are the alphabets of the relay received signal, the users' received signal, and the relay transmit signal, respectively.

The $i$th symbol of $x_j^n$ is in general a function of the outgoing messages at user $j$ and its received symbols up to time instant $i$, i.e., 
\begin{align}
\label{UsersEncoder}
x_{ji}=f_{ji}(m_{jk},m_{jl},y_j^{i-1})
\end{align}
where $f_{ji}$ is the encoding function of user $j$ at time instant $i$ with
$$f_{ji}: \mathcal{M}_{jk}\times\mathcal{M}_{jl}\times\mathcal{Y}_j^{i-1}\to\mathcal{X}_j.$$
The relay listens to the transmission of the users, constructs the signal $x_r^n$ whose $i$th symbol is 
\begin{align}
\label{RelayEncoder}
x_{ri}=f_{ri}(y_r^{i-1})
\end{align}
and is a function $f_{ri}$ of the received symbols at the relay $y_r^{i-1}$, and sends it back to the users. User $j$ receives $y_j^n$, and tries to decode his desired messages $(\hat{m}_{kj},\hat{m}_{lj})$ from $y_j^n$ using the knowledge of $(m_{jk},m_{jl})$, i.e.,  $(\hat{m}_{kj},\hat{m}_{lj})=g_j(y_j^n,m_{jk},m_{jl})$ where $$g_j:\mathcal{Y}_j^{n}\times\mathcal{M}_{jk}\times\mathcal{M}_{jl}\to\mathcal{M}_{kj}\times\mathcal{M}_{lj}$$ is the decoding function of user $j$. An error occurs if $(\hat{m}_{kj},\hat{m}_{lj})\neq({m}_{kj},{m}_{lj})$. The collection of message sets, encoders, and decoders defines a code for the Y-channel.

\subsection{Gaussian Y-channel}
In the Gaussian Y-channel (GYC) the alphabet of the transmit and received signals is the real set $\mathbb{R}$. The relay receives
\begin{align}
y_{ri}=h_1x_{1i}+h_2x_{2i}+h_3x_{3i}+z_{ri},
\end{align}
in time instant $i$, where $z_{ri}$ is a realization of an i.i.d. Gaussian noise $Z_r\sim\mathcal{N}(0,1)$ and $h_1,h_2,h_3\in\mathbb{R}$ are the channel coefficients from the users to the relay. Without loss of generality, we assume that
\begin{align}
\label{Ordering}
h_1^2\geq h_2^2\geq h_3^2.
\end{align}
The relay transmit signal $x_r$ is received at user $j$, corrupted by noise, as
\begin{align}
\label{ReceivedSignal}
y_{ji}=h_jx_{ri}+z_{ji},
\end{align}
where $z_{ji}$ is a realization of an i.i.d. Gaussian noise $Z_j\sim\mathcal{N}(0,1)$. \alert{All nodes have a power constraint $P$, thus $\frac{1}{n}\sum_{i=1}^n\mathbb{E}[X_{ri}^2]\leq P$ and 
$\frac{1}{n}\sum_{i=1}^n\mathbb{E}[X_{ji}^2]\leq P$. Note that we have assumed that the channels are reciprocal, i.e., the channel coefficient from user $j$ to the relay is the same as that from the relay to user $j$.}

\subsection{Linear-shift Deterministic Y-channel}
In the linear-shift deterministic Y-channel (DYC), the channel gains of the Gaussian Y-channel are modeled by non-negative integers \alert{$n_j=\left\lceil\frac{1}{2}\log(h_j^2P)\right\rceil$,} $j\in\{1,2,3\}$ (see \cite{AvestimehrDiggaviTse} for more details). Due to \eqref{Ordering}, we have
\begin{align}
\label{D-Ordering}
n_1\geq n_2\geq n_3.
\end{align}
These integers $n_j$, referred to as levels, define the number of bits that survive a channel which clips a number of bits of the transmitted binary vector. In more detail, the transmit signal of user $j$ and the relay is a $q$-dimensional binary vector $\vec{x}_{ji},\vec{x}_{ri}\in\mathbb{F}_2^q$ where $q=\max_j\{n_j\}$. The received signal at each node, $\vec{y}_{ri},\vec{y}_{ji}\in\mathbb{F}_2^q$, is a deterministic function of the transmit signals, modeled by a down-shift of the transmit signal. That is
\begin{align}
\label{IO1}
\vec{y}_{ri}&=\sum_{j=1}^3 \mat{S}^{q-n_j}\vec{x}_{ji}\\
\label{IO2}
\vec{y}_{ji}&=\mat{S}^{q-n_j}\vec{x}_{ri}
\end{align}
where $\vec{S}$ is the $q\times q$ downward shift matrix. Note that the impact of this channel is clipping the least significant bits of the channel input, leaving the most significant bits `visible' at the receivers. The weaker the channel (small $n_j$) the more symbols will be lost through the channel.

All operations are performed in $\mathbb{F}_2$. A deterministic Y-channel with levels $n_1$, $n_2$ and $n_3$ is denoted DYC$(n_1,n_2,n_3)$. As an example, a DYC$(4,3,2)$ is shown in Figure \ref{Fig:DYC_Model}. A line between two circles in Figure \ref{Fig:DYC_Model} represents a bit-pipe between these two levels, which models \eqref{IO1} and \eqref{IO2}.

\begin{definition}
A rate tuple $(R_{12},R_{13},R_{21},R_{23},R_{31},R_{32})$ denoted $\vec{R}$ corresponding to the message tuple \\$(m_{12},m_{13},m_{21},m_{23},m_{31},m_{32})$ denoted $\vec{m}$, is said to be achievable if there exist a sequence of codes such that the average error probability can be made arbitrarily small by increasing $n$. The set of all achievable rate tuples is the capacity region denoted $\mathcal{C}_g$ for the GYC and $\mathcal{C}_d$ for the DYC.
\end{definition}

\begin{figure}[t]
\centering
\includegraphics[width=.95\columnwidth]{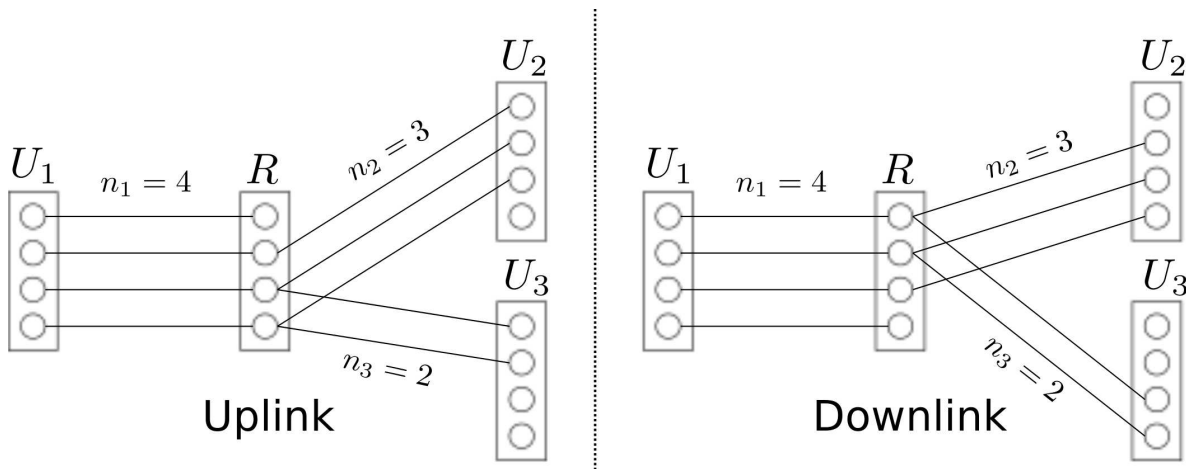}
\caption{A deterministic Y-channel with $(n_1,n_2,n_3)=(4,3,2)$. The strongest channel is between the relay ($R$) and user 1 ($U_1$). In the uplink, the relay receives all bits from user 1 that are above the noise level. Users 2 and 3 have weaker channels, and thus, bits at low levels arrive below the noise level and are clipped. Similarly in the downlink, the bits at the lower levels at the relay are clipped at receivers 2 and 3.}
\label{Fig:DYC_Model}
\end{figure}

\section{The DYC: Upper Bounds}
\label{Sec:DYCUpperBounds}
In this section, we provide some upper bounds on the achievable rates of the DYC. We start with the single rate bounds given by
\begin{align}
\label{SUB}
R_{jk}\leq\min\{n_j,n_k\},
\end{align}
followed by the cut-set bounds in the following subsection, and genie-aided upper bounds in Section \ref{SubSec:GenieAidedBounds}.

\subsection{The DYC Cut-set Bounds}
The cut-set bounds \cite{CoverThomas} can be used to obtain upper bounds on the achievable rates. Consider Figure \ref{Fig:DYC_Model}. One cut in this setup provides the sets $\mathcal{S}=\{U_1\}$ and $\mathcal{S}^\sc=\{U_2,U_3,R\}$. The rate of information flow from the set $\mathcal{S}$ into the set $\mathcal{S}^\sc$ can be bounded using the cut-set bound by
\begin{align}
R_{12}+R_{13}\leq n_1.
\end{align}
The rate in the other direction, i.e., from the set $\mathcal{S}^\sc$ into the set $\mathcal{S}$ can be bounded by
\begin{align}
R_{21}+R_{31}\leq n_1.
\end{align}
The next cut we apply gives $\mathcal{S}=\{U_1,R\}$ and $\mathcal{S}^\sc=\{U_2,U_3\}$. Here, the rate of information flow from the set $\mathcal{S}$ into the set $\mathcal{S}^\sc$ is bounded by
\begin{align}
R_{12}+R_{13}\leq \max\{n_2,n_3\}\stackrel{\eqref{D-Ordering}}{=}n_2,
\end{align}
which resembles the broadcast channel (BC) bound in \cite{AvestimehrDiggaviTse}. The rate in the other direction can be bounded by
\begin{align}
R_{21}+R_{31}\leq \max\{n_2,n_3\}\stackrel{\eqref{D-Ordering}}{=}n_2,
\end{align}
which resembles the multiple access channel (MAC) bound in \cite{AvestimehrDiggaviTse}. Following this procedure for the other remaining cuts, and collecting the resulting bounds, the cut-set bound for the DYC can be written as follows
\begin{align}
\label{CSB1}
R_{jk}+R_{jl}&\leq \min\{n_j,\max\{n_k,n_l\}\}\\
\label{CSB2}
R_{kj}+R_{lj}&\leq \min\{n_j,\max\{n_k,n_l\}\}.
\end{align}
for all distinct $j,k,l\in\{1,2,3\}$. Notice that the expressions on the left hand side of \eqref{CSB1} and \eqref{CSB2} are sums of two rates. These bounds already provide an outer bound on the capacity region $\mathcal{C}_d$. 

In many cases with bi-directional communication, the cut-set bounds were shown to characterize the whole capacity region of the deterministic setup \cite{AvestimehrKhajehnejadSezginHassibi,AvestimehrSezginTse}. However, in some deterministic bi-directional setups, the cut-set bounds are not enough for characterizing the capacity region, as in \cite{MokhtarMohassebNafieElGamal} for instance, and further bounds are required. The DYC belongs to the latter case. In fact, many rate constraints from the cut-set bounds will be shown to be redundant due to the bounds we provide next which are  more binding.

\subsection{Genie Aided Upper Bounds for the DYC}
\label{SubSec:GenieAidedBounds}
The following lemmas provide upper bounds on the achievable rates of the DYC which are tighter than some cut-set bounds. They are obtained by giving additional side information to some nodes.

\begin{lemma}
\label{FromRelay}
The achievable rates in the DYC must satisfy
\begin{align}
\label{BFR}
R_{kj}+R_{lj}+R_{kl}&\leq \max\{n_j,n_l\}
\end{align}
for all distinct $j,k,l\in\{1,2,3\}$.
\end{lemma}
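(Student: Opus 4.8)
The plan is to establish \eqref{BFR} by a genie-aided converse that makes all three target messages decodable at a single \emph{super-receiver}, and then to bound the total rate by the entropy of that receiver's observation. First I would note the roles of the three messages: $m_{kj}$ and $m_{lj}$ are decoded at user $j$ from $\vec{y}_j^n$ together with its own messages $m_{jk},m_{jl}$, whereas $m_{kl}$ is decoded at user $l$ from $\vec{y}_l^n$ together with $m_{lj},m_{lk}$. The structural fact I would exploit is that in the down-shift downlink \eqref{IO2} the observation of the stronger of users $j,l$ determines that of the weaker one: since $\vec{y}_j^n$ reveals the top $n_j$ bits of $\vec{x}_r$ and $\vec{y}_l^n$ the top $n_l$ bits, if $n_j\ge n_l$ the latter is a deterministic function of the former (and symmetrically). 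Writing $\vec{y}_{\max}^n$ for the observation of whichever of $j,l$ has level $\max\{n_j,n_l\}$, this observation determines both $\vec{y}_j^n$ and $\vec{y}_l^n$, so no separate case analysis on which user is stronger is needed.

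Next I would hand the super-receiver the genie $(m_{jk},m_{jl},m_{lk})$, which are precisely the transmit-side messages of users $j$ and $l$ that are \emph{not} among the three targets. Because all six messages are mutually independent, these genie messages are independent of $(m_{kj},m_{lj},m_{kl})$, so conditioning on them leaves $H(m_{kj},m_{lj},m_{kl})=n(R_{kj}+R_{lj}+R_{kl})$ unchanged. I would then split $H(m_{kj},m_{lj},m_{kl}\mid m_{jk},m_{jl},m_{lk})$ into a mutual-information term with $\vec{y}_{\max}^n$ plus a residual conditional entropy. The mutual-information term is at most $H(\vec{y}_{\max}^n)\le n\max\{n_j,n_l\}$, since each symbol $\vec{y}_{\max,i}$ carries at most $\max\{n_j,n_l\}$ bits. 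Dividing by $n$ and letting $n\to\infty$ then yields \eqref{BFR}.

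The residual entropy is controlled by Fano's inequality, and this is the step requiring care. From $\vec{y}_{\max}^n$ one recovers $\vec{y}_j^n$ and runs the decoder $g_j$ with the genie $m_{jk},m_{jl}$ to estimate $(m_{kj},m_{lj})$ with vanishing error; one then feeds the estimate $\hat m_{lj}$ together with $\vec{y}_l^n$ (also obtained from $\vec{y}_{\max}^n$) and the genie $m_{lk}$ into $g_l$ to estimate $m_{kl}$. A union bound over the two decoding stages makes the overall error probability vanish, so the residual entropy is at most $n\epsilon_n$ with $\epsilon_n\to 0$.

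I expect the main obstacle to be precisely this sequential decoding inside Fano: the super-receiver cannot invoke $g_l$ directly, because $g_l$ requires $m_{lj}$, which is itself one of the three target rates rather than legitimate genie information. The argument closes only because $m_{lj}$ is produced in the first stage by $g_j$, so I would be careful to order the two stages correctly and to verify that the genie set $(m_{jk},m_{jl},m_{lk})$ contains no target message, ensuring that all three rates $R_{kj},R_{lj},R_{kl}$ are genuinely counted on the left-hand side.
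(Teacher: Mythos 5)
Your proposal is correct and follows essentially the same route as the paper's proof of Lemma~\ref{FromRelay}: your genie set $(m_{jk},m_{jl},m_{lk})$ is exactly the side information $(m_{12},m_{13},m_{32})$ appearing in the paper's mutual-information bound, your two-stage decoding via downlink degradedness \eqref{IO2} matches the paper's argument that $U_1$, holding $(\vec{Y}_1^n,m_{31},m_{32})$, has a better observation than $U_3$ and can therefore also decode $m_{23}$, and your final bound $H(\vec{y}_{\max}^n)\leq n\max\{n_j,n_l\}$ is the paper's chain-rule/Bernoulli bound $H(\vec{Y}_1^n)\leq n\,n_1$. The only difference is presentational: your single super-receiver at $\vec{y}_{\max}^n$ handles both orderings of $n_j$ and $n_l$ uniformly, whereas the paper proves the instance $j=1$, $k=2$, $l=3$ and dismisses the remaining permutations with ``in a similar way.''
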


\begin{proof}
Let a genie give the message $m_{32}$ to $U_1$ as additional information. This gives a new channel (a genie-aided channel) with a larger capacity region, and hence leads to a valid upper bound. \alert{After decoding $m_{21}$ and $m_{31}$, $U_1$ will have knowledge of the following information $$(\vec{Y}_1^n,m_{31},m_{32}).$$ Recall that $U_3$ is able to decode $m_{23}$ from $(\vec{Y}_3^n,m_{31},m_{32})$. Since $U_1$ has $(\vec{Y}_1^n,m_{31},m_{32})$ which is a better observation than $(\vec{Y}_3^n,m_{31},m_{32})$ (see \eqref{D-Ordering} and \eqref{IO2}), it can decode $(m_{21},m_{31},m_{23})$ leading to the rate bound}
\begin{align*}
&\hspace{-1cm}n(R_{21}+R_{31}+R_{23}-\epsilon_{n})\nonumber\\
&\leq I(m_{21},m_{31},m_{23};\vec{Y}_1^n,m_{12},m_{13},m_{32})
\end{align*}
where $\epsilon_{n}\to0$ as $n\to\infty$ from Fano's inequality. We continue as follows
\begin{align*}
n(R_{21}+R_{31}&+R_{23}-\epsilon_{n})\\
&\stackrel{(a)}{\leq} I(m_{21},m_{31},m_{23};\vec{Y}_1^n|m_{12},m_{13},m_{32})\\
&= H(\vec{Y}_1^n|m_{12},m_{13},m_{32})-H(\vec{Y}_1^n|\vec{m})\\
&\stackrel{(b)}{\leq} H(\vec{Y}_1^n)-H(\vec{Y}_1^n|\vec{m},\vec{X}_r^n)\\
&\stackrel{(c)}{=} H(\vec{Y}_1^n)\\
&\stackrel{(d)}{=} \sum_{i=1}^n H(\vec{Y}_{1i}|\vec{Y}_{1}^{i-1})\\
&\stackrel{(b)}{\leq} \sum_{i=1}^n H(\vec{Y}_{1i})\\
&\stackrel{(d)}{=} \sum_{i=1}^n \sum_{p=1}^{n_1} H(Y_{1i}(p)|Y_{1i}(1),\dots,Y_{1i}(p-1))\\
&\stackrel{(b)}{\leq} \sum_{i=1}^n \sum_{p=1}^{n_1} H(Y_{1i}(p))\\
&\stackrel{(e)}{\leq} n(n_1)
\end{align*}
where $Y_{1i}(p)$ is the $p$th component of $\vec{Y}_{1i}$, and
\begin{itemize}
\item[$(a)$] follows due to the independence of the messages,
\item[$(b)$] follows since conditioning does not increase entropy, 
\item[$(c)$] follows since $H(\vec{Y}_1^n|\vec{m},\vec{X}_r^n)=0$ because $\vec{Y}_1^n$ is a deterministic function of $\vec{X}_r^n$, 
\item[$(d)$] follows by the chain rule, and
\item[$(e)$] follows since the binary entropy function is maximized to 1 by the Bernoulli distribution with probability $0.5$. 
\end{itemize}
Thus, with $n\to\infty$, 
\begin{align*}
R_{21}+R_{31}+R_{23}\leq n_1.
\end{align*}
In a similar way, we can obtain the other bounds and the lemma is proved.
\end{proof}

\begin{lemma}
\label{GUB}
The achievable rates in the DYC must satisfy
\begin{align}
\label{BTR}
R_{kj}+R_{lj}+R_{kl}&\leq \max\{n_k,n_l\}
\end{align}
for all distinct $j,k,l\in\{1,2,3\}$.
\end{lemma}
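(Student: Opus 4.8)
The plan is to reuse the genie-aided strategy of Lemma \ref{FromRelay}, but to replace the observation $\vec{Y}_1^n$ by the relay's received signal $\vec{Y}_r^n$, which lets me cancel the contribution of the strongest user and thereby shrink the right-hand side from $\max\{n_j,n_l\}$ to $\max\{n_k,n_l\}$. I will prove the representative case $(j,k,l)=(1,2,3)$, i.e. $R_{21}+R_{31}+R_{23}\leq\max\{n_2,n_3\}=n_2$; the remaining cases follow by relabeling. Concretely, I let a genie hand $U_1$ both the message $m_{32}$ and the entire relay received signal $\vec{Y}_r^n$, producing an enhanced channel whose capacity region contains $\mathcal{C}_d$ and hence yields a valid upper bound.

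First I argue that the genie-aided $U_1$ can recover $(m_{21},m_{31},m_{23})$. Since the relay is strictly causal, $\vec{X}_r^n$ is a deterministic function of $\vec{Y}_r^{n-1}$, so from $\vec{Y}_r^n$ the receiver reconstructs $\vec{X}_r^n$ and hence every downlink signal $\vec{Y}_{m}^n=\mat{S}^{q-n_m}\vec{X}_r^n$. Using $\vec{Y}_1^n$ together with its own $(m_{12},m_{13})$ it emulates $U_1$'s decoder to obtain $(m_{21},m_{31})$; then, using $\vec{Y}_3^n$ together with $m_{31}$ (just decoded) and $m_{32}$ (from the genie), it emulates $U_3$'s decoder to obtain $m_{23}$. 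Fano's inequality, the independence of the messages, and the fact that $\vec{Y}_r^n$ is a deterministic function of $\vec{m}$ then give, exactly as in Lemma \ref{FromRelay},
\begin{align*}
n(R_{21}+R_{31}+R_{23}-\epsilon_n)\leq H(\vec{Y}_r^n\mid m_{12},m_{13},m_{32}).
\end{align*}

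The crux is to bound this entropy by $n\,n_2$ rather than by $n\,n_1$; this requires removing $U_1$'s contribution $\vec{X}_1^n$ from $\vec{Y}_r^n$, which is delicate because feedback makes $\vec{X}_1^n$ depend on the other users' messages, so $\vec{X}_1^n$ is \emph{not} determined by $(m_{12},m_{13})$ alone. I expect this to be the main obstacle, and I would resolve it with a time-wise chain rule. Writing $q=n_1$ so that $\vec{Y}_{ri}=\vec{X}_{1i}\oplus\mat{S}^{q-n_2}\vec{X}_{2i}\oplus\mat{S}^{q-n_3}\vec{X}_{3i}$, I expand $H(\vec{Y}_r^n\mid m_{12},m_{13},m_{32})=\sum_{i=1}^n H(\vec{Y}_{ri}\mid\vec{Y}_r^{i-1},m_{12},m_{13},m_{32})$. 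By strict causality, $\vec{X}_{1i}=f_{1i}(m_{12},m_{13},\vec{Y}_1^{i-1})$ is a deterministic function of $(m_{12},m_{13},\vec{Y}_r^{i-1})$ (since $\vec{Y}_1^{i-1}$ follows from $\vec{X}_r^{i-1}$, which follows from $\vec{Y}_r^{i-2}$), so it can be XOR-ed off inside each conditional entropy, leaving $H(\mat{S}^{q-n_2}\vec{X}_{2i}\oplus\mat{S}^{q-n_3}\vec{X}_{3i}\mid\vec{Y}_r^{i-1},m_{12},m_{13},m_{32})\leq n_2$, because this residual vector is supported on the lowest $\max\{n_2,n_3\}=n_2$ levels. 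Summing over $i$ yields $H(\vec{Y}_r^n\mid m_{12},m_{13},m_{32})\leq n\,n_2$, and letting $n\to\infty$ gives $R_{21}+R_{31}+R_{23}\leq n_2=\max\{n_2,n_3\}$. The general statement follows verbatim with $U_j$ as the base decoder, $U_l$ emulated via the genie message $m_{lk}$, and $U_j$'s signal cancelled, which leaves the residual on the lowest $\max\{n_k,n_l\}$ levels.
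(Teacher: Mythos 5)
Your proof is correct and takes essentially the same route as the paper's: the same genie $(\vec{Y}_r^n,m_{32})$, the same decodability argument (with $\vec{Y}_r^n$ in hand, $U_1$ reconstructs $\vec{X}_r^n$ and hence $U_3$'s observation and can emulate its decoder), Fano plus the time-wise chain rule, and the same key causality step reconstructing $\vec{X}_{1i}$ from $(m_{12},m_{13},\vec{Y}_r^{i-1})$ so that user 1's contribution is eliminated, leaving at most $n_2$ bits of uncertainty per symbol. The only cosmetic difference is that you XOR $\vec{X}_{1i}$ off inside the conditional entropy, whereas the paper conditions on $\vec{X}_{1i}$ and notes that the top $n_1-n_2$ components of $\vec{Y}_{ri}$ are then determined; the two steps are equivalent.
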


\begin{proof}
\alert{A genie gives $(\vec{Y}_r^n,m_{32})$ as side information to the $U_1$. Then, similar to the proof of Lemma \ref{FromRelay}, $U_1$ can decode $m_{23}$ after decoding $m_{21}$ and $m_{31}$. In more detail, after decoding $m_{21}$ and $m_{31}$, $U_1$ has the observation $(\vec{Y}_r^n,m_{31},m_{32})$. This observation is a better observation than that of $U_3$. Hence, $U_1$ can also decode $m_{23}$. Now we can use Fano's inequality to write the bound}
\begin{align}
n(R_{21}&+R_{31}+R_{23}-\epsilon_{n})\nonumber\\
&\leq I(m_{21},m_{31},m_{23};\vec{Y}_1^n,\vec{Y}_r^n,m_{12},m_{13},m_{32}),
\end{align}
where $\epsilon_{n}\to 0$ as $n\to\infty$. We proceed with this bound as follows
\begin{align*}
n(R_{21}&+R_{31}+R_{23}-\epsilon_{n})\nonumber\\
&\stackrel{(a)}{\leq} I(m_{21},m_{31},m_{23};\vec{Y}_1^n,\vec{Y}_r^n|m_{12},m_{13},m_{32})\nonumber\\
&\stackrel{(b)}{=} I(m_{21},m_{31},m_{23};\vec{Y}_r^n|m_{12},m_{13},m_{32})\nonumber\\
&= H(\vec{Y}_r^n|m_{12},m_{13},m_{32})-H(\vec{Y}_r^n|\vec{m})\nonumber\\
&\leq H(\vec{Y}_r^n|m_{12},m_{13},m_{32})\nonumber\\
&\stackrel{(c)}{=} \sum_{i=1}^n H(\vec{Y}_{ri}|m_{12},m_{13},m_{32},\vec{Y}_r^{i-1})\nonumber\\
&\stackrel{(d)}{=} \sum_{i=1}^n H(\vec{Y}_{ri}|m_{12},m_{13},m_{32},\vec{Y}_r^{i-1},\vec{Y}_1^{i},\vec{X}_1^{i+1})\nonumber\\
&\stackrel{(e)}{\leq} \sum_{i=1}^n H(\vec{Y}_{ri}|\vec{X}_{1i})\nonumber\\
&\stackrel{(c)}{=} \sum_{i=1}^n \sum_{p=1}^{n_1}H(Y_{ri}(p)|Y_{ri}(1),\dots,Y_{ri}(p-1),\vec{X}_{1i})\nonumber\\
&\stackrel{(e)}{\leq} \sum_{i=1}^n \sum_{p=1}^{n_1}H(Y_{ri}(p)|\vec{X}_{1i})\nonumber\\
&\stackrel{(f)}{\leq} \sum_{i=1}^n \sum_{p=1}^{n_2}H(Y_{ri}(p))\nonumber\\
&\stackrel{(g)}{\leq} n(n_2)
\end{align*}
where where $Y_{ri}(p)$ is the $p$th component of $\vec{Y}_{ri}$, and
\begin{itemize}
\item[$(a)$] follows due to the independence of the messages, 
\item[$(b)$] follows from the Markov chain $$(m_{21},m_{31},m_{23})\to(\vec{Y}_r^n,m_{12},m_{13},m_{32})\to\vec{Y}_1^n,$$
\item[$(c)$] follows from the chain rule, 
\item[$(d)$] follows since knowing $\vec{Y}_r^{i-1}$, we can construct $\vec{X}_{ri}=f_{ri}(\vec{Y}_r^{i-1})$ for $i\in\{1,\dots,i\}$ \eqref{RelayEncoder}, i.e., we can construct $\vec{X}_r^{i}$, then we can construct $\vec{Y}_1^i$ which is a deterministic function of $\vec{X}_r^{i}$ \eqref{IO2}, and then we can use $(m_{12},m_{13},\vec{Y}_1^{i})$ to construct $\vec{X}_1^{i+1}$ since  $\vec{X}_{1i}=f_{1i}(m_{12},m_{13},\vec{Y}_1^{i-1})$ \eqref{UsersEncoder},
\item[$(e)$] follows since conditioning does not increase entropy,
\item[$(f)$] follows since knowing $\vec{X}_{1i}$, we know the value of the most significant $n_1-n_2$ bits of $\vec{Y}_{ri}$ (where no interference occurs). Hence the remaining uncertainty is that of the remaining $n_2$ bits.
\item[$(g)$] follows since the binary entropy function is maximized to 1 by the Bernoulli distribution with probability $0.5$. 
\end{itemize}
Letting $n\to\infty$, we obtain
\begin{align*}
R_{21}+R_{31}+R_{23}\leq n_2.
\end{align*}
The other bounds can be obtained in a similar way, and this concludes the proof.
\end{proof}

Notice that the bounds in Lemmas \ref{FromRelay} and \ref{GUB} constrain the sum of three components of $\vec{R}$ to be lower than a specific value $n_j$, contrary to the cut-set bounds that constrain the sum of two components of $\vec{R}$. This makes these bounds tighter than the cut-set bounds in general as we shall see next. Now, we combine Lemmas \ref{FromRelay} and \ref{GUB} to obtain the following statement.
\begin{theorem}
The achievable rates in the DYC are upper bounded by
\begin{align}
\label{TRB1}
R_{12}+R_{32}+R_{13}&\leq n_2\\
\label{TRB2}
R_{12}+R_{32}+R_{31}&\leq n_1\\
\label{TRB3}
R_{21}+R_{31}+R_{32}&\leq n_2\\
\label{TRB4}
R_{21}+R_{31}+R_{23}&\leq n_2\\
\label{TRB5}
R_{13}+R_{23}+R_{12}&\leq n_2\\
\label{TRB6}
R_{13}+R_{23}+R_{21}&\leq n_1.
\end{align}
\end{theorem}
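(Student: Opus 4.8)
The plan is to observe that Lemmas \ref{FromRelay} and \ref{GUB} bound the \emph{same} linear combination $R_{kj}+R_{lj}+R_{kl}$, but with two different right-hand sides: $\max\{n_j,n_l\}$ in \eqref{BFR} and $\max\{n_k,n_l\}$ in \eqref{BTR}. Hence, for every choice of distinct indices $j,k,l\in\{1,2,3\}$, I would simply take the tighter of the two and write
\begin{align*}
R_{kj}+R_{lj}+R_{kl}\leq\min\{\max\{n_j,n_l\},\max\{n_k,n_l\}\}.
\end{align*}
Each of the six bounds \eqref{TRB1}--\eqref{TRB6} then follows by selecting an ordered triple $(j,k,l)$ and evaluating this min-max under the ordering $n_1\geq n_2\geq n_3$ from \eqref{D-Ordering}.

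Next I would argue that the enumeration is exhaustive. Because the term $R_{kl}$ fixes the roles of $k$ and $l$, the six ordered triples $(j,k,l)$ produce six distinct left-hand sides, and these match precisely the six rate combinations in \eqref{TRB1}--\eqref{TRB6}. The correspondence I have in mind is: $(j,k,l)=(2,1,3)$ gives \eqref{TRB1}, $(2,3,1)$ gives \eqref{TRB2}, $(1,3,2)$ gives \eqref{TRB3}, $(1,2,3)$ gives \eqref{TRB4}, $(3,1,2)$ gives \eqref{TRB5}, and $(3,2,1)$ gives \eqref{TRB6}.

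Finally I would evaluate the min-max in each case. For example, for $(1,2,3)$ it reads $\min\{\max\{n_1,n_3\},\max\{n_2,n_3\}\}=\min\{n_1,n_2\}=n_2$, giving \eqref{TRB4}; the triples $(2,3,1)$ and $(3,2,1)$ both give $\min\{n_1,n_1\}=n_1$, yielding \eqref{TRB2} and \eqref{TRB6}, while the remaining three cases collapse to $n_2$. Every evaluation uses only $n_1\geq n_2\geq n_3$.

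Since no new information-theoretic step is required beyond the two lemmas, there is no genuine obstacle; the only care needed is bookkeeping—correctly pairing each ordered triple with its rate combination and keeping the two lemma right-hand sides distinct. The payoff is that taking the sharper of the two lemmas turns half of the cases into the stronger constraint $\leq n_2$ instead of the looser $\leq n_1$, which is exactly what makes these bounds more binding than the cut-set bounds.
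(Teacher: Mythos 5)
Your proposal is correct and takes essentially the same route as the paper, whose proof consists precisely of combining Lemmas \ref{FromRelay} and \ref{GUB} into $R_{kj}+R_{lj}+R_{kl}\leq\min\{\max\{n_j,n_l\},\max\{n_k,n_l\}\}$ and evaluating over all distinct triples $(j,k,l)$ under the ordering \eqref{D-Ordering}. Your triple-to-inequality correspondence and all six min--max evaluations check out exactly.
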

\begin{proof}
By combining the genie-aided bounds in \eqref{BFR} and \eqref{BTR} and evaluating for all distinct $j,k,l\in\{1,2,3\}$ we obtain the statement of the theorem.
\end{proof}

Let us now evaluate the individual rate bounds \eqref{SUB}, and the cut-set bounds \eqref{CSB1} and \eqref{CSB2} using \eqref{Ordering}. The individual rate bounds become
\begin{align*}
\begin{array}{ccc}
R_{12}\leq n_2, & R_{13}\leq n_3, & R_{23}\leq n_3,\\
R_{21}\leq n_2, & R_{31}\leq n_3, & R_{32}\leq n_3.
\end{array}
\end{align*}
The cut-set bounds yield
\begin{align}
\label{CS1}
R_{12}+R_{13}\leq n_2, & \quad  R_{21}+R_{31}\leq n_2,\\
\label{CS2}
R_{21}+R_{23}\leq n_2, & \quad  R_{32}+R_{12}\leq n_2,\\
\label{CS3}
R_{31}+R_{32}\leq n_3, & \quad  R_{13}+R_{23}\leq n_3.
\end{align}
Notice that the individual rate bounds are all redundant given the cut-set bounds. For example, $R_{12}\leq n_2$ is redundant given $R_{12}+R_{13}\leq n_2$ since all rates are positive. Moreover, the cut-set bounds in \eqref{CS1} are redundant given the genie-aided bounds \eqref{TRB1} and \eqref{TRB3}. Similarly, the cut-set bounds in \eqref{CS2} are redundant given the genie-aided bounds \eqref{TRB4} and \eqref{TRB1}. Only cut-set bounds in \eqref{CS3} remain useful. 

As a result, by defining $\overline{\mathcal{C}}_d$ to be the region in $\mathbb{R}_+^6$ satisfying the genie-aided bounds \eqref{TRB1}-\eqref{TRB6} and the cut-set bounds \eqref{CS3}, that is, 
\begin{align}
\label{OBE}
\overline{\mathcal{C}}_d\triangleq\left\{
\begin{array}{rl}
\vec{R}\in\mathbb{R}_+^6:&R_{31}+R_{32}\leq n_3\\
&R_{13}+R_{23}\leq n_3\\
&R_{12}+R_{32}+R_{13}\leq n_2\\
&R_{12}+R_{32}+R_{31}\leq n_1\\
&R_{21}+R_{31}+R_{32}\leq n_2\\
&R_{21}+R_{31}+R_{23}\leq n_2\\
&R_{13}+R_{23}+R_{12}\leq n_2\\
&R_{13}+R_{23}+R_{21}\leq n_1
\end{array}
\right\},
\end{align}
we obtain the following outer bound on $\mathcal{C}_d$.
\begin{theorem}
\label{OB}
The capacity region $\mathcal{C}_d$ of the DYC is outer bounded by $\overline{\mathcal{C}}_d$.
\end{theorem}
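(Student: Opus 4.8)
The plan is to recognize that Theorem \ref{OB} is a bookkeeping consequence of the bounds already in hand, so the task is to collect every necessary condition for achievability and then discard the redundant ones. First I would note that any achievable tuple $\vec{R}\in\mathcal{C}_d$ must simultaneously satisfy three families of valid upper bounds: the single-rate bounds \eqref{SUB}, the cut-set bounds \eqref{CSB1}--\eqref{CSB2} (which, after invoking \eqref{D-Ordering}, reduce to \eqref{CS1}--\eqref{CS3}), and the genie-aided bounds \eqref{TRB1}--\eqref{TRB6} assembled from Lemmas \ref{FromRelay} and \ref{GUB}. Hence $\mathcal{C}_d$ is contained in the polytope carved out by the intersection of all of these half-spaces together with the non-negativity constraint $\vec{R}\in\mathbb{R}_+^6$.

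Next I would argue that this intersection coincides with $\overline{\mathcal{C}}_d$ as defined in \eqref{OBE}, i.e. that every listed bound not appearing in \eqref{OBE} is implied by those that do appear, using only $\vec{R}\in\mathbb{R}_+^6$. Concretely I would verify each redundancy by dropping a single non-negative rate from a genie-aided bound: \eqref{CS1} follows from \eqref{TRB1} and \eqref{TRB3} (discard $R_{32}\ge 0$ in each), and \eqref{CS2} follows from \eqref{TRB4} and \eqref{TRB1} (discard $R_{31}\ge 0$ and $R_{13}\ge 0$, respectively). Each single-rate bound in \eqref{SUB} is in turn dominated by one of the cut-set bounds \eqref{CS1}--\eqref{CS3}; for instance $R_{12}\le n_2$ follows from $R_{12}+R_{13}\le n_2$. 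Since implication of half-spaces is transitive, the single-rate bounds and the bounds \eqref{CS1}, \eqref{CS2} are all consequences of \eqref{TRB1}--\eqref{TRB6} and \eqref{CS3}, leaving only \eqref{CS3} as an independent constraint from the cut-set family.

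Combining the two steps yields $\mathcal{C}_d\subseteq\overline{\mathcal{C}}_d$: every achievable tuple satisfies \eqref{TRB1}--\eqref{TRB6} and \eqref{CS3}, which are precisely the defining inequalities of $\overline{\mathcal{C}}_d$. I expect the only care required to lie in the redundancy check itself, namely matching each omitted half-space to the retained inequality that dominates it and confirming that each implication uses nothing beyond non-negativity of the rates, rather than in any analytic estimate, since all the information-theoretic content was already expended in proving Lemmas \ref{FromRelay} and \ref{GUB}.
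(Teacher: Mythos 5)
Your proposal is correct and takes essentially the same route as the paper: the containment $\mathcal{C}_d\subseteq\overline{\mathcal{C}}_d$ is read off from the validity of the genie-aided bounds \eqref{TRB1}--\eqref{TRB6} (Lemmas \ref{FromRelay} and \ref{GUB}) together with the cut-set bounds \eqref{CS3}, and your redundancy pairings --- \eqref{CS1} implied by \eqref{TRB1} and \eqref{TRB3}, \eqref{CS2} implied by \eqref{TRB4} and \eqref{TRB1}, and the single-rate bounds \eqref{SUB} dominated by the cut-set bounds, all using only non-negativity of the rates --- are precisely the ones given in the discussion preceding the theorem. The only cosmetic difference is that you frame the redundancy check as establishing that the full intersection of half-spaces equals $\overline{\mathcal{C}}_d$, which is slightly more than the stated containment requires, but it is correct and matches the paper's intent.
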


In the next section, we show that this outer bound is achievable, and hence, we characterize the capacity region $\mathcal{C}_d$ of the DYC.

\section{A Capacity Achieving Scheme for the DYC}
\label{Sec:DYCAchievability}
We start by showing that any integer rate tuple in $\overline{\mathcal{C}}_d$ is achievable. That is, every tuple $\vec{R}\in\mathbb{N}^6\cap\overline{\mathcal{C}}_d$ is achievable. Consider any such tuple $\vec{R}$. Since $\vec{R}\in\overline{\mathcal{C}}_d$, then it satisfies the bounds in Theorem \ref{OB}. Now, we have to show that we can use the signal levels at the relay wisely to achieve this rate tuple. Our scheme uses three different strategies to cover three different modes of information flow. These modes are as follows:
\begin{itemize}
\item[$\sb$)] \textbf{Bi-directional:} There exist users that want to establish bi-directional communication. That is, $R_{jk}$ and $R_{kj}$ are both non-zero for some $j,k\in\{1,2,3\}$, $j\neq k$. 
\item[$\sc$)] \textbf{Cyclic:} Users want to establish cyclic communication. That is, $R_{jk}$, $R_{kl}$, and $R_{lj}$ are non-zero while $R_{kj}$, $R_{lk}$, and $R_{jl}$ are all zero for some distinct $j,k,l\in\{1,2,3\}$.
\item[$\su$)]\textbf{Uni-directional:} Neither case $\sb$) nor $\sc$) holds. That is, at least three components of $\vec{R}$ are zero, and the non-zero components are uni-directional (if $R_{jk}\neq0$ then $R_{kj}=0$) and acyclic (if $R_{jk},R_{kl}\neq0$ then $R_{lj}=0$).
\end{itemize}

We used $\sb$), $\sc$), and $\su$) to refer to the $\sb$i-directional, $\sc$yclic, and $\su$ni-directional modes of information flow, respectively. These three modes are taken care of in the construction of the communication strategy in the given order. That is, we design a strategy for the bi-directional mode first, then a strategy for the cyclic mode, and finally a strategy for the remaining uni-directional mode. A brief description of the scheme is given in the following toy example, more details to follow up next.

\subsection{DYC: A Toy Example}
Consider a DYC$(5,4,3)$ and choose $\vec{R}=(0,2,2,1,0,2)$. By inserting the values of $n_1$, $n_2$ and $n_3$ in the outer bound $\overline{\mathcal{C}}_d$, it is easy to see that the rate tuple $\vec{R}\in\overline{\mathcal{C}}_d$. Let us see how our scheme works for achieving this rate tuple.

We start by writing $\vec{R}$ as $$\vec{R}=\vec{R}^\sb+\vec{R}^\sc+\vec{R}^\su,$$
where $\vec{R}^\sb=(0,0,0,1,0,1)$, $\vec{R}^\sc=(0,1,1,0,0,1)$, and $\vec{R}^\su=(0,1,1,0,0,0)$. Notice that $\vec{R}^\sb$ resembles bi-directional information flow between $U_2$ and $U_3$ with a rate of 1 bit per channel use in each direction. To achieve this rate tuple, let $U_2$ send one bit $b_{23}$ on the lowest level in the uplink, i.e., relay level 1, and let $U_3$ also send 1 bit $b_{32}$ on relay level 1. Thus, the relay receives $b_{23}\oplus b_{32}$ on level 1 as shown in Figure \ref{DYC543U}.

\begin{figure}[t]
\centering
\includegraphics[width=.45\columnwidth]{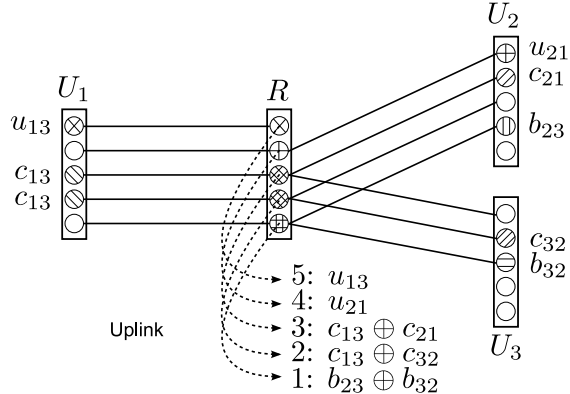}
\caption{A DYC$(5,4,3)$ with an illustration of our transmit strategy in the uplink.}
\label{DYC543U}
\end{figure}

The relay then forwards $b_{23}\oplus b_{32}$ on the highest level in the downlink, i.e.,  level 1 as shown in Figure \ref{DYC543D} (the enumeration of levels at the relay is as given in the figure). Upon receiving $b_{23}\oplus b_{32}$, $U_2$ and $U_3$ are able to extract their desired bits, $b_{32}$ and $b_{23}$, respectively. This achieves $\vec{R}^\sb$. Note that the used levels are not available anymore for further communication. This allows us to remove the used levels, to obtain a DYC$(4,3,2)$, over which we need to achieve $\vec{R}^\sc$ and $\vec{R}^\su$. 

The rate tuple $\vec{R}^\sc$ represents the rates of the cyclic information flow, where $U_1$ wants to send 1 bit $c_{13}$ to $U_3$, $U_3$ wants to send 1 bit $c_{32}$ to $U_2$, and $U_2$ wants to send 1 bit $c_{21}$ to $U_1$, thus forming the cycle $1\to3\to2\to1$. To achieve $\vec{R}^\sc$, we use a cyclic strategy. $U_1$ sends $c_{13}$ on both relay levels 2 and 3, $U_2$ sends $c_{21}$ on relay level 3, and $U_3$ sends $c_{32}$ on relay level 2. The relay thus receives $c_{13}\oplus c_{32}$ and $c_{13}\oplus c_{21}$ on levels 2 and 3, respectively, as shown in Figure \ref{DYC543U}. The relay then forwards these sums on levels 3 and 4 as shown in Figure \ref{DYC543D}. Upon receiving $c_{13}\oplus c_{21}$, $U_1$ can extract its desired bit $c_{21}$, and upon receiving $c_{13}\oplus c_{32}$, $U_3$ can extract its desired bit $c_{13}$. $U_2$ extracts $c_{13}$ from $c_{13}\oplus c_{21}$, and then uses it to extract its desired bit $c_{32}$ from $c_{13}\oplus c_{32}$. Thus, this achieves $\vec{R}^\sc$.

\begin{figure*}[t]
\centering
\includegraphics[width=.7\columnwidth]{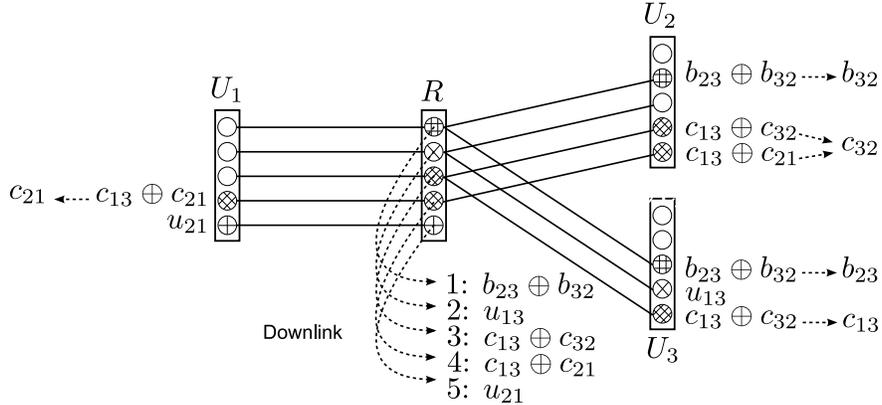}
\caption{A DYC$(5,4,3)$ with an illustration of our transmit strategy in the downlink.}
\label{DYC543D}
\end{figure*}

Finally, it remains to achieve $\vec{R}^\su$. To do this, $U_1$ and $U_2$ send one bit each, $u_{13}$ and $u_{21}$, to levels 5 and 4 at the relay, respectively, as shown in Figure \ref{DYC543U}. The relay forwards these bits on levels 2 and 5, respectively, and users $U_1$ and $U_3$ are then able to recover both desired bits. This achieves $\vec{R}^\su$. All three tuples, $\vec{R}^\sb$, $\vec{R}^\sc$, and $\vec{R}^\su$, are achieved, which consequently achieves the rate tuple $\vec{R}$. In conclusion, the users send
\begin{align*}
\vec{X}_1=\left[\begin{array}{c}u_{13}\\0\\c_{13}\\c_{13}\\0\end{array}\right],\quad \vec{X}_2=\left[\begin{array}{c}u_{21}\\c_{21}\\0\\b_{23}\\0\end{array}\right],\quad
\vec{X}_3=\left[\begin{array}{c}0\\c_{32}\\b_{32}\\0\\0\end{array}\right],
\end{align*}
and the relay shuffles its received signal $\vec{Y}_r$ as follows
\begin{align*}
\vec{X}_r=\left[\begin{array}{ccccc}
0&0&0&0&1\\
1&0&0&0&0\\
0&0&0&1&0\\
0&0&1&0&0\\
0&1&0&0&0
\end{array}\right]\vec{Y}_r.
\end{align*}

\begin{remark}
Notice that all the levels at the relay have been used to achieve $\vec{R}$. Here, we can see that importance of the cyclic strategy. The cyclic strategy uses 2 levels at the relay for communicating 3 bits, i.e., it sends 3/2 bits per level. If we want to achieve $\vec{R}^\sc$ using the uni-directional strategy which sends 1 bit per level at the relay, instead of the cyclic one, then we would consume 3 levels at the relay instead of 2. This leaves us with 1 more level, which is not sufficient to achieve $\vec{R}^\su$.
\end{remark}

\begin{remark}
The problem of cyclic communication bears a resemblance to an index coding problem given in \cite{BarYossefBirkJayramKol}. Consider the cycle $1\to3\to2\to1$. If  the relay knows the bits $c_{13}$, $c_{32}$, and $c_{21}$, then the downlink is similar to \cite[Example 1]{BarYossefBirkJayramKol} where the side information graph is a directed cycle of length 3. Namely, nodes $(U_1,U_2,U_3)$ requesting $(c_{21},c_{32},c_{13})$, know $(c_{13},c_{21},c_{32})$, respectively, in the given order. The optimal linear code in this case is of length 2. The relay sends $(c_{13}\oplus c_{32},c_{13}\oplus c_{21})$ which enables all receivers to recover their desired bits. In our case, the relay does not know the source bits. However, this index code can be constructed in our case on the fly, i.e., during the uplink. The uplink scheme is designed in such a way that the relay obtains $(c_{13}\oplus c_{32},c_{13}\oplus c_{21})$ which suffices for sending all 3 bits to their destinations.
\end{remark}

Next, we use these strategies to show the achievability of any integer valued $\vec{R}$. Briefly, the proof proceeds as follows. For any such $\vec{R}$, the proposed scheme starts with a bi-directional communication strategy if case $\sb$) holds. The bi-directional strategy sends two bits per relay level. That is, one signal level at the relay is consumed by two bits of bi-directional streams. After this step, some rates are already achieved and the residual rate vector is called $\vec{R}'$. We also have a reduced DYC obtained by removing the already occupied levels. It remains to achieve $\vec{R}'$ which has at least three zero components over this reduced DYC. Now, we use the cyclic strategy if case $\sc$) holds, which sends 3/2 bits per relay level. After this step, the residual rate vector, denoted $\vec{R}''$ belongs to case $\su$), and we use the uni-directional strategy to achieve it, which sends 1 bit per relay level. These strategies are explained in more detail in the next subsections.

Notice the used enumeration of levels in Figures \ref{DYC543U} and \ref{DYC543D}. We will follow this enumeration throughout the rest of the paper. In the uplink, the lowest level is level 1 and the highest is level $q=n_1$. In the downlink, the lowest level is $q=n_1$ and the highest is level 1 as shown in Figure \ref{RelayLevelsU} and \ref{RelayLevelsD}. This enumeration is used for convenience. Using this enumeration, levels $\{1\dots,n_3\}$ are the levels shared by all three users in both the uplink and downlink. Similarly, levels $\{n_3+1,\dots,n_2\}$ are shared by users 1 and 2, and all remaining levels are used exclusively by user 1. The levels at the relay will be represented as a line segment with three parts representing the sets $\{1,\dots,n_3\}$, $\{n_3+1,\dots, n_2\}$, and $\{n_2+1,\dots,n_1\}$ as shown in Figure \ref{RelayLevelsU} and \ref{RelayLevelsD}.

\begin{figure}[t]
\centering
\subfigure[Relay levels in the uplink.]{
\includegraphics[width=0.22\columnwidth]{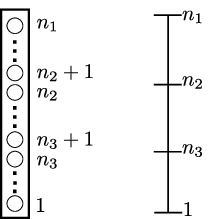}
\label{RelayLevelsU}
}
\hspace{2cm}
\subfigure[Relay levels in the downlink.]{
\includegraphics[width=0.22\columnwidth]{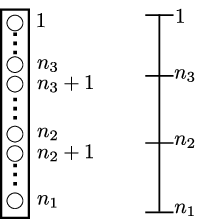}
\label{RelayLevelsD}
}
\caption{Enumeration of levels at the relay. Levels 1 to $n_3$ are seen by all three users, levels $n_3+1$ to $n_2$ are seen by users 1 and 2, while the remaining levels are seen only by user 1.}
\end{figure}

\subsection{Bi-directional information flow over the DYC}
\label{BDC}
We start by assigning levels for bi-directional information flows if case $\sb$) holds, over a DYC$(n_1,n_2,n_3)$. Let\footnote{$R_{12}^\sb$, $R_{13}^\sb$, or $R_{23}^\sb$ can have zero value. If all are zero, then the bi-directional information flow mode does not exist and we start with case $\sc$) instead.}
\begin{align}
\label{R12b}
R_{12}^\sb&=\min\{R_{12},R_{21}\},\\
\label{R13b}
R_{13}^\sb&=\min\{R_{13},R_{31}\},\\
\label{R23b}
R_{23}^\sb&=\min\{R_{23},R_{32}\}.
\end{align}
$U_1$ and $U_2$ use levels $\{n_2-R_{12}^\sb+1,\dots,n_2\}$ in a manner similar to the deterministic bi-directional relay channel \cite{AvestimehrSezginTse} to exchange $R_{12}^\sb$ bits. That is, each of users 1 and 2 sends a binary vector, say $\vec{b}_{12}$ and $\vec{b}_{21}$ where $$\vec{b}_{12},\vec{b}_{21}\in\mathbb{F}_2^{R_{12}^\sb},$$ on levels $\{n_2-R_{12}^\sb+1,\dots,n_2\}$, The relay obtains the superposition $\vec{b}_{12}\oplus\vec{b}_{21}$ and sends it back to users 1 and 2 on the same levels. Users 1 and 2 in their turn calculate their desired information from the $\vec{b}_{12}\oplus\vec{b}_{21}$ using their transmit vector as side information. Similarly, users 1 and 3 use levels $\{1,\dots,R_{13}^\sb\}$, given $R_{13}^\sb\leq n_3$ so that user 3 can send and receive all $R_{13}^\sb$ bits. Users 2 and 3 use levels $\{R_{13}^\sb+1,\dots,R_{13}^\sb+R_{23}^\sb\}$ where $R_{13}^\sb+R_{23}^\sb\leq n_3$ is required for the same reason (which is stronger than the former $R_{13}^\sb\leq n_3$). This is shown graphically in Figure \ref{RelayLevels1}.

This strategy works if we have enough levels at the relay for all $R_{12}^\sb+R_{13}^\sb+R_{23}^\sb$ bi-directional streams (for delivering twice the number of bits). Thus it is required that $R_{12}^\sb+R_{13}^\sb+R_{23}^\sb\leq n_2$ in addition to $R_{13}^\sb+R_{23}^\sb\leq n_3$. But these inequalities hold as long as $\vec{R}\in\overline{\mathcal{C}}_d$ since
\begin{align}
R_{13}^\sb+R_{23}^\sb&\leq R_{13}+R_{23}\\
&\stackrel{\eqref{CS3}}{\leq} n_3
\end{align}
and
\begin{align}
R_{12}^\sb+R_{13}^\sb+R_{23}^\sb&\leq R_{12}+R_{13}+R_{23}\\
&\stackrel{\eqref{TRB5}}{\leq} n_2.
\end{align}
It follows that the levels at the relay are sufficient for this strategy to work. Now having communicated $R_{12}^\sb+R_{13}^\sb+R_{23}^\sb$ bi-directional streams ($2(R_{12}^\sb+R_{13}^\sb+R_{23}^\sb)$ bits), the remaining rate tuple that needs to be achieved is 
\begin{align}
\label{Rp}
\vec{R}'&\triangleq(R_{12}',R_{13}',R_{21}',R_{23}',R_{31}',R_{32}'),
\end{align}
where
\begin{align}
\label{R12'}
R_{12}'&=R_{12}-R_{12}^\sb,\quad R_{13}'=R_{13}-R_{13}^\sb\\
\label{R21'}
R_{21}'&=R_{21}-R_{12}^\sb,\quad R_{23}'=R_{23}-R_{23}^\sb\\
\label{R31'}
R_{31}'&=R_{31}-R_{13}^\sb,\quad R_{32}'=R_{32}-R_{23}^\sb
\end{align}
and at least three of the components of $\vec{R}'$ are zero (cf. \eqref{R12b}, \eqref{R13b}, \eqref{R23b}). Namely, one of $R_{jk}'$ and $R_{kj}'$ must be zero. This rate tuple $\vec{R}'$ must be achieved over DYC$(n_1',n_2',n_3')$ where the already occupied levels are out of order, i.e.,
\begin{align}
\label{n1p}
n_1'=n_1-R_{12}^\sb-R_{13}^\sb-R_{23}^\sb,\\
\label{n2p}
n_2'=n_2-R_{12}^\sb-R_{13}^\sb-R_{23}^\sb.
\end{align}
If $n_2-n_3\geq R_{12}^\sb$, then the bi-directional communication between users 1 and 2 does not use levels in $\{1,\dots,n_3\}$ in which case 
\begin{align}
n_3'=n_3-R_{13}^\sb-R_{23}^\sb. 
\end{align}
Otherwise, $R_{12}^\sb-(n_2-n_3)$ levels in $\{1,\dots,n_3\}$ are used for this communication and in this case 
\begin{align}
n_3'&=n_3-R_{13}^\sb-R_{23}^\sb-(R_{12}^\sb-(n_2-n_3))\\
&=n_2'. 
\end{align}
Therefore we can write
\begin{align}
\label{n3p}
n_3'=\min\{n_3-R_{13}^\sb-R_{23}^\sb,n_2'\}.
\end{align}
The non-zero components of $\vec{R}'$ can represent cyclic information flow as in case $\sc$) or uni-directional information flow as in case $\su$) described in Section \ref{Sec:DYCAchievability}. Next we describe the cyclic case $\sc$).

\begin{figure}[t]
\centering
\includegraphics[height=0.5\columnwidth]{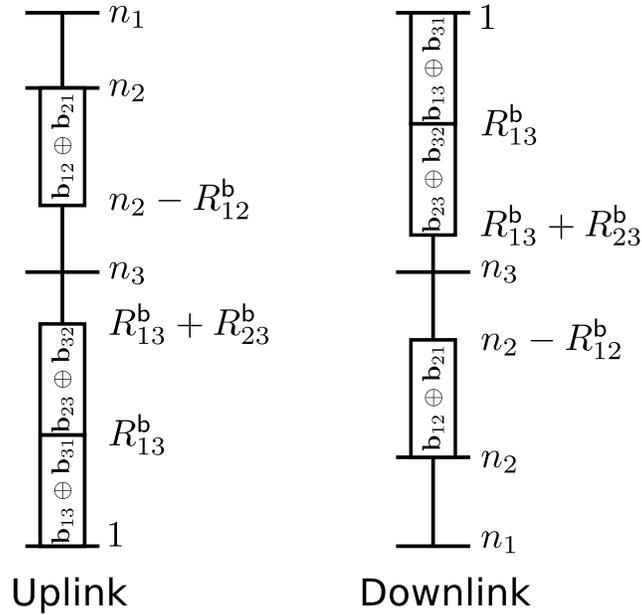}
\caption{Levels used for bi-directional communication. Note that the lower two blocks in the uplink, i.e., $2\leftrightarrow 3$ and $1\leftrightarrow 3$ can be swapped. Similarly in the downlink.}
\label{RelayLevels1}
\end{figure}

\subsection{Cyclic information flow over the DYC}
\label{CC}
After assigning levels to all bits of bi-directional information flow, we consider cyclic information flow. We need to achieve a rate tuple $\vec{R}'$ which has three zero components. If case $\sc$) holds, then users want to communicate in a cyclic manner over a DYC$(n_1',n_2',n_3')$. There are two possible cycles, either $1\to2\to3\to1$ or $1\to3\to2\to1$ (Figure \ref{Cycles}). Let\footnote{Either $R_{123}^\sc$ or $R_{132}^\sc$ must have zero value. If both are zero, then this strategy is skipped and case $\su$) is considered instead.}
\begin{align}
\label{D}
R_{123}^\sc&=\min\{R_{12}',R_{23}',R_{31}'\},\\
\label{E}
R_{132}^\sc&=\min\{R_{13}',R_{32}',R_{21}'\}.
\end{align}
Notice that
\begin{align}
&R_{123}^\sc>0\nonumber\\
\label{R1}
&\Rightarrow R_{132}^\sc=0,\ R_{12}^\sb=R_{21},\ R_{13}^\sb=R_{13},\ R_{23}^\sb=R_{32},
\end{align}
and
\begin{align}
&R_{132}^\sc>0\nonumber\\
\label{R2}
&\Rightarrow R_{123}^\sc=0,\ R_{12}^\sb=R_{12},\ R_{13}^\sb=R_{31},\ R_{23}^\sb=R_{23}.
\end{align}
which follow from \eqref{R12b}-\eqref{R23b}, \eqref{R12'}-\eqref{R31'}, \eqref{D}, and \eqref{E}. Namely, if $R_{123}^\sc>0$, then by \eqref{D}, none of $R_{12}'$, $R_{23}'$, or $R_{31}'$ is zero. Thus, by \eqref{R12'}, $R_{12}\geq R_{12}^\sb$, which by \eqref{R12b} means that $R_{12}^\sb=R_{21}$. Similarly $R_{13}^\sb=R_{13}$ and $R_{23}^\sb=R_{32}$. Then, using \eqref{E} we have $R_{132}^\sc=0$.

\begin{figure*}[t]
\centering
\subfigure[$1\to2\to3\to1$: In this case $R_{13}=R_{21}=R_{32}=0$.]{
\includegraphics[width=0.4\columnwidth]{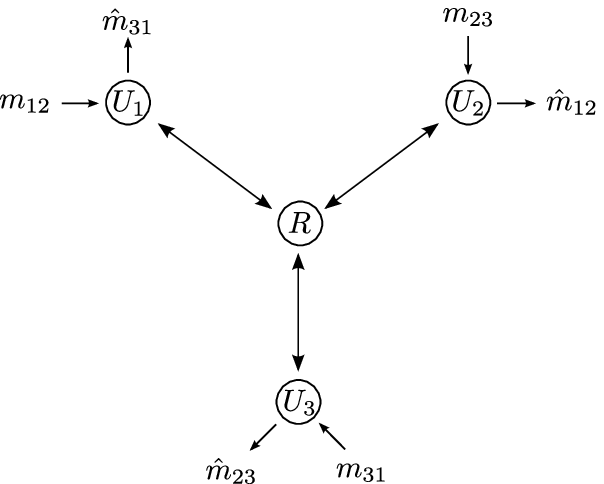}
}
\hspace{0.06\columnwidth}
\subfigure[$1\to3\to2\to1$: In this case $R_{12}=R_{31}=R_{23}=0$.]{
\includegraphics[width=0.4\columnwidth]{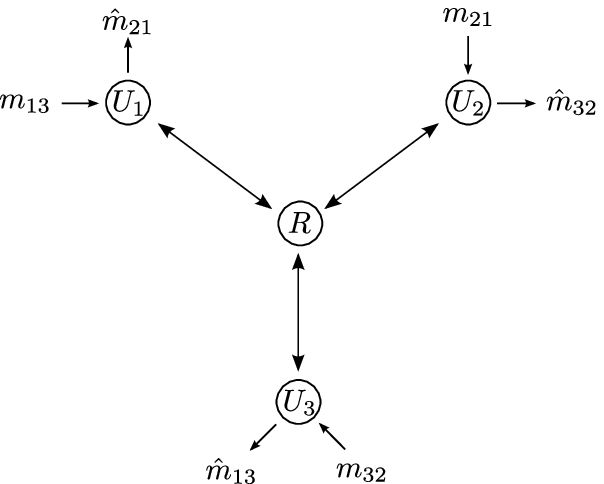}
}
\caption{Cyclic communication.}
\label{Cycles}
\end{figure*}

Let us consider the first case \eqref{R1}. Thus $R_{123}^\sc\neq 0$ and $R_{132}^\sc=0$. Let the transmit binary vectors be 
$$\vec{c}_{12},\vec{c}_{23},\vec{c}_{31}\in\mathbb{F}_2^{R_{123}^\sc}.$$

Note that we force all users to use the same rate $R_{123}^\sc$. $U_1$ and $U_2$ send their bits $\vec{c}_{12}$ and $\vec{c}_{23}$ on levels $$\{n_2'-R_{123}^\sc+1,\dots,n_2'\},$$ at the relay. $U_2$ repeats $\vec{c}_{23}$ on levels $$\{1,\dots,R_{123}^\sc\}$$ at the relay, which are also used by $U_3$ to send $\vec{c}_{31}$ (assuming that the levels are sufficient, i.e., the sets $\{n_2'-R_{123}^\sc+1,\dots,n_2'\}$ and $\{1,\dots,R_{123}^\sc\}$ do not intersect, or equivalently $2R_{123}^\sc\leq n_2'$ which we prove next).

The relay receives $\vec{c}_{12}\oplus\vec{c}_{23}$ and $\vec{c}_{23}\oplus\vec{c}_{31}$ and sends them back on the same levels at the relay ($\{n_2'-R_{123}^\sc+1,\dots,n_2'\}$ and $\{1,\dots,R_{123}^\sc\}$) in the downlink. $U_1$ and $U_2$ receive $\vec{c}_{12}\oplus\vec{c}_{23}$ and $\vec{c}_{23}\oplus\vec{c}_{31}$ since all bits are sent on levels below $n_2'$. Then, knowing $\vec{c}_{12}$, $U_1$ calculates $\vec{c}_{23}$ from $\vec{c}_{12}\oplus\vec{c}_{23}$ and uses $\vec{c}_{23}$ to obtain $\vec{c}_{31}$ from $\vec{c}_{23}\oplus\vec{c}_{31}$. $U_2$ calculates $\vec{c}_{12}$ from $\vec{c}_{12}\oplus\vec{c}_{23}$ using its knowledge of $\vec{c}_{23}$. $U_3$ receives $\vec{c}_{23}\oplus\vec{c}_{31}$ as long as $R_{123}^\sc\leq n_3'$. Assuming $R_{123}^\sc\leq n_3'$, user 3 extracts $\vec{c}_{23}$ using its knowledge of $\vec{c}_{31}$. 

Thus, as long as $R_{123}^\sc\leq n_3'$ and $2R_{123}^\sc\leq n_2'$, then $2R_{123}^\sc$ levels are sufficient for communicating all $3R_{123}^\sc$ bits of cyclic communication, for an average of 3/2 bits per level. But these inequalities hold as long as $\vec{R}\in\overline{\mathcal{C}}_d$ since
\begin{align}
R_{123}^\sc&\stackrel{\footnotesize{\eqref{D}}}{\leq} R_{31}'\\
&\stackrel{\footnotesize{\eqref{Rp}}}{=}R_{31}-R_{13}^\sb\\
&\stackrel{\footnotesize{\eqref{CS3}}}{\leq} n_3-R_{32}-R_{13}^\sb\\
\label{R123cB1}
&\stackrel{\footnotesize{\eqref{R1}}}{=}n_3-R_{23}^\sb-R_{13}^\sb
\end{align}
Moreover,
\begin{align}
2R_{123}^\sc&\stackrel{\footnotesize{\eqref{D}}}{\leq} R_{31}'+R_{23}'\\
&\stackrel{\footnotesize{\eqref{Rp}}}{=}R_{31}+R_{23}-R_{13}^\sb-R_{23}^\sb\\
&\stackrel{\footnotesize{\eqref{TRB4}}}{\leq} n_2-R_{21}-R_{13}^\sb-R_{23}^\sb\\
&\stackrel{\footnotesize{\eqref{R1}}}{=}n_2-R_{12}^\sb-R_{13}^\sb-R_{23}^\sb\\
\label{R123cB2}
&\stackrel{\footnotesize{\eqref{n2p}}}{=}n_2'\\
\label{R123cB3}
&\leq2n_2'
\end{align}
Thus, from \eqref{R123cB1} and \eqref{R123cB3} we have $R_{123}^\sc\leq\min\{n_3-R_{13}^\sb-R_{23}^\sb,n_2'\}=n_3'$ and from \eqref{R123cB2} we have $2R_{123}^\sc\leq n_2'$, and therefore, there are enough levels in the DYC$(n_1',n_2',n_3')$ for serving all bits of cyclic information flow. 

\begin{remark}
Here, we have chosen to repeat $U_2$'s transmission. Similarly, one can repeat $U_1$'s or $U_3$'s transmission instead. However, notice that repeating $U_2$'s transmission allows us to allocate one chunk of bits to levels in $\{1,\dots,n_3'\}$, and the other to levels in $\{n_3'+1,\dots,n_2'\}$ (cf. Figure \ref{RelayLevels2}), while the other options force us to allocate both chunks to $\{1,\dots,n_3'\}$, either in the uplink or in the downlink. Thus, repeating $U_2$'s transmission allows a more efficient use of the levels at the relay.
\end{remark}

For the second possibility, i.e. $1\to3\to2\to1$, a similar strategy can be used. All users send with the same rate $R_{132}^\sc$. $U_1$ repeats a bit on two levels such that, one of the levels is also used by $U_2$ at the relay, and the other is used by $U_3$. That is, one level must be in $\{1,\dots,n_2'\}$ and the other in $\{1,\dots,n_3'\}$ at the relay. Using \eqref{TRB3}, \eqref{TRB5}, and  \eqref{CS3} we can show that the levels at the relay $(n_1',n_2',n_3')$ are sufficient for this communication (keeping \eqref{R2} in mind). The assignment of the levels at the relay in this case is shown in Figures \ref{RelayLevels2} and \ref{RelayLevels3}.

\begin{figure*}[t]
\centering
\subfigure[Cyclic communication: $1\to2\to3\to1$.]{
\includegraphics[width=0.35\columnwidth]{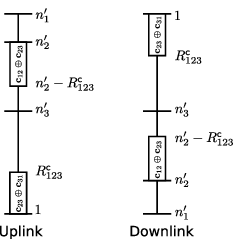}
\label{RelayLevels2}
}
\hspace{2cm}
\subfigure[Cyclic communication: $1\to3\to2\to1$.]{
\includegraphics[width=0.35\columnwidth]{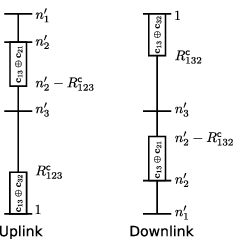}
\label{RelayLevels3}
}
\caption{Assignment of levels at the relay.}
\end{figure*}

After this stage, the rate tuple that still needs to be achieved is
\begin{align}
\label{Rpp}
\vec{R}''&\triangleq(R_{12}'',R_{13}'',R_{21}'',R_{23}'',R_{31}'',R_{32}'')
\end{align}
where
\begin{align}
R_{12}''&=R_{12}'-R_{123}^\sc,\quad R_{13}''=R_{13}'-R_{132}^\sc,\\
R_{21}''&=R_{21}'-R_{132}^\sc,\quad R_{23}''=R_{23}'-R_{123}^\sc,\\
R_{31}''&=R_{31}'-R_{123}^\sc,\quad R_{32}''=R_{32}'-R_{132}^\sc,
\end{align}
which must be achieved over a DYC$(n_1'',n_2'',n_3'')$ where 
\begin{align}
\label{n1pp}
n_1''&=n_1'-2R_{123}^\sc-2R_{132}^\sc,\\
\label{n2pp}
n_2''&=n_2'-2R_{123}^\sc-2R_{132}^\sc.
\end{align}
If $R_{123}^\sc+R_{132}^\sc\leq n_2'-n_3'$ then cyclic communication consumes only $R_{123}^\sc+R_{132}^\sc$ levels in $\{1,\dots,n_3'\}$\footnote{Recall that either $R_{123}^\sc$ or $R_{132}^\sc$, or both equal to zero. Hence $R_{123}^\sc+R_{132}^\sc$ either equals $R_{123}^\sc$ or $R_{132}^\sc$ or 0. The sum $R_{123}^\sc+R_{132}^\sc$ takes all three cases into account, i.e., the cycle $1\to2\to3\to1$, the cycle $1\to3\to2\to1$, and the case of no cyclic information flow at all.}. Thus the number of remaining levels $n_3''$ is $$n_3''=n_3'-R_{123}^\sc-R_{132}^\sc.$$ Otherwise, if $n_2'-n_3'$ is less than $R_{123}^\sc+R_{132}^\sc$, then the upper chunk of bits in the uplink in Figures \ref{RelayLevels2} and \ref{RelayLevels3} occupies levels in $\{1,\dots,n_3'\}$. The remaining $n_3''$ levels shared by all users is then given by
\begin{align}
n_3''&=n_3'-R_{123}^\sc-R_{132}^\sc-(R_{123}^\sc+R_{132}^\sc-n_2'+n_3')\\
&=n_2''.
\end{align}
Thus, we can write
\begin{align}
\label{n3pp}
n_3''=\min\{n_3'-R_{123}^\sc-R_{132}^\sc,n_2''\}.
\end{align}
Recall that either $R_{123}^\sc=R_{132}^\sc=0$, or $R_{123}^\sc=0$ and $R_{132}^\sc>0$, or $R_{123}^\sc>0$ and $R_{132}^\sc=0$. After considering cases $\sb$) and $\sc$), only case $\su$) with uni-directional information flow remains.

\subsection{Uni-directional information flow over the DYC}
\label{AUC}
Finally, we are left with a rate tuple $\vec{R}''$ to achieve over a DYC$(n_1'',n_2'',n_3'')$ where at least 3 components of $\vec{R}''$ are zero. The non-zero components of $\vec{R}''$ fall neither into cases $\sb$) nor $\sc$), i.e., neither bi-directional nor cyclic communication. We have 6 different possibilities, depending on the positions of the zero components of $\vec{R}''$. We describe one of these possibilities in details, the rest are similar and are described briefly in Appendix \ref{App:UniDirectionalCases} on page \pageref{App:UniDirectionalCases}.

Consider the scenario where $R_{21}''=R_{31}''=R_{32}''=0$. In this case, the non-zero components of $\vec{R}''$ are $R_{12}''$, $R_{13}''$, and $R_{23}''$, or a subset thereof. Let $$\vec{u}_{12}\in\mathbb{F}_2^{R_{12}''},\quad\vec{u}_{13}\in\mathbb{F}_2^{R_{13}''},\quad\text{and}\quad\vec{u}_{23}\in\mathbb{F}_2^{R_{23}''}$$ denote the binary vectors to be communicated. In the uplink, $U_1$ uses levels $\{n_1''-R_{12}''+1,\dots,n_1''\}$ to send $\vec{u}_{12}$ and levels $\{n_1''-R_{12}''-R_{13}''+1,\dots,n_1''-R_{12}''\}$ to send $\vec{u}_{13}$, and $U_2$ uses levels $\{1,\dots,R_{23}''\}$ to send $\vec{u}_{23}$ to the relay. The relay then forwards $\vec{u}_{13}$ on levels $\{1,\dots,R_{13}''\}$, $\vec{u}_{23}$ on levels $\{R_{13}''+1,\dots,R_{13}''+R_{23}''\}$, and $\vec{u}_{12}$ on levels $\{n_2''-R_{12}''+1,\dots,n_2''\}$ as shown in Figure \ref{RelayLevels4}. Note that while each two bits of bi-directional information flow consume 1 level, and each 3 bits of cyclic information flow consume 2 levels, in the uni-directional case, each bit consumes one level.

The uni-directional strategy works for communicating all $R_{12}''+R_{13}''+R_{23}''$ bits of uni-directional information flow if the following inequalities are satisfied in the uplink

\begin{figure}[t]
\centering
\includegraphics[height=0.35\columnwidth]{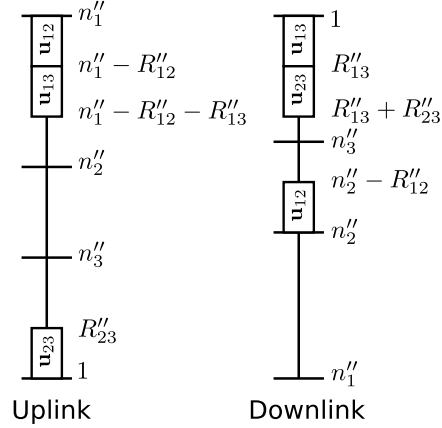}
\caption{Acyclic uni-directional communication over the DYC.}
\label{RelayLevels4}
\end{figure}

\begin{align}
\label{R23''}
R_{23}''&\leq n_2''\\
R_{12}''+R_{13}''+R_{23}''&\leq n_1''
\end{align}
and the following in the downlink
\begin{align}
R_{23}''&\leq n_3''\\
R_{23}''+R_{13}''&\leq n_3''\\
\label{R12''pR13''pR23''}
R_{12}''+R_{13}''+R_{23}''&\leq n_2''.
\end{align}
Combining \eqref{R23''}-\eqref{R12''pR13''pR23''}, we get 
\begin{align}
\label{InEq1}
R_{23}''+R_{13}''&\leq n_3''\\
\label{InEq2}
R_{12}''+R_{13}''+R_{23}''&\leq n_2''.
\end{align}

But these inequalities are satisfied as long as $\vec{R}\in\overline{\mathcal{C}}_d$. To see this, consider the first inequality \eqref{InEq1},
\begin{align}
\label{R23ppPR13pp1}
R_{23}''+R_{13}''&\stackrel{\footnotesize{\eqref{Rpp}}}{=}R_{23}'+R_{13}'-R_{123}^\sc-R_{132}^\sc\\
\label{R23ppPR13pp2}
&\stackrel{\footnotesize{\eqref{Rp}}}{=}R_{23}+R_{13}-R_{23}^\sb-R_{13}^\sb-R_{123}^\sc-R_{132}^\sc\nonumber\\
&\stackrel{\footnotesize{\eqref{CS3}}}{\leq} n_3-R_{23}^\sb-R_{13}^\sb-R_{123}^\sc-R_{132}^\sc.
\end{align}
Recall from \eqref{R1} and \eqref{R2} that either $R_{123}^\sc$ or $R_{132}^\sc$ must be zero. If $R_{132}^\sc=0$ then starting from \eqref{R23ppPR13pp2} we get
\begin{align}
R_{23}''+R_{13}''&\stackrel{\footnotesize{\eqref{R23ppPR13pp2}}}{=}R_{23}+R_{13}-R_{23}^\sb-R_{13}^\sb-R_{123}^\sc\\
&\stackrel{\footnotesize{\eqref{TRB5}}}{\leq} n_2-R_{12}-R_{23}^\sb-R_{13}^\sb-R_{123}^\sc\\
&\stackrel{}{=} n_2-R_{12}+R_{12}^\sb-R_{12}^\sb-R_{23}^\sb-R_{13}^\sb-R_{123}^\sc\nonumber\\
&\stackrel{\footnotesize{\eqref{D}}}{=} n_2-R_{12}^\sb-R_{23}^\sb-R_{13}^\sb-2R_{123}^\sc\\
&\stackrel{\footnotesize{\eqref{n2p}}}{\leq} n_2'-2R_{123}^\sc.
\end{align}
If $R_{132}^\sc>0$, then $R_{123}^\sc=0$, $R_{12}^\sb=R_{12}$, and $R_{23}^\sb=R_{23}$ by \eqref{R2} and thus
\begin{align}
R_{23}''+R_{13}''&\stackrel{\footnotesize{\eqref{R23ppPR13pp2}}}{=}R_{23}+R_{13}-R_{13}^\sb-R_{23}^\sb-R_{132}^\sc\\
&\stackrel{\footnotesize{\eqref{R2}}}{=} R_{13}-R_{13}^\sb-R_{132}^\sc\\
&\stackrel{\footnotesize{\eqref{TRB1}}}{\leq} n_2-R_{12}-R_{32}-R_{13}^\sb-R_{132}^\sc\\
&\stackrel{}{=} n_2-R_{12}^\sb-R_{32}-R_{13}^\sb-R_{132}^\sc\\
&\stackrel{}{=} n_2-R_{12}^\sb-R_{32}+R_{23}^\sb-R_{23}^\sb-R_{13}^\sb-R_{132}^\sc\nonumber\\
&\stackrel{\footnotesize{\eqref{E}}}{=} n_2-R_{12}^\sb-R_{23}^\sb-R_{13}^\sb-2R_{132}^\sc\\
&\stackrel{\footnotesize{\eqref{n2p}}}{\leq} n_2'-2R_{132}^\sc.
\end{align}
Thus, since either $R_{123}^\sc=0$ or $R_{132}^\sc=0$ we get
\begin{align}
R_{23}''+R_{13}''&\leq \min\{n_3-R_{23}^\sb-R_{13}^\sb-R_{123}^\sc-R_{132}^\sc,\nonumber\\
&\hspace{1.5cm} n_2'-2R_{123}^\sc-2R_{132}^\sc\}\\
&\stackrel{\footnotesize{\eqref{n3p}}}{=} \min\{n_3'-R_{123}^\sc-R_{132}^\sc, n_2'-2R_{123}^\sc-2R_{132}^\sc\}\nonumber\\
&\stackrel{\footnotesize{\eqref{n3pp}}}{=}n_3''.
\end{align}

Thus, \eqref{InEq1} is satisfied. Consider now the second inequality \eqref{InEq2}. From \eqref{Rp} and \eqref{Rpp} we have
\begin{align*}
R_{12}''+R_{13}''+R_{23}''&\leq R_{12}+R_{13}+R_{23}-R_{12}^\sb-R_{13}^\sb-R_{23}^\sb\nonumber\\
&\quad-2R_{123}^\sc-R_{132}^\sc.
\end{align*}
If $R_{132}^\sc=0$ then
\begin{align}
R_{12}''+R_{13}''+R_{23}''&\leq R_{12}+R_{13}+R_{23}-R_{12}^\sb-R_{13}^\sb\nonumber\\
&\quad-R_{23}^\sb-2R_{123}^\sc\nonumber\\
\label{3R12}
&\stackrel{\footnotesize{\eqref{TRB5}}}{\leq} n_2-R_{12}^\sb-R_{13}^\sb-R_{23}^\sb-2R_{123}^\sc.
\end{align}
Otherwise, if $R_{132}^\sc>0$ then $R_{123}^\sc=0$. In this case, by using $R_{23}^\sb=R_{23}$ from \eqref{R2} we get
\begin{align}
R_{12}''+R_{13}''+R_{23}''&\leq R_{12}+R_{13}-R_{12}^\sb-R_{13}^\sb-R_{132}^\sc\\
&\stackrel{\footnotesize{\eqref{TRB1}}}{\leq} n_2-R_{32}-R_{12}^\sb-R_{13}^\sb-R_{132}^\sc\\
&\stackrel{}{=} n_2-R_{32}+R_{23}^\sb-R_{23}^\sb-R_{12}^\sb\nonumber\\
&\quad-R_{13}^\sb-R_{132}^\sc\\
\label{3R23}
&\stackrel{\footnotesize{\eqref{E}}}{\leq} n_2-R_{12}^\sb-R_{13}^\sb-R_{23}^\sb-2R_{132}^\sc.
\end{align}
Combining \eqref{3R12} and \eqref{3R23}, keeping in mind that either $R_{123}^\sc=0$ or $R_{132}^\sc=0$, we obtain
\begin{align}
R_{12}''+R_{13}''+R_{23}''&\leq n_2-R_{12}^\sb-R_{13}^\sb-R_{23}^\sb\nonumber\\
&\quad-2R_{123}^\sc-2R_{132}^\sc\\
&\stackrel{\footnotesize{\eqref{n2p}}}{=}n_2'-2R_{123}^\sc-2R_{132}^\sc\\
&\stackrel{\footnotesize{\eqref{n2pp}}}{=}n_2''.
\end{align}
As a result, inequality \eqref{InEq2} is satisfied, and there exist enough levels for communicating all $R_{12}''+R_{13}''+R_{23}''$ bits.

Five other possibilities of uni-directional information flow remain. These cases are similar to the case studied above, and are considered briefly in Appendix \ref{App:UniDirectionalCases}. Consequently, after assigning levels for bi-directional communication and for cyclic communication, enough levels remain to communicate all the remaining bits in $\vec{R}''$. We obtain the following theorem.

\begin{theorem}
\label{IB}
Every rate tuple $\vec{R}\in\mathbb{N}^6\cap\overline{\mathcal{C}}_d$ is achievable.
\end{theorem}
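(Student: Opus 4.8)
The plan is to realize any target tuple $\vec{R}\in\mathbb{N}^6\cap\overline{\mathcal{C}}_d$ by splitting it additively as $\vec{R}=\vec{R}^\sb+\vec{R}^\sc+\vec{R}^\su$ and serving the three pieces with the three strategies of Sections \ref{BDC}, \ref{CC}, and \ref{AUC} in that order, each operating over the residual channel left by the previous stage. First I would fix the bi-directional part through \eqref{R12b}--\eqref{R23b}, so that $R_{jk}^\sb=\min\{R_{jk},R_{kj}\}$; this particular choice is what forces at least one of $R_{jk}'$ and $R_{kj}'$ to vanish and hence leaves a residual tuple $\vec{R}'$ with at least three zero entries, exactly the structure required before invoking the cyclic and uni-directional decompositions.

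Then I would run the strategies sequentially. At each stage the only substantive thing to verify is that the relay has enough levels, i.e. that the residual tuple lies inside the (smaller) feasible region of the reduced channel DYC$(n_1',n_2',n_3')$, and then DYC$(n_1'',n_2'',n_3'')$, whose parameters are given by \eqref{n1p}--\eqref{n3p} and \eqref{n1pp}--\eqref{n3pp}. For the bi-directional stage this reduces to $R_{13}^\sb+R_{23}^\sb\le n_3$ and $R_{12}^\sb+R_{13}^\sb+R_{23}^\sb\le n_2$, which follow from \eqref{CS3} and \eqref{TRB5}; for the cyclic stage it reduces to $R_{123}^\sc\le n_3'$ and $2R_{123}^\sc\le n_2'$ together with the mirror inequalities for the reverse cycle, which follow from \eqref{CS3}, \eqref{TRB4}, \eqref{TRB3}, and \eqref{TRB5}; and for the uni-directional stage it reduces to \eqref{InEq1} and \eqref{InEq2}, which again collapse precisely to membership in $\overline{\mathcal{C}}_d$. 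Since each stage has already been shown feasible in its subsection, chaining the three stages delivers all of $\vec{R}^\sb$, $\vec{R}^\sc$, and $\vec{R}^\su$, and hence $\vec{R}$.

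The main obstacle --- and the reason the argument is not merely cosmetic --- is the level bookkeeping across the three stages: one must track exactly which relay levels are consumed and recompute $n_3'$ and $n_3''$ through the $\min$ expressions \eqref{n3p} and \eqref{n3pp}, since the bi-directional and cyclic blocks may or may not spill below level $n_3$. The delicate point is that the outer-bound constraints are consumed \emph{without slack}: the same inequality \eqref{TRB5}, for example, bounds the bi-directional budget and then reappears --- after the substitutions forced by \eqref{R1} and \eqref{R2} --- to bound the residual uni-directional budget, so the feasibility of a later stage is genuinely coupled to the allocations made earlier. Dispatching the combinatorial case split (two cycle orientations and six uni-directional patterns) is routine but must be carried out with care; five of the uni-directional patterns I would defer to Appendix \ref{App:UniDirectionalCases}, with the representative case $R_{21}''=R_{31}''=R_{32}''=0$ of Section \ref{AUC} serving as the template for all of them.
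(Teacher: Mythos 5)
Your proposal is correct and follows essentially the same route as the paper's own proof of Theorem~\ref{IB}: the paper likewise splits $\vec{R}$ into bi-directional, cyclic, and uni-directional parts via \eqref{R12b}--\eqref{R23b}, \eqref{D}--\eqref{E}, and the residuals, runs the three strategies of Sections \ref{BDC}, \ref{CC}, and \ref{AUC} in that order over the reduced channels \eqref{n1p}--\eqref{n3p} and \eqref{n1pp}--\eqref{n3pp}, and verifies level sufficiency at each stage from exactly the inequalities you cite (\eqref{CS3}, \eqref{TRB3}--\eqref{TRB5}, then \eqref{InEq1}--\eqref{InEq2}, with the remaining uni-directional patterns deferred to Appendix~\ref{App:UniDirectionalCases}). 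You have also correctly identified the genuinely delicate points --- the $\min$ bookkeeping for $n_3'$ and $n_3''$ and the slack-free reuse of the outer-bound constraints across stages via \eqref{R1} and \eqref{R2} --- so no gap remains.
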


\begin{proof}
We assign one level for each 2 bits of bi-directional communication if any, as described in Section \ref{BDC}. Then, we assign two levels for each three bits of cyclic communication if any, as described in Section \ref{CC}. The remaining bits of uni-directional communication are communicated as described in Section \ref{AUC}. Since the levels at the relay are enough for this strategy as long as $\vec{R}$ is integer and belongs to $\overline{\mathcal{C}}_d$ as shown above, the result follows.
\end{proof}

Now, we use this result to show that any rate tuple in $\overline{\mathcal{C}}_d$, not necessarily integer, is achievable. It was shown in \cite{AvestimehrKhajehnejadSezginHassibi} that a $Q$-symbol extension of a multi-pair BRC ($Q$ time slots) is can be modeled as a multi-pair BRC where the channel parameters (number of levels) is the same as that of the original network multiplied by $Q$. The same statement holds here. We can think of a DYC$(n_1,n_2,n_3)$ over $Q$ time slots as a DYC$(Qn_1,Qn_2,Qn_3)$. Now we can state the following theorem.

\begin{theorem}
The capacity region $\mathcal{C}_d$ of the DYC is $\overline{\mathcal{C}}_d$.
\end{theorem}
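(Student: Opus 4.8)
The plan is to combine the outer bound established in Theorem \ref{OB} with the achievability result of Theorem \ref{IB} and then bridge the gap between integer rate tuples and arbitrary real rate tuples via a time-sharing (symbol-extension) argument. Concretely, Theorem \ref{OB} already gives the inclusion $\mathcal{C}_d\subseteq\overline{\mathcal{C}}_d$, so the entire task reduces to proving the reverse inclusion $\overline{\mathcal{C}}_d\subseteq\mathcal{C}_d$, i.e. that every point of $\overline{\mathcal{C}}_d$ is achievable. Theorem \ref{IB} delivers this for the integer lattice $\mathbb{N}^6\cap\overline{\mathcal{C}}_d$, so the only missing piece is to upgrade from integer tuples to the full real polytope.

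First I would invoke the symbol-extension observation already stated in the excerpt: viewing a DYC$(n_1,n_2,n_3)$ over $Q$ time slots is equivalent to a single use of a DYC$(Qn_1,Qn_2,Qn_3)$. The key structural fact to exploit is that $\overline{\mathcal{C}}_d$ is defined by homogeneous linear inequalities in \eqref{OBE}, so scaling all levels by $Q$ scales the whole region by $Q$; that is, $\overline{\mathcal{C}}_d(Qn_1,Qn_2,Qn_3)=Q\cdot\overline{\mathcal{C}}_d(n_1,n_2,n_3)$. Given an arbitrary real tuple $\vec{R}\in\overline{\mathcal{C}}_d$, I would consider the scaled tuple $Q\vec{R}$, which then lies in $\overline{\mathcal{C}}_d(Qn_1,Qn_2,Qn_3)$.

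The next step is to round down to an integer tuple that is still inside the scaled region. Define $\vec{R}^{(Q)}=\lfloor Q\vec{R}\rfloor$ componentwise. Since rounding down only decreases each coordinate and all the defining inequalities in \eqref{OBE} have nonnegative coefficients, $\vec{R}^{(Q)}\in\mathbb{N}^6\cap\overline{\mathcal{C}}_d(Qn_1,Qn_2,Qn_3)$. By Theorem \ref{IB} applied to the extended channel, $\vec{R}^{(Q)}$ is achievable over the $Q$-slot channel, which corresponds to the per-slot rate tuple $\tfrac{1}{Q}\vec{R}^{(Q)}$ in the original DYC. As $Q\to\infty$ each coordinate of $\tfrac{1}{Q}\lfloor Q R_{jk}\rfloor$ converges to $R_{jk}$, so $\vec{R}$ is arbitrarily well approximated from below by achievable tuples; since the achievable region is closed (any tuple dominated by an achievable tuple is achievable, as rates can be padded), $\vec{R}$ itself is achievable.

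The main obstacle is conceptual rather than computational: one must verify that achievability over the $Q$-slot extension legitimately yields the claimed per-slot rates in the original channel, i.e. that the symbol-extension equivalence from \cite{AvestimehrKhajehnejadSezginHassibi} transfers cleanly to the Y-channel topology with its three-user uplink superposition and downlink broadcast. Once that equivalence is granted (as the excerpt asserts it holds here), the remaining continuity/density argument is routine. Combining $\overline{\mathcal{C}}_d\subseteq\mathcal{C}_d$ with Theorem \ref{OB} then gives $\mathcal{C}_d=\overline{\mathcal{C}}_d$, completing the proof.
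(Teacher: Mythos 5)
Your proposal is correct and rests on the same two pillars as the paper's proof---Theorem \ref{IB} for integer tuples and the symbol-extension equivalence of a DYC$(n_1,n_2,n_3)$ used over $Q$ slots with a single use of a DYC$(Qn_1,Qn_2,Qn_3)$---but you bridge from integer to real-valued tuples differently. The paper exploits the fact that all inequalities in \eqref{OBE} have integer coefficients, so every corner point of $\overline{\mathcal{C}}_d$ is rational; it chooses $Q$ as the product of the denominators so that $Q\vec{R}$ is \emph{exactly} integer, achieves each corner point exactly via Theorem \ref{IB} on the extended channel, and then covers the rest of the polytope by time-sharing between corner points. You instead take an arbitrary $\vec{R}\in\overline{\mathcal{C}}_d$, use homogeneity of the region (correctly justified, since scaling $n_j\mapsto Qn_j$ scales every right-hand side of \eqref{OBE} by $Q$) to place $Q\vec{R}$ in the scaled region, floor componentwise (valid because all left-hand-side coefficients are nonnegative), and let $Q\to\infty$. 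Your route dispenses with the corner-point structure of the polytope and with time-sharing altogether, which is arguably cleaner and more general (it would work even if the corner points were irrational); its price is that $\tfrac{1}{Q}\lfloor Q\vec{R}\rfloor$ only converges to $\vec{R}$ from below, so you must invoke closedness of the capacity region. That is standard, but be aware that your parenthetical justification---``any tuple dominated by an achievable tuple is achievable''---establishes downward comprehensiveness of the region, not topological closedness; to finish you need the routine $\epsilon$-slack (or diagonalization) argument showing that a limit of achievable tuples is achievable, or the convention that the capacity region is defined as a closed set. With that one justification repaired, your proof is complete and fully parallel in strength to the paper's.
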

\begin{proof}
Since all inequalities representing the boundary of the outer bound $\overline{\mathcal{C}}_d$, i.e. \eqref{TRB1}-\eqref{TRB6} and \eqref{CS3}, have integer coefficients, then all the corner points of the outer bound are rational. Consider a corner point $$\vec{R}=\left(\frac{P_{12}}{Q_{12}},\frac{P_{13}}{Q_{13}},\frac{P_{21}}{Q_{21}},\frac{P_{23}}{Q_{23}},\frac{P_{31}}{Q_{31}},\frac{P_{32}}{Q_{32}}\right),$$
where $P_{jk},Q_{jk}\in\mathbb{N}$ for all $j,k\in\{1,2,3\}$, $j\neq k$. This corner point is achievable as follows. Use $Q$ time slots to achieve the rate tuple $Q\vec{R}$ where $$Q=\prod_{j=1}^3\prod_{\substack{k=1\\k\neq j}}^3Q_{jk},$$
over a DYC$(Qn_1,Qn_2,Qn_3)$. Since $\vec{R}\in\overline{\mathcal{C}}_d$ then $Q\vec{R}\in\overline{\mathcal{C}}'_d$ where $\overline{\mathcal{C}}'_d$ is the outer bound of Theorem \ref{OB} for a DYC$(Qn_1,Qn_2,Qn_3)$. Moreover, $Q\vec{R}\in\mathbb{N}^6$. Thus $Q\vec{R}\in\mathbb{N}^6\cap\overline{\mathcal{C}}'_d$ which means that it is achievable according to Theorem \ref{IB}. But $Q\vec{R}$ being achievable in a DYC$(Qn_1,Qn_2,Qn_3)$ implies that $\vec{R}$ is achievable in a DYC$(n_1,n_2,n_3)$. Therefore all corner points of $\overline{\mathcal{C}}_d$ are achievable, even if not integer valued. All other points in $\overline{\mathcal{C}}_d$ are achievable by time sharing between different corner points and the statement of the theorem follows.
\end{proof}

Having established the capacity region of the DYC, we can now extend these results to the Gaussian case, GYC, by using the insights we gained from the DYC. In the following section, we provide an outer bound on $\mathcal{C}_g$, which we show later to be achievable within a constant gap which is independent of channel parameters.

\section{The GYC: An Outer Bound}
\label{Sec:GYCOuterBound}
Upper bounds on achievable rates in the GYC were given in \cite{ChaabanSezginAvestimehr_YC_SC} using the cut-set bounds and genie aided bounds. In this section, we briefly summarize those bounds. We will make use of these bounds to obtain a constant gap characterization of the capacity of the GYC. First, consider the cut-set bounds in \cite[Corollary 1]{ChaabanSezginAvestimehr_YC_SC} which provide upper bounds on the sum of two components of $\vec{R}$ given by
\begin{align}
\label{2RB1}
R_{31}+R_{32}&\leq C(h_3^2P)\\
R_{13}+R_{23}&\leq C(h_3^2P)\\
R_{21}+R_{23}&\leq C(h_2^2P)\\
R_{12}+R_{32}&\leq C(h_2^2P)\\
R_{12}+R_{13}&\leq \min\{C(h_1^2P),C(h_2^2P+h_3^2P)\}\\
\label{2RB6}
R_{21}+R_{31}&\leq \min\{C(h_1^2P),C((|h_2|+|h_3|)^2P)\},
\end{align}
where $C(x)=\frac{1}{2}\log(1+x)$ as defined earlier. In the same paper \cite{ChaabanSezginAvestimehr_YC_SC}, it was shown that these bounds are very loose in terms of sum-capacity as $P$ increases. Additional bounds on the sum of three components of $\vec{R}$, similar to those in Lemmas \ref{FromRelay} and \ref{GUB}, are required for a constant gap characterization of the sum-capacity. These bounds, given in \cite[Lemmas 1 and 2]{ChaabanSezginAvestimehr_YC_SC} can be written as follows

\begin{align}
\label{3RB1}
R_{12}+R_{13}+R_{32}&\leq C(h_2^2P+h_3^2P)\\
R_{12}+R_{13}+R_{23}&\leq C(h_2^2P+h_3^2P)\\
R_{21}+R_{23}+R_{13}&\leq C(h_1^2P+h_3^2P)\\
R_{21}+R_{23}+R_{31}&\leq C((|h_2|+|h_3|)^2P)\\
R_{31}+R_{32}+R_{12}&\leq C(h_1^2P+h_2^2P)\\
\label{3RB6}
R_{31}+R_{32}+R_{21}&\leq C((|h_2|+|h_3|)^2P).
\end{align}

These bounds, combined all together, provide an outer bound on the capacity region of the GYC. Let us denote this outer bound by $\overline{\mathcal{C}}_g$ which is given by
\begin{align}
\overline{\mathcal{C}}_g=\left\{\vec{R}\in\mathbb{R}_+^6|\  \eqref{2RB1}-\eqref{2RB6} \text{ and } \eqref{3RB1}-\eqref{3RB6} \text{ are satisfied}\right\}.
\end{align} 

\begin{theorem}
The capacity region $\mathcal{C}_g$ of the GYC is outer bounded by $\overline{\mathcal{C}}_g$, $$\mathcal{C}_g\subseteq\overline{\mathcal{C}}_g.$$
\end{theorem}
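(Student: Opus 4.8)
The plan is to observe that $\overline{\mathcal{C}}_g$ is \emph{defined} as the set of tuples in $\mathbb{R}_+^6$ satisfying all twelve inequalities \eqref{2RB1}--\eqref{2RB6} and \eqref{3RB1}--\eqref{3RB6} simultaneously. Hence the theorem reduces to showing that any achievable $\vec{R}\in\mathcal{C}_g$ must obey each of these inequalities individually; membership in the intersection then follows immediately and gives $\mathcal{C}_g\subseteq\overline{\mathcal{C}}_g$. So I would establish the two families of constraints separately, using cut-set arguments for the pairwise bounds and genie-aided arguments for the triple bounds, paralleling the deterministic development in Lemmas \ref{FromRelay} and \ref{GUB}.

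For the pairwise bounds \eqref{2RB1}--\eqref{2RB6}, I would run the standard cut-set argument, cutting the network into $\mathcal{S}$ and $\mathcal{S}^\sc$ and bounding the information flow across the cut, but now with differential entropy in place of the discrete entropy used for the DYC. The minima of the form $\min\{C(h_1^2P),C(h_2^2P+h_3^2P)\}$ appear because the same pair of rates is limited both by the single-user uplink (giving $C(h_1^2P)$) and by the side on which the relay's received signal is a superposition of two independent Gaussian-distributed inputs (giving $C(h_2^2P+h_3^2P)$). These are exactly \cite[Corollary 1]{ChaabanSezginAvestimehr_YC_SC}, which I would cite.

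For the three-rate bounds \eqref{3RB1}--\eqref{3RB6} I would mirror the genie-aided arguments behind Lemmas \ref{FromRelay} and \ref{GUB}. The structural idea carries over unchanged: hand a user, say $U_1$, the extra message $m_{32}$ — and, in the stronger variant, also the relay observation $\vec{Y}_r^n$ — as side information, so that after decoding its own two desired messages its observation dominates that of a third user and it can decode one additional message. Fano's inequality then converts the sum of three rates into a mutual information, which I bound by a conditional differential entropy of the relay's or a user's received signal. The genuinely new work relative to the deterministic proof is the Gaussian entropy estimation rather than the combinatorial level counting.

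The hard part will be precisely this entropy computation: extracting the clean closed forms $C(h_2^2P+h_3^2P)$ and $C((|h_2|+|h_3|)^2P)$ rather than loose estimates. The term $h_2^2P+h_3^2P$ comes from bounding the differential entropy of an independent-input superposition $h_2X_2+h_3X_3+Z$ by that of a Gaussian with matched variance, whereas the term $(|h_2|+|h_3|)^2P$ arises when the two contributing signals may be arbitrarily correlated, yielding the worst-case variance $(|h_2|P^{1/2}+|h_3|P^{1/2})^2$; the distinction reflects whether the messages conditioned on by the genie force the inputs to be independent or permit cooperation across the relevant cut. Getting this variance bookkeeping correct, and verifying that conditioning on the supplied side information removes exactly the intended signal components, is where the care lies. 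These bounds are established as \cite[Lemmas 1 and 2]{ChaabanSezginAvestimehr_YC_SC}, which I would invoke; combining the two families then places every achievable $\vec{R}$ in $\overline{\mathcal{C}}_g$ and completes the proof.
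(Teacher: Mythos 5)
Your top-level route coincides with the paper's: the theorem there carries no independent proof at all — it is pure assembly, quoting the pairwise bounds \eqref{2RB1}--\eqref{2RB6} from \cite[Corollary 1]{ChaabanSezginAvestimehr_YC_SC} and the triple bounds \eqref{3RB1}--\eqref{3RB6} from \cite[Lemmas 1 and 2]{ChaabanSezginAvestimehr_YC_SC}, and observing that $\overline{\mathcal{C}}_g$ is by definition the intersection of these constraints. Since you invoke exactly the same external results, your proof-by-citation is valid and matches the paper.

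However, the mechanism you sketch for where the sum-of-squares terms come from is wrong, and the step would fail if you actually executed it instead of citing. First, a cut-set bound maximizes mutual information over the \emph{joint} distribution of the inputs on one side of the cut, so you may never impose independence of $X_2$ and $X_3$; if the relay's received superposition were the binding term, the cut-set bound necessarily yields the fully correlated variance $(|h_2|+|h_3|)^2P$ — which is precisely the form appearing in \eqref{2RB6}, not $h_2^2P+h_3^2P$. Second, genie conditioning cannot rescue independence either: by \eqref{UsersEncoder}, $X_{2i}$ and $X_{3i}$ are functions of $Y_2^{i-1}$ and $Y_3^{i-1}$, both driven by the common relay transmission, so the inputs remain correlated even conditioned on all six messages; this is exactly why the uplink-based triple bounds carry $C((|h_2|+|h_3|)^2P)$. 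The terms $C(h_2^2P+h_3^2P)$, $C(h_1^2P+h_3^2P)$, and $C(h_1^2P+h_2^2P)$ arise instead from the \emph{downlink} side: a single relay input $X_r$ observed at two users through independent unit-variance noises, giving $I(X_r;Y_k,Y_l)\leq C\bigl((h_k^2+h_l^2)P\bigr)$ — the squares add because the \emph{noises}, not the inputs, are independent. Your theorem still goes through on the strength of the citations, but the ``variance bookkeeping'' you singled out as the hard part is mis-assigned as written.
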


\section{The GYC: An Achievable Scheme}
\label{Sec:GYCInnerBound}
In this section, we provide an achievable scheme for the GYC, and consequently an inner bound on $\mathcal{C}_g$. This inner bound is achieved by using a scheme similar to the one for the DYC from the previous sections adapted to the Gaussian case. Namely, this scheme utilizes network coding realized with lattice codes \cite{Loeliger}. We start with a brief introduction about lattice codes (more details can be found in \cite{NazerGastpar}), before proceeding to describe the achievable scheme.

\subsection{Lattice Codes}
An $n$-dimensional lattice $\Lambda$ is a subset of $\mathbb{R}^n$ such that $\lambda_1,\lambda_2\in\Lambda\Rightarrow\lambda_1+\lambda_2\in\Lambda$, i.e. it is an additive subgroup of $\mathbb{R}^n$. The fundamental Voronoi region $\mathcal{V}(\Lambda)$ of $\Lambda$ is the set of all points in $\mathbb{R}^n$ whose distance to the origin is smaller that that to any other $\lambda\in\Lambda$. Thus, by quantizing points in $\mathbb{R}^n$ to their closest lattice point, all points in $\mathcal{V}(\Lambda)$ are mapped to the all zero vector.

In this work, we need nested lattice codes. Two lattices are required for nested lattice codes, a coarse lattice $\Lambda_c$ and a fine lattice $\Lambda_f$ where $\Lambda_c\subseteq\Lambda_f$. We denote a nested lattice code by the pair $(\Lambda_f,\Lambda_c)$. The codewords are chosen as the fine lattice points $\lambda_f\in\Lambda_f$ that lie in $\mathcal{V}(\Lambda_c)$. The power constraint is satisfied by an appropriate choice of $\Lambda_c$ and the rate of the code is defined by the number of fine lattice points in $\Lambda_f\cap\mathcal{V}(\Lambda_c)$. In the sequel, we are going to need the following result from \cite{NarayananPravinSprintson}. 

Given two nodes A and B, with messages $u_A$ and $u_B$, respectively, where both messages have rate $R$. The two nodes use the same nested lattice codebook $(\Lambda_f,\Lambda_c)$ with power $P$, to encode their messages into codewords $\lambda_A$ and $\lambda_B$ with length $n$. The nodes then construct their transmit signals $x_A^n$ and $x_B^n$ as
\begin{align}
x_A^n&=(\lambda_A-d_A)\bmod\Lambda_c\\
x_B^n&=(\lambda_B-d_B)\bmod\Lambda_c
\end{align}
where $d_A$ and $d_B$ are $n$-dimensional dither vectors uniformly distributed over $\mathcal{V}(\Lambda_c)$, known at the relay and nodes A and B\footnote{\alert{The $x\bmod\Lambda$ operation returns the quantization error corresponding to quantizing $x$ to the nearest point in the lattice $\Lambda$ \cite{NazerGastpar}}}. A relay nodes receives $$y_R^n=x_A^n+x_B^n+z_R^n$$ where $z_R^n$ is an additive white Gaussian noise with i.i.d. components with zero mean and variance $\sigma^2$. 
\begin{lemma}[\cite{NarayananPravinSprintson}]
\label{Lemma:AlignDecodeRate}
The relay can decode the sum $$(\lambda_A+\lambda_B)\bmod \Lambda_c$$ from $y_R^n$ reliably as long as $$R\leq\frac{1}{2}\log\left(\frac{1}{2}+\frac{P}{\sigma^2}\right).$$
\end{lemma}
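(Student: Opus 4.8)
The plan is to prove the computation rate achievability of Lemma~\ref{Lemma:AlignDecodeRate} by adapting the standard compute-and-forward / nested-lattice argument of Nazer and Gastpar \cite{NazerGastpar} and Narayanan~\emph{et~al.} \cite{NarayananPravinSprintson} to the two-user sum case with equal coefficients. First I would recall the structural facts about good nested lattice pairs $(\Lambda_f,\Lambda_c)$: we pick sequences of lattices that are simultaneously Rogers-good and Poltyrev-good, so that $\Lambda_c$ achieves the AWGN capacity under lattice decoding and its second moment matches the power constraint, i.e. the per-dimension second moment of $\Lambda_c$ is (essentially) $P$. The key property we exploit is the crypto lemma: since $d_A,d_B$ are uniform over $\mathcal{V}(\Lambda_c)$ and independent of the codewords, each transmitted signal $x_A^n=(\lambda_A-d_A)\bmod\Lambda_c$ is uniform over $\mathcal{V}(\Lambda_c)$ and independent of $\lambda_A$; hence the power constraint $P$ is met exactly and the dithers decorrelate the inputs.

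Next I would set up the relay's decoding step. The relay wants the lattice point $u=(\lambda_A+\lambda_B)\bmod\Lambda_c$. Starting from $y_R^n=x_A^n+x_B^n+z_R^n$, the relay forms the MMSE-scaled, dither-removed quantity
\begin{align*}
\tilde{y}^n=\left(\alpha y_R^n+d_A+d_B\right)\bmod\Lambda_c,
\end{align*}
and I would show, using the distributive property of the $\bmod\Lambda_c$ operation over lattice addition, that this equals $\left(u+z_{\mathrm{eff}}^n\right)\bmod\Lambda_c$ where the effective noise is
\begin{align*}
z_{\mathrm{eff}}^n=(\alpha-1)(x_A^n+x_B^n)+\alpha z_R^n.
\end{align*}
The relay then quantizes $\tilde{y}^n$ to $\Lambda_f$ and reduces mod $\Lambda_c$ to obtain its estimate of $u$. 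The main analytic step is bounding the variance of $z_{\mathrm{eff}}^n$: by the independence granted through the crypto lemma, its per-dimension variance is $(\alpha-1)^2\cdot 2P+\alpha^2\sigma^2$, which is minimized over $\alpha$ by the MMSE coefficient $\alpha^\star=2P/(2P+\sigma^2)$, yielding effective noise variance $N_{\mathrm{eff}}=\frac{2P\sigma^2}{2P+\sigma^2}$.

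From here the argument is the standard Poltyrev-goodness conclusion: because $\Lambda_f$ is Poltyrev-good and the effective noise behaves (after the usual smoothing/typicality argument that it is dominated by a Gaussian of the same variance) like AWGN of variance $N_{\mathrm{eff}}$, decoding succeeds with vanishing error probability provided the per-dimension rate satisfies $R<\frac{1}{2}\log\!\left(\frac{P}{N_{\mathrm{eff}}}\right)$. Substituting $N_{\mathrm{eff}}=\frac{2P\sigma^2}{2P+\sigma^2}$ gives $\frac{P}{N_{\mathrm{eff}}}=\frac{2P+\sigma^2}{2\sigma^2}=\frac{1}{2}+\frac{P}{\sigma^2}$, which reproduces exactly the claimed threshold $R\le\frac{1}{2}\log\!\left(\frac{1}{2}+\frac{P}{\sigma^2}\right)$; I would finish by taking the limit of error probability to zero as $n\to\infty$.

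The step I expect to be the main obstacle is the rigorous control of the effective noise $z_{\mathrm{eff}}^n$. Unlike a genuinely Gaussian channel, $z_{\mathrm{eff}}^n$ is a mixture of the continuous self-noise $(\alpha-1)(x_A^n+x_B^n)$, which is uniform over a scaled Voronoi region, and the true Gaussian $\alpha z_R^n$; making precise that this composite noise is ``good enough'' for Poltyrev decoding requires the standard but delicate arguments that the sum is asymptotically (in the sense of relative entropy or of the relevant large-deviation bound) indistinguishable from an i.i.d.\ Gaussian of variance $N_{\mathrm{eff}}$. This relies on Rogers-goodness of $\Lambda_c$ to ensure the dithered inputs are approximately white Gaussian and on the smoothing properties of the lattice. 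Since this is exactly the content of the cited result \cite{NarayananPravinSprintson}, I would invoke it at the needed granularity rather than re-deriving the lattice goodness estimates, and concentrate the new work on verifying the algebraic identity for $\tilde{y}^n$ and the correct bookkeeping of the factor $2P$ (rather than $P$) in the self-noise, since both users transmit at power $P$.
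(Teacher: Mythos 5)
The paper does not prove this lemma itself---it imports it verbatim by citation to \cite{NarayananPravinSprintson}---and your reconstruction is precisely the standard argument of that reference: dithered nested-lattice encoding, the crypto lemma, MMSE scaling with $\alpha^\star=2P/(2P+\sigma^2)$, effective noise variance $N_{\mathrm{eff}}=2P\sigma^2/(2P+\sigma^2)$, and the lattice-decoding rate $\tfrac{1}{2}\log\left(P/N_{\mathrm{eff}}\right)=\tfrac{1}{2}\log\left(\tfrac{1}{2}+\tfrac{P}{\sigma^2}\right)$, with all the algebra (the mod-$\Lambda_c$ identity for $\tilde{y}^n$, the factor $2P$ in the self-noise, the optimized $\alpha$) checking out. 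Your proposal is therefore correct and takes essentially the same approach as the proof the paper relies on, including correctly flagging the non-Gaussian mixture-noise step as the one place where the heavy lattice-goodness machinery of the cited work must be invoked rather than rederived.
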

\begin{lemma}[\cite{NarayananPravinSprintson}]
\label{Lemma:AlignDecode}
Node A, knowing $(\lambda_A+\lambda_B)\bmod \Lambda_c$ and $\lambda_A$, can extract $\lambda_B$ and hence also $u_B$.
\end{lemma}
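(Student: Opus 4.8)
The plan is to exploit the algebraic structure of the modulo-lattice operation, specifically its compatibility with addition. First I would recall two elementary facts about quantization modulo a lattice. Writing $x \bmod \Lambda_c = x - Q_{\Lambda_c}(x)$, where $Q_{\Lambda_c}$ denotes the nearest-point quantizer onto $\Lambda_c$, the first fact is the distributive (or ``chain'') property
\begin{equation}
\left[(x \bmod \Lambda_c) + y\right] \bmod \Lambda_c = (x+y)\bmod\Lambda_c,
\end{equation}
valid for all $x,y\in\mathbb{R}^n$. This holds because $(x\bmod\Lambda_c)+y$ differs from $x+y$ by the lattice point $Q_{\Lambda_c}(x)\in\Lambda_c$, and shifting the argument of $\bmod\,\Lambda_c$ by a lattice point leaves the result unchanged (since the quantizer is invariant under lattice translations). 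The second fact is idempotency on the fundamental region: if $w\in\mathcal{V}(\Lambda_c)$ then $w\bmod\Lambda_c = w$, because the nearest lattice point to any $w\in\mathcal{V}(\Lambda_c)$ is the origin, so $Q_{\Lambda_c}(w)=0$.

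Next I would apply these facts to the quantity node A actually forms. Node A knows $s \triangleq (\lambda_A+\lambda_B)\bmod\Lambda_c$ together with its own codeword $\lambda_A$, so it computes $(s-\lambda_A)\bmod\Lambda_c$. Using the chain property with $x=\lambda_A+\lambda_B$ and $y=-\lambda_A$ gives
\begin{equation}
(s-\lambda_A)\bmod\Lambda_c = (\lambda_A+\lambda_B-\lambda_A)\bmod\Lambda_c = \lambda_B\bmod\Lambda_c.
\end{equation}
Since the codewords are by construction the fine-lattice points lying in $\mathcal{V}(\Lambda_c)$, we have $\lambda_B\in\mathcal{V}(\Lambda_c)$, and idempotency yields $\lambda_B\bmod\Lambda_c=\lambda_B$. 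Hence node A recovers $\lambda_B$ exactly, i.e. self-interference cancellation is an exact algebraic operation once $\lambda_A$ is known.

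Finally, I would invoke that the message-to-codeword map $u_B\mapsto\lambda_B$ is a bijection onto $\Lambda_f\cap\mathcal{V}(\Lambda_c)$, which is precisely how the nested lattice code is defined, so knowledge of $\lambda_B$ determines $u_B$. I do not expect a genuine obstacle here; the only points requiring care are the justification of the chain property and the treatment of codewords on the boundary of $\mathcal{V}(\Lambda_c)$. The latter is handled by the standard convention of fixing a measurable (half-open) fundamental domain so that $\bmod\,\Lambda_c$ is single-valued, which also guarantees the bijectivity of the encoding map. The essential content is simply that, having already decoded the sum modulo the coarse lattice (Lemma \ref{Lemma:AlignDecodeRate}), subtracting one's own codeword modulo $\Lambda_c$ returns the partner's codeword without residual error.
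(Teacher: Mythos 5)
Your proposal is correct, and since the paper does not prove this lemma itself but imports it from \cite{NarayananPravinSprintson}, your argument --- self-interference cancellation via the distributive property $\left[(x \bmod \Lambda_c)+y\right]\bmod\Lambda_c=(x+y)\bmod\Lambda_c$, idempotency of $\bmod\,\Lambda_c$ on $\mathcal{V}(\Lambda_c)$, and bijectivity of the map $u_B\mapsto\lambda_B$ onto $\Lambda_f\cap\mathcal{V}(\Lambda_c)$ --- is precisely the standard argument underlying the cited result. Your handling of the two delicate points (quantizer invariance under lattice translations, and boundary/tie-breaking via a half-open fundamental domain) leaves no gap.
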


Let us now describe the achievable scheme in the uplink.

\subsection{Uplink}
In the uplink, $U_i$ splits each message $m_{ij}\in\{1,\dots,2^{nR_{ij}}\}$ into a bi-directional communication part $m_{ij}^\sb\in\{1,\dots,2^{R_{ij}^\sb}\}$, a cyclic communication part $m_{ij}^\sc\in\{1,\dots,2^{R_{ij}^\sc}\}$, and a uni-directional communication part $m_{ij}^\su\in\{1,\dots,2^{R_{ij}^\su}\}$. Indeed, $m_{ij}^\sb$ is a uni-directional message in the sense that it is sent from $U_i$ and intended to $U_j$. The superscript $\sb$ here is used to indicate that a bi-directional communication strategy will be used to communicate this message. Same goes for the superscripts $\sc$ and $\su$. The rates of the bi-directional communication messages at different users are chosen such that
\begin{align}
\label{BiDRates}
R_{12}^\sb=R_{21}^\sb,\quad R_{13}^\sb=R_{31}^\sb,\quad \text{and }R_{23}^\sb=R_{32}^\sb.
\end{align}
Moreover, the rates of the cyclic streams are chosen to satisfy
\begin{align}
\label{CycRates}
R_{12}^\sc=R_{23}^\sc=R_{31}^\sc\triangleq R_{123}^\sc,\quad \text{and }R_{13}^\sc=R_{32}^\sc=R_{21}^\sc\triangleq R_{132}^\sc.
\end{align}
Thus, the rate of a message $m_{ij}$ is split into three parts, $R_{ij}^\sb$, $R_{ij}^\sc$, $R_{ij}^\su$, such that $R_{ij}=R_{ij}^\sb+R_{ij}^\sc+R_{ij}^\su$. In the following, we describe the encoding of the bi-directional, cyclic, and uni-directional streams to establish an inner bound on $\mathcal{C}_g$.

\subsubsection{Encoding of bi-directional streams}
\label{Sec:BiDirStreams}
The users use n-dimensional nested lattices to encode the bi-directional communication streams. Let us consider the bi-directional communication between users 1 and 2, i.e., the messages $m_{12}^\sb$ and $m_{21}^\sb$. \begin{itemize}
\item $U_2$ uses a nested lattice code $(\Lambda_{21}^\sb,\Lambda_{21,c}^\sb)$. The rate of the code is $R_{21}^\sb$ and the power is $\alpha_{21}^\sb P$. Each message $m_{21}^\sb$ is mapped into a lattice point $\lambda_{21}^\sb\in\Lambda_{21}^\sb\cap\mathcal{V}(\Lambda_{21,c}^\sb)$. 
\item $U_1$ uses a scaled version of the lattice code used for $m_{21}^\sb$ to encode $m_{12}^\sb$. The code for $m_{12}^\sb$ is designed such that the bi-directional communication signals align at the relay. That is, $U_1$ uses a lattice code $(\Lambda_{12}^\sb,\Lambda_{12,c}^\sb)=\left(\frac{h_2}{h_1}\Lambda_{21}^\sb,\frac{h_2}{h_1}\Lambda_{21,c}^\sb\right)$, for encoding $m_{12}^\sb$. Each $m_{12}^\sb$ is mapped into a point in $\Lambda_{12}^\sb\cap\mathcal{V}(\Lambda_{12,c}^\sb)$, denoted $\lambda_{12}^\sb$. 
\end{itemize}

Notice that the rate of the lattice code $\Lambda_{12}^\sb$ is $R_{12}^\sb=R_{21}^\sb$, and the power of $\Lambda_{12}^\sb$ is 
\begin{align}
\label{A12bVsA21b}
\alpha_{12}^\sb P=\frac{h_2^2}{h_1^2}\alpha_{21}^\sb P.
\end{align}
In this way, the lattice codewords $\lambda_{12}^\sb$ and $\lambda_{21}^\sb$ align at the relay, since $$h_1\lambda_{12}^\sb+h_2\lambda_{21}^\sb\in h_2(\Lambda_{21}^\sb\cap\mathcal{V}(\Lambda_{21,c}^\sb)).$$ Then, $U_1$ and $U_2$ add the appropriate random dither vectors to $\lambda_{12}^\sb$ and $\lambda_{21}^\sb$, to construct the transmit signals $b_{12}^n$ and $b_{21}^n$. Namely,
\begin{align}
b_{12}^n&=(\lambda_{12}^\sb-d_{12}^\sb)\bmod \Lambda_{12,c}^\sb\\
b_{21}^n&=(\lambda_{21}^\sb-d_{21}^\sb)\bmod \Lambda_{21,c}^\sb
\end{align}
where $d_{12}^\sb$ is uniformly distributed over $\mathcal{V}(\Lambda_{12,c}^\sb)$, and $d_{21}^\sb$ is uniformly distributed over $\mathcal{V}(\Lambda_{21,c}^\sb)$. Both dither vectors are known at $U_1$, $U_2$, and the relay (see \cite{NarayananPravinSprintson, GunduzYenerGoldsmithPoor_IT}).

A similar procedure is done for encoding $m_{31}^\sb$, $m_{13}^\sb$, $m_{32}^\sb$, and $m_{23}^\sb$ into 
\begin{align}
b_{31}^n&=(\lambda_{31}^\sb-d_{31}^\sb)\bmod \Lambda_{31,c}^\sb\\
b_{13}^n&=(\lambda_{13}^\sb-d_{13}^\sb)\bmod \Lambda_{13,c}^\sb\\
b_{32}^n&=(\lambda_{32}^\sb-d_{32}^\sb)\bmod \Lambda_{32,c}^\sb\\
b_{23}^n&=(\lambda_{23}^\sb-d_{23}^\sb)\bmod \Lambda_{23,c}^\sb
\end{align}
with powers $\alpha_{31}^\sb P$, $\alpha_{13}^\sb P$, $\alpha_{32}^\sb P$ and $\alpha_{23}^\sb P$, where the lattice codebooks used are $(\Lambda_{31}^\sb,\Lambda_{31,c}^\sb)$, $(\Lambda_{13}^\sb,\Lambda_{13,c}^\sb)$, $(\Lambda_{32}^\sb,\Lambda_{32,c}^\sb)$, and $(\Lambda_{23}^\sb,\Lambda_{23,c}^\sb)$,
respectively, where
\begin{align}
(\Lambda_{13}^\sb,\Lambda_{13,c}^\sb)&=\left(\frac{h_3}{h_1}\Lambda_{31}^\sb,\frac{h_3}{h_1}\Lambda_{31,c}^\sb\right),\\
(\Lambda_{23}^\sb,\Lambda_{23,c}^\sb)&=\left(\frac{h_3}{h_2}\Lambda_{32}^\sb,\frac{h_3}{h_2}\Lambda_{32,c}^\sb\right),
\end{align}
so that
\begin{align}
\label{A13bVsA31b}
\alpha_{13}^\sb P=\frac{h_3^2}{h_1^2}\alpha_{31}^\sb P,\\
\label{A23bVsA32b}
\alpha_{23}^\sb P=\frac{h_3^2}{h_2^2}\alpha_{32}^\sb P.
\end{align}

\subsubsection{Encoding of cyclic streams}
Consider the messages $m_{12}^\sc$, $m_{23}^\sc$ and $m_{31}^\sc$ constituting the cycle $1\to2\to3\to1$. To communicate these messages, the second user sends $m_{23}^\sc$ encoded in two different signals, one of them aligned with the signal sent by the first user (corresponding to $m_{12}^\sc$), and one aligned with that sent by the third user (corresponding to $m_{31}^\sc$). Notice that this mimics the scheme used to achieve cyclic information flow over the DYC in section \ref{CC}. Here, the alignment is also guaranteed using nested lattices in a similar way as for the bi-directional streams in Section \ref{Sec:BiDirStreams}.

To this end, let $U_2$ use a nested lattice code $(\Lambda_{23}^\sc,\Lambda_{23,c}^\sc)$ with rate $R_{123}^\sc$, and the appropriate dither vector, for encoding $m_{23}^\sc$ into a codeword
\begin{align}
c_{23}^n&=(\lambda_{23}^\sc-d_{23}^\sc)\bmod \Lambda_{23,c}^\sc
\end{align}
with power $\alpha_{23}^\sc P$. Then, $U_1$ uses a nested lattice code $(\Lambda_{12}^\sc,\Lambda_{12,c}^\sc)$ where $\Lambda_{12}^\sc=\frac{h_2}{h_1}\Lambda_{23}^\sc$ and $\Lambda_{12,c}^\sc=\frac{h_2}{h_1}\Lambda_{23,c}^\sc$, and the appropriate dither vector, to encode $m_{12}^\sc$ to 
\begin{align}
c_{12}^n&=(\lambda_{12}^\sc-d_{12}^\sc)\bmod \Lambda_{12,c}^\sc
\end{align}
with power 
\begin{align}
\label{A12cVsA23c}
\alpha_{12}^\sc P=\frac{h_2^2}{h_1^2}\alpha_{23}^\sc P.
\end{align}
Notice that this ensures alignment of the codes $(\Lambda_{23}^\sc,\Lambda_{23,c}^\sc)$ and $(\Lambda_{12}^\sc,\Lambda_{12,c}^\sc)$ at the relay as in Section \ref{Sec:BiDirStreams}. 

$U_3$ uses a nested lattice code $(\Lambda_{31}^\sc,\Lambda_{31,c}^\sc)$ with rate $R_{123}^\sc$, and the appropriate dither vector, for encoding $m_{31}^\sc$ into a codeword 
\begin{align}
c_{31}^n&=(\lambda_{31}^\sc-d_{31}^\sc)\bmod \Lambda_{31,c}^\sc
\end{align}
with power $\alpha_{31}^\sc P$. $U_2$ encodes $m_{23}^\sc$ again into 
\begin{align}
\tilde{c}_{23}^n&=(\tilde{\lambda}_{23}^\sc-\tilde{d}_{23}^\sc)\bmod \tilde{\Lambda}_{23,c}^\sc
\end{align}
using a nested lattice code $(\tilde{\Lambda}_{23}^\sc,\tilde{\Lambda}_{23,c}^\sc)$ where $\tilde{\Lambda}_{23}^\sc=\frac{h_3}{h_2}\Lambda_{31}^\sc$ and $\tilde{\Lambda}_{23,c}^\sc=\frac{h_3}{h_2}\Lambda_{31,c}^\sc$ with power 
\begin{align}
\label{A23cVsA31c}
\tilde{\alpha}_{23}^\sc P=\frac{h_3^2}{h_2^2}\alpha_{31}^\sc P.
\end{align} 
This ensures the alignment of $(\Lambda_{31}^\sc,\Lambda_{31,c}^\sc)$ and  $(\tilde{\Lambda}_{23}^\sc,\tilde{\Lambda}_{23,c}^\sc)$ at the relay.

Similar encoding is performed on the messages of the other cycle $1\to3\to2\to1$ where the first user encodes $m_{13}^\sc$ twice, namely into 
\begin{align}
c_{13}^n&=(\lambda_{13}^\sc-d_{13}^\sc)\bmod \Lambda_{13,c}^\sc\\
\tilde{c}_{13}^\sc&=(\tilde{\lambda}_{13}^\sc-\tilde{d}_{13}^\sc)\bmod \tilde{\Lambda}_{13,c}^\sc
\end{align}
to be aligned with 
\begin{align}
c_{32}^n&=(\lambda_{32}^\sc-d_{32}^\sc)\bmod \Lambda_{32,c}^\sc
\end{align}
(the codeword corresponding to $m_{32}^\sc$) and 
\begin{align}
c_{21}^n&=(\lambda_{21}^\sc-d_{21}^\sc)\bmod \Lambda_{21,c}^\sc
\end{align}
(the codeword corresponding to $m_{21}^\sc$), respectively. The powers of $c_{32}^n$, $c_{21}^n$, $c_{13}^n$ and $\tilde{c}_{13}^n$ are $\alpha_{32}^\sc P$, $\alpha_{21}^\sc P$, 
\begin{align}
\label{A13cVsA32c}
\alpha_{13}^\sc P=\frac{h_3^2}{h_1^2}\alpha_{32}^\sc P,
\end{align}
and 
\begin{align}
\label{A13cVsA21c}
\tilde{\alpha}_{13}^\sc P=\frac{h_2^2}{h_1^2}\alpha_{21}^\sc P,
\end{align} 
respectively, and the rates are all equal to $R_{132}^\sc$.

\subsubsection{Encoding of the uni-directional streams}
The uni-directional streams $m_{ij}^\su$ with rates $R_{ij}^\su$ are encoded using Gaussian codes. Each $m_{ij}^\su$ is mapped to a codeword $u_{ij}^n$ which is a Gaussian code whose components are i.i.d. Gaussian distributed with zero mean and variance $\alpha_{ij}^\su P$.

This encoding is depicted graphically in Figure \ref{YC_scheme} which shows both the uplink and the downlink in the GYC. Having completed the encoding of all the messages, we now proceed with the submission of the transmit signals.

\begin{figure*}
\centering
\includegraphics[width=.9\textwidth]{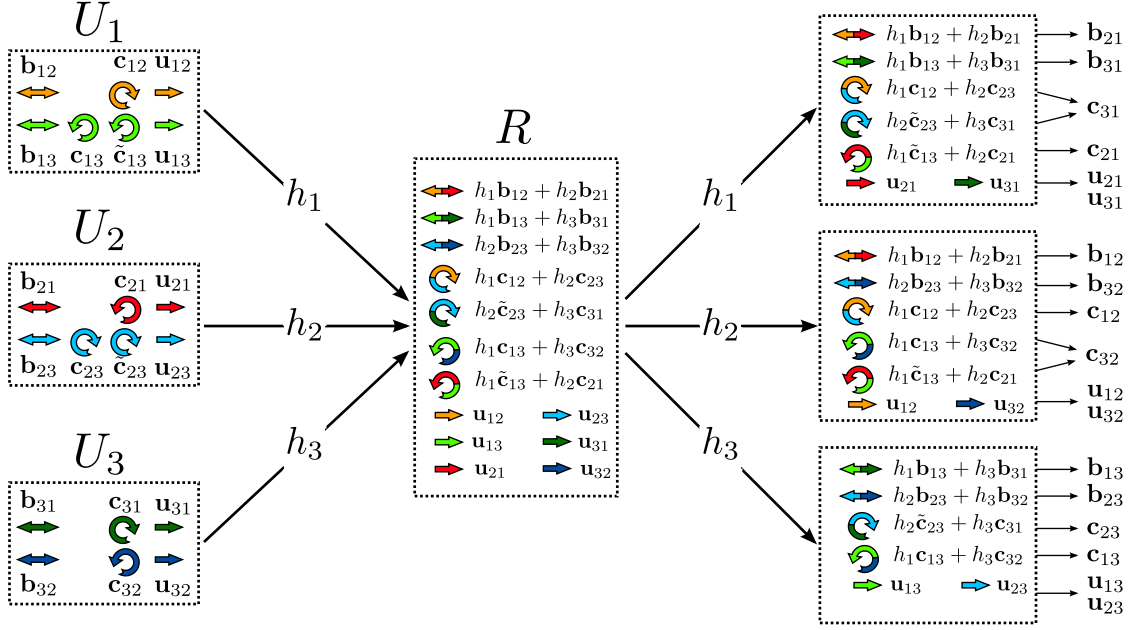}
\caption{\alert{A graphical illustration of the proposed transmit strategy for the GYC. Each user sends a bi-directional signal $\vec{b}_{jk}$, a cyclic signal $\vec{c}_{jk}$, and a uni-directional signal $\vec{u}_{jk}$ corresponding to the message $m_{jk}$. The relay computes linear combinations of the bi-directional and the cyclic signals, and decodes the uni-directional ones as shown, then it sends them to the users which decode their desired signals from the received signal using their own signals as side information.}}
\label{YC_scheme}
\hrule
\end{figure*}

\subsubsection{Transmit signals}
Each user then transmits the superposition of all its codewords as follows
\begin{align}
x_1^n&=b_{12}^n+b_{13}^n+c_{12}^n+c_{13}^n+\tilde{c}_{13}^n+u_{12}^n+u_{13}^n\\
x_2^n&=b_{21}^n+b_{23}^n+c_{21}^n+c_{23}^n+\tilde{c}_{23}^n+u_{21}^n+u_{23}^n\\
x_3^n&=b_{31}^n+b_{32}^n+c_{31}^n+c_{32}^n+u_{31}^n+u_{32}^n.
\end{align}
Recall that $b_{ij}^n$ corresponds to bi-directional streams, $c_{ij}^n$ and $\tilde{c}_{ij}^n$ correspond to cyclic streams, and $u_{ij}^n$ corresponds to uni-directional streams.
 
Notice that the transmission should not violate the power constraint at the users, i.e., 
\begin{align}
\label{Cond:Alpha1}
\alpha_{12}^\sb+\alpha_{13}^\sb+\alpha_{12}^\sc+\alpha_{13}^\sc+\tilde{\alpha}_{13}^\sc+\alpha_{12}^\su+\alpha_{13}^\su&\leq1\\
\label{Cond:Alpha2}
\alpha_{21}^\sb+\alpha_{23}^\sb+\alpha_{21}^\sc+\alpha_{23}^\sc+\tilde{\alpha}_{23}^\sc+\alpha_{21}^\su+\alpha_{23}^\su&\leq1\\
\label{Cond:Alpha3}
\alpha_{31}^\sb+\alpha_{32}^\sb+\alpha_{31}^\sc+\alpha_{32}^\sc+\alpha_{31}^\su+\alpha_{32}^\su&\leq1.
\end{align}

After transmitting $x_1^n$, $x_2^n$, and $x_3^n$ from $U_1$, $U_2$, and $U_3$, respectively, the relay decodes the transmitted signals (or their superposition) as described next.

\subsubsection{Decoding at the relay}
Decoding at the relay proceeds as follows. The uni-directional communication signals $u_{12}^n$, $u_{13}^n$, $u_{21}^n$, and $u_{23}^n$ are decoded first successively in the following order $$u_{12}^n\to u_{13}^n\to u_{21}^n\to u_{23}^n$$
while treating the remaining signals as noise. The effective noise power while decoding $u_{12}^n$ is given by
\begin{align*}
&h_3^2P(\alpha_{31}^\sb+\alpha_{32}^\sb+\alpha_{31}^\sc+\alpha_{32}^\sc+\alpha_{31}^\su+\alpha_{32}^\su)\\
&+h_2^2P(\alpha_{21}^\sb+\alpha_{23}^\sb+\alpha_{21}^\sc+\alpha_{23}^\sc+\tilde{\alpha}_{23}^\sc+\alpha_{21}^\su+\alpha_{23}^\su)\\
&+h_1^2P(\alpha_{12}^\sb+\alpha_{13}^\sb+\alpha_{12}^\sc+\alpha_{13}^\sc+\tilde{\alpha}_{13}^\sc+\alpha_{13}^\su)+1\\
=&h_3^2P(2\alpha_{31}^\sb+2\alpha_{32}^\sb+2\alpha_{31}^\sc+2\alpha_{32}^\sc+\alpha_{31}^\su+\alpha_{32}^\su)\\
&+h_2^2P(2\alpha_{21}^\sb+2\alpha_{23}^\sc+2\alpha_{21}^\sc+\alpha_{21}^\su+\alpha_{23}^\su)\\
&+h_1^2P\alpha_{13}^\su+1\\
\end{align*}
which follows by using \eqref{A12bVsA21b}, \eqref{A13bVsA31b}, \eqref{A23bVsA32b}, \eqref{A12cVsA23c}, \eqref{A23cVsA31c}, \eqref{A13cVsA32c}, and \eqref{A13cVsA21c}. For readability, let
\begin{align}
\sigma^2&=1+(2\alpha_{32}^\sb+2\alpha_{31}^\sb+2\alpha_{31}^\sc+2\alpha_{32}^\sc+\alpha_{32}^\su+\alpha_{31}^\su)h_3^2P.
\end{align}

Then, reliable decoding of $u_{12}^n$ is possible under the rate constraint given in equation \eqref{R12uU} at the top of page \pageref{R12uU}.
\begin{figure*}
\begin{align}
\label{R12uU}
R_{12}^\su&\leq C\left(\frac{\alpha_{12}^\su h_1^2P}{\sigma^2+2(\alpha_{21}^\sb+\alpha_{23}^\sc+\alpha_{21}^\sc)h_2^2P+\alpha_{13}^\su h_1^2P+\alpha_{21}^\su h_2^2P+\alpha_{23}^\su h_2^2P}\right)\\
\label{R13uU}
R_{13}^\su&\leq C\left(\frac{\alpha_{13}^\su h_1^2P}{\sigma^2+2(\alpha_{21}^\sb+\alpha_{23}^\sc+\alpha_{21}^\sc)h_2^2P+\alpha_{21}^\su h_2^2P+\alpha_{23}^\su h_2^2P}\right)\\
\label{R21uU}
R_{21}^\su&\leq C\left(\frac{\alpha_{21}^\su h_2^2P}{\sigma^2+2(\alpha_{21}^\sb+\alpha_{23}^\sc+\alpha_{21}^\sc)h_2^2P+\alpha_{23}^\su h_2^2P}\right)\\
\label{R23uU}
R_{23}^\su&\leq C\left(\frac{\alpha_{23}^\su h_2^2P}{\sigma^2+2(\alpha_{21}^\sb+\alpha_{23}^\sc+\alpha_{21}^\sc)h_2^2P}\right).
\end{align}
\hrule
\end{figure*}
After decoding $u_{12}^n$ ($m_{12}^\su$), its contribution is subtracted from the received signal at the relay. The other signals $u_{13}^n$, $u_{21}^n$, and $u_{23}^n$ can be decoded reliably if equations \eqref{R13uU}, \eqref{R21uU}, and \eqref{R23uU} at the top of page \pageref{R13uU} are satisfied.

\begin{remark}
Notice that the relay decodes signals whose power is multiplied by the channel gain $h_1^2$ first (here $u_{12}^n$ and $u_{13}^n$) and then those whose power is multiplied by the channel gain $h_2^2$. This order is also followed next, where signals whose power is multiplied by the channel gain $h_3^2$ are decoded last. If more signals have the same coefficient, then the uni-directional streams are decoded first, followed by the cyclic ones, and finally the bi-directional ones.
\end{remark}

We now proceed with decoding the superposition of the cyclic communication signals $c_{21}^n$ and $\tilde{c}_{13}^n$. Due to the group structure of lattice codes and lattice alignment, $(h_2\lambda_{21}^\sc+h_1\tilde{\lambda}_{13}^\sc)\bmod h_2\Lambda_{21,c}^\sc$ is a point in the nested lattice code $(h_2\Lambda_{21}^\sc,h_2\Lambda_{21,c}^\sc)$. The relay decodes the superposition $(h_2\lambda_{21}^\sc+h_1\tilde{\lambda}_{13}^\sc)\bmod h_2\Lambda_{21,c}^\sc$. From Lemma \ref{Lemma:AlignDecodeRate}, the decoding of this superposition is possible reliably as long as 

\begin{align}
\label{R132cU}
R_{132}^\sc\leq C\left(\frac{2\alpha_{21}^\sc h_2^2P}{\sigma^2+2(\alpha_{21}^\sb+\alpha_{23}^\sc)h_2^2P}\right)-1/2.
\end{align}

Next, the superposition of the cyclic streams $c_{12}^n$ and $c_{23}^n$ is considered. The lattice point $(h_1\lambda_{12}^\sc+h_2\lambda_{23}^\sc)\bmod h_2\Lambda_{23,c}^\sc$, which is a point in the nested lattice code $(h_2\Lambda_{23}^\sc,h_2\Lambda_{23,c}^\sc)$, is decoded, leading to the rate constraint
\begin{align}
\label{R123cU}
R_{123}^\sc\leq C\left(\frac{2\alpha_{23}^\sc h_2^2P}{\sigma^2+2\alpha_{21}^\sb h_2^2P}\right)-1/2.
\end{align}

The superposition of the aligned bi-directional streams $b_{12}^n$ and $b_{21}^n$ is decoded afterwards, i.e., $(h_1\lambda_{12}^\sb+h_2\lambda_{21}^\sb)\bmod h_2\Lambda_{21,c}^\sb$, with a rate constraint for reliable decoding given by
\begin{align}
\label{R12bU}
R_{12}^\sb=R_{21}^\sb\leq C\left(\frac{2\alpha_{21}^\sb h_2^2P}{\sigma^2}\right)-1/2.
\end{align}

Now, we have finished decoding all signals whose power is multiplied by the channel gain $h_2^2$. We proceed with decoding those whose power is multiplied by the channel gain $h_3^2$. We start with the uni-directional communication signals $u_{31}^n$ and $u_{32}^n$ first, then the superposition of the cyclic communication signals $(h_1\lambda_{13}^\sc+h_3\lambda_{32}^\sc)\bmod h_3\Lambda_{32,c}^\sc$ and $(h_2\tilde{\lambda}_{23}^\sc+h_3\lambda_{31}^\sc)\bmod h_3\Lambda_{31,c}^\sc$, and then the superposition of the bi-directional signals $(h_1\lambda_{13}^\sb+h_3\lambda_{31}^\sb)\bmod h_3\Lambda_{31,c}^\sb$ are decoded in the given order, resulting in the following rate constraints
\begin{align}
\label{R31uU}
R_{31}^\su&\leq C\left(\frac{\alpha_{31}^\su h_3^2P}{1+(2\alpha_{32}^\sb+2\alpha_{31}^\sb+2\alpha_{31}^\sc+2\alpha_{32}^\sc+\alpha_{32}^\su)h_3^2P}\right)\\
\label{R32uU}
R_{32}^\su&\leq C\left(\frac{\alpha_{32}^\su h_3^2P}{1+2(\alpha_{32}^\sb+\alpha_{31}^\sb+\alpha_{31}^\sc+\alpha_{32}^\sc)h_3^2P}\right)\\
\label{R132cU2}
R_{132}^\sc&\leq C\left(\frac{2\alpha_{32}^\sc h_3^2P}{1+2(\alpha_{32}^\sb+\alpha_{31}^\sb+\alpha_{31}^\sc)h_3^2P}\right)-1/2\\
\label{R123cU2}
R_{123}^\sc&\leq C\left(\frac{2\alpha_{31}^\sc h_3^2P}{1+2(\alpha_{32}^\sb+\alpha_{31}^\sb)h_3^2P}\right)-1/2\\
\label{R31bU}
R_{31}^\sb&\leq C\left(\frac{2\alpha_{31}^\sb h_3^2P}{1+2\alpha_{32}^\sb h_3^2P}\right)-1/2.
\end{align}
Finally, the superposition of $b_{23}^n$ and $b_{32}^n$, i.e., $(h_2\lambda_{23}^\sb+h_3\lambda_{32}^\sb)\bmod h_3\Lambda_{32,c}^\sb$ is decoded, with reliable decoding being possible if the following rate constraint is fulfilled
\begin{align}
\label{R32bU}
R_{32}^\sb\leq C(2\alpha_{32}^\sb h_3^2P)-1/2.
\end{align}

\subsection{Downlink}
In the uplink phase, the relay has decoded all the uni-directional communication signals $u_{ij}^n$, the superposition of the cyclic signals 
\begin{align}
&(h_1\lambda_{12}^\sc+h_2\lambda_{23}^\sc)\bmod h_2\Lambda_{23,c}^\sc,\\
&(h_2\tilde{\lambda}_{23}^\sc+h_3\lambda_{31}^\sc)\bmod h_3\Lambda_{31,c}^\sc,\\ &(h_2\lambda_{21}^\sc+h_1\tilde{\lambda}_{13}^\sc)\bmod h_2\Lambda_{21,c}^\sc,\\ 
&(h_1\lambda_{13}^\sc+h_3\lambda_{32}^\sc)\bmod h_3\Lambda_{32,c}^\sc, 
\end{align}
and the superposition of the bi-directional communication signals 
\begin{align}
&(h_1\lambda_{12}^\sb+h_2\lambda_{21}^\sb)\bmod h_2\Lambda_{21,c}^\sb,\\
&(h_1\lambda_{13}^\sb+h_3\lambda_{31}^\sb)\bmod h_3\Lambda_{31,c}^\sb,\\
&(h_2\lambda_{23}^\sb+h_3\lambda_{32}^\sb)\bmod h_3\Lambda_{32,c}^\sb.
\end{align}

In the downlink, the relay encodes each of these decoded signals into a Gaussian codeword. Each uni-directional signal $u_{ij}^n$ is encoded into $t_{ij}^n$, the superposition of the cyclic signals is encoded as follows 
\begin{align}
(h_1\lambda_{12}^\sc+h_2\lambda_{23}^\sc)\bmod h_2\Lambda_{23,c}^\sc&\to s_{12}^n,\\
(h_2\tilde{\lambda}_{23}^\sc+h_3\lambda_{31}^\sc)\bmod h_3\Lambda_{31,c}^\sc&\to s_{31}^n,\\ (h_2\lambda_{21}^\sc+h_1\tilde{\lambda}_{13}^\sc)\bmod h_2\Lambda_{21,c}^\sc&\to s_{21}^n,\\ 
(h_1\lambda_{13}^\sc+h_3\lambda_{32}^\sc)\bmod h_3\Lambda_{32,c}^\sc&\to s_{32}^n, 
\end{align}
and the bi-directional signals as follows
\begin{align}
(h_1\lambda_{12}^\sb+h_2\lambda_{21}^\sb)\bmod h_2\Lambda_{21,c}^\sb&\to r_{21}^n,\\
(h_1\lambda_{13}^\sb+h_3\lambda_{31}^\sb)\bmod h_3\Lambda_{31,c}^\sb&\to r_{31}^n,\\
(h_2\lambda_{23}^\sb+h_3\lambda_{32}^\sb)\bmod h_3\Lambda_{32,c}^\sb&\to r_{32}^n.
\end{align}

The relay allocates a power $\beta_{ij}^\su P$ for $t_{ij}^n$, i.e., $t_{ij}\sim\mathcal{N}(0,\beta_{ij}^\su P)$, it allocates $\beta_{ij}^\sc P$ for $s_{ij}^n$ and $\beta_{ij}^\sb P$ for $r_{ij}^n$. For the power constraint to be satisfied, it is required that the sum of all $\beta_{ij}^\su$, $\beta_{ij}^\sc$, and $\beta_{ij}^\sb$, given by $\beta_\Sigma$ fulfills
\begin{align}
\label{Cond:Beta}
\beta_\Sigma=\sum \beta_{ij}^\su+\sum \beta_{ij}^\sc+\sum \beta_{ij}^\sb\leq1. 
\end{align}
The relay then sends the superposition of all these codewords
\begin{align}
x_r^n=r_{21}^n+r_{31}^n+r_{32}^n+s_{12}^n+s_{31}^n+s_{21}^n+s_{32}^n+\sum_{i\neq j}t_{ij}^n.
\end{align}

Let us now illustrate how the decoding of the desired signals is done at each of the nodes $U_1$, $U_2$, and $U_3$.

\subsubsection{Decoding at $U_3$}
$U_3$ decodes the signals intended to it in this order: $t_{13}^n$, $t_{23}^n$, $s_{31}^n$, $s_{32}^n$, $r_{31}^n$, $r_{32}^n$, while treating the other signals as noise. The necessary rate constraints for reliable decoding are
\begin{align}
\label{R13uD}
R_{13}^\su&\leq C\left(\frac{\beta_{13}^\su h_3^2P}{1+(\beta_{23}^\su+\beta_{32}^\sc+\beta_{31}^\sc+\beta_{31}^\sb+\beta_{32}^\sb+\beta)h_3^2P}\right)\\
R_{23}^\su&\leq C\left(\frac{\beta_{23}^\su h_3^2P}{1+(\beta_{32}^\sc+\beta_{31}^\sc+\beta_{31}^\sb+\beta_{32}^\sb+\beta)h_3^2P}\right)\\
R_{132}^\sc&\leq C\left(\frac{\beta_{32}^\sc h_3^2P}{1+(\beta_{31}^\sc+\beta_{31}^\sb+\beta_{32}^\sb+\beta)h_3^2P}\right)\\
R_{123}^\sc&\leq C\left(\frac{\beta_{31}^\sc h_3^2P}{1+(\beta_{31}^\sb+\beta_{32}^\sb+\beta)h_3^2P}\right)\\
R_{31}^\sb&\leq C\left(\frac{\beta_{31}^\sb h_3^2P}{1+(\beta_{32}^\sb+\beta)h_3^2P}\right)\\
\label{R32bD}
R_{32}^\sb&\leq C\left(\frac{\beta_{32}^\sb h_3^2P}{1+\beta h_3^2P}\right),
\end{align}
where $\beta=\beta_{12}^\su+\beta_{32}^\su+\beta_{12}^\sc+\beta_{21}^\sc+\beta_{21}^\sb+\beta_{21}^\su+\beta_{31}^\su$.

By decoding $t_{13}^n$ and $t_{23}^n$, the third user can obtain $u_{13}^n$ and $u_{23}^n$ and hence its desired uni-directional messages $m_{13}^\su$ and $m_{23}^\su$. By decoding $s_{32}^n$, the third user can obtain the superposition $(h_1\lambda_{13}^\sc+h_3\lambda_{32}^\sc)\bmod h_3\Lambda_{32,c}^\sc$. Knowing $\lambda_{32}^\sc$, $U_3$ can extract $\lambda_{13}^\sc$ and hence obtain the desired cyclic communication message $m_{13}^\sc$ (cf. Lemma \ref{Lemma:AlignDecode}). Similarly, by decoding $s_{31}^n$, $m_{23}^\sc$ is recovered. Finally, by decoding $r_{31}^n$ and $r_{32}^n$, the superposition of $(h_1\lambda_{13}^\sb+h_3\lambda_{31}^\sb)\bmod h_3\Lambda_{31,c}^\sb$ and $(h_2\lambda_{23}^\sb+h_3\lambda_{32}^\sb)\bmod h_3\Lambda_{32,c}^\sb$ can be obtained, and consequently by using Lemma \ref{Lemma:AlignDecode}, the bi-directional messages $m_{13}^\sb$ and $m_{23}^\sb$. Thus, the third user obtains all its desired messages.

\subsubsection{Decoding at $U_2$}
Since the third user can decode its desired messages, the second user, having a stronger channel ($h_2^2\geq h_3^2$), can also decode all messages intended to user 3. After decoding the messages intended to user 3, the second user decodes its intended signals $t_{12}^n$, $t_{32}^n$, $s_{12}^n$, $s_{21}^n$, and $r_{21}^n$ successively in this order while treating the remaining signals as noise. The necessary rate constraints for reliable decoding are
\begin{align}
\label{R12uD}
R_{12}^\su&\leq C\left(\frac{\beta_{12}^\su h_2^2P}{1+(\beta_{32}^\su+\beta_{12}^\sc+\beta_{21}^\sc+\beta_{21}^\sb+\beta_{21}^\su+\beta_{31}^\su)h_2^2P}\right)\\
R_{32}^\su&\leq C\left(\frac{\beta_{32}^\su h_2^2P}{1+(\beta_{12}^\sc+\beta_{21}^\sc+\beta_{21}^\sb+\beta_{21}^\su+\beta_{31}^\su)h_2^2P}\right)\\
R_{123}^\sc&\leq C\left(\frac{\beta_{12}^\sc h_2^2P}{1+(\beta_{21}^\sc+\beta_{21}^\sb+\beta_{21}^\su+\beta_{31}^\su)h_2^2P}\right)\\
R_{132}^\sc&\leq C\left(\frac{\beta_{21}^\sc h_2^2P}{1+(\beta_{21}^\sb+\beta_{21}^\su+\beta_{31}^\su)h_2^2P}\right)\\
\label{R21bD}
R_{21}^\sb&\leq C\left(\frac{\beta_{21}^\sb h_2^2P}{1+(\beta_{21}^\su+\beta_{31}^\su)h_2^2P}\right).
\end{align}

By decoding $t_{12}^n$ and $t_{32}^n$, $U_2$ recovers $m_{12}^\su$ and $m_{32}^\su$. By decoding $s_{12}^n$, $U_2$ can obtain the superposition $(h_1\lambda_{12}^\sc+h_2\lambda_{23}^\sc)\bmod h_2\Lambda_{23,c}^\sc$. Since $U_2$ knows $\lambda_{23}^\sc$, the signal $\lambda_{12}^\sc$ can be extracted which recovers $m_{12}^\sc$. By decoding $s_{21}^n$, $U_2$ can obtain the superposition $(h_2\lambda_{21}^\sc+h_1\tilde{\lambda}_{13}^\sc)\bmod h_2\Lambda_{21,c}^\sc$. Since $\lambda_{21}^\sc$ is known by $U_2$, $\tilde{\lambda}_{13}^\sc$ can be extracted, and hence the message $m_{13}^\sc$. Notice that $m_{13}^\sc$ is not desired by $U_2$, but it can be used in combination with $s_{32}^n$ (recall that this can be decoded by $U_2$ since it can be decoded by $U_3$) to obtain $m_{32}^\sc$ which is a desired message. Finally, by decoding $r_{21}^n$ and $r_{32}^n$, $m_{12}^\sb$ and $m_{32}^\sb$ can be obtained. Consequently, all messages intended to $U_2$ are successfully decoded.

\subsubsection{Decoding at $U_1$}
The first user also decodes all signals that are decodable by $U_2$ and $U_3$. The cyclic messages $m_{21}^\sc$ and $m_{31}^\sc$ can be obtained from $s_{12}^n$, $s_{31}^n$, and $s_{21}^n$ in a manner similar to decoding the cyclic messages by $U_2$. The bi-directional messages $m_{21}^\sb$ and $m_{31}^\sb$ are also obtained similarly from $r_{21}^n$ and $r_{31}^n$. Then, it decodes the remaining signals $t_{21}^n$ and $t_{31}^n$ if the following rate constraints are satisfied
\begin{align}
\label{R21uD}
R_{21}^\su&\leq C\left(\frac{\beta_{21}^\su h_1^2P}{1+\beta_{31}^\su h_1^2P}\right)\\
\label{R31uD}
R_{31}^\su&\leq C\left(\beta_{31}^\su h_1^2P\right).
\end{align}

Thus, the uni-directional messages $m_{21}^\su$ and $m_{31}^\su$ are obtained from $t_{21}^n$ and $t_{31}^n$.  This recovers all messages intended to $U_1$.

Finally, from \eqref{R13uD}-\eqref{R31uD}, we can calculate the sum of all $\beta_{ij}^\su$, $\beta_{ij}^\sc$, and $\beta_{ij}^\sb$ at the relay $\beta_\Sigma$ to be as in equation \eqref{BetaSigma} at the top of page \pageref{BetaSigma}, which must be less than 1.
\begin{figure*}
\begin{align}
\label{BetaSigma}
\beta_\Sigma&=\frac{2^{2(R_{13}^\su+R_{23}^\su+R_{132}^\sc+R_{123}^\sc+R_{31}^\sb+R_{32}^\sb)}-1}{h_3^2P}+2^{2(R_{13}^\su+R_{23}^\su+R_{132}^\sc+R_{123}^\sc+R_{31}^\sb+R_{32}^\sb)}\frac{2^{2(R_{12}^\su+R_{32}^\su+R_{123}^\sc+R_{132}^\sc+R_{21}^\sb)}-1}{h_2^2P}\nonumber\\
&\quad+2^{2(R_{13}^\su+R_{23}^\su+2R_{132}^\sc+2R_{123}^\sc+R_{31}^\sb+R_{32}^\sb+R_{12}^\su+R_{32}^\su+R_{21}^\sb)}\frac{2^{2(R_{21}^\su+R_{31}^\su)}-1}{h_1^2P}
\end{align}
\hrule
\end{figure*}

Let the region achieved by this scheme, for a given power allocation satisfying the power constraints, be denoted $\mathcal{R}_g$, given by
\begin{align}
\mathcal{R}_g=\left\{\vec{R}\in\mathbb{R}_+^6| \text{ \eqref{R12uU}-\eqref{R32bU} and \eqref{R13uD}-\eqref{R31uD} are satisfied}\right\}. 
\end{align}
Then we have the following inner bound.
\begin{theorem}
The union over all possible power allocations satisfying the rate constraints \eqref{Cond:Alpha1}-\eqref{Cond:Alpha3}, and \eqref{Cond:Beta} of the region $\mathcal{R}_g$ is an inner bound on the capacity region $\mathcal{C}_g$ of the GYC
\begin{align}
\mathcal{C}_g\supseteq\underline{\mathcal{C}}_g=\bigcup_{
\substack{\alpha_{ij}^\su,\ \alpha_{ij}^\sc,\ \alpha_{ij}^\sb\text{ satisfying \eqref{Cond:Alpha1}-\eqref{Cond:Alpha3}}\\ 
\beta_{ij}^\su,\ \beta_{ij}^\sc,\ \beta_{ij}^\sb\text{ satisfying \eqref{Cond:Beta}}}} \mathcal{R}_g.
\end{align}
\end{theorem}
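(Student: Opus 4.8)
The plan is to verify that the explicit scheme constructed throughout Section~\ref{Sec:GYCInnerBound} drives the average error probability to zero for every rate tuple in $\mathcal{R}_g$ under a fixed feasible power allocation, so that each such $\mathcal{R}_g$ is achievable and the union over all feasible allocations is an inner bound on $\mathcal{C}_g$. Since the encoding, the decoding order, and the rate constraints are already laid out, the proof reduces to assembling three ingredients: feasibility of the transmit signals, reliable decoding at the relay in the uplink, and reliable decoding at the users in the downlink together with correct extraction of the desired messages.

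First I would confirm feasibility. The user transmit powers sum to the budgets enforced by \eqref{Cond:Alpha1}-\eqref{Cond:Alpha3} and the relay power to the budget \eqref{Cond:Beta}, so the power constraints hold by construction. The lattice-scaling relations \eqref{A12bVsA21b}-\eqref{A23bVsA32b} for the bi-directional streams and \eqref{A12cVsA23c}-\eqref{A13cVsA21c} for the cyclic streams are exactly what forces each aligned pair into a common scaled coarse lattice at the relay, so that the received superposition is itself a point of a nested lattice code, as needed to apply the decoding lemmas.

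Next comes the uplink error analysis, where the relay decodes successively in the stated order. For the uni-directional (Gaussian) streams, treating all not-yet-decoded signals as noise gives standard point-to-point reliability whenever \eqref{R12uU}-\eqref{R23uU} and \eqref{R31uU}-\eqref{R32uU} hold; the effective-noise expressions collecting the undecoded powers (e.g.\ the quantity $\sigma^2$) are precisely what these constraints encode. For the aligned bi-directional and cyclic sums I would invoke Lemma~\ref{Lemma:AlignDecodeRate}: at each stage the relay faces a two-user lattice channel whose remaining interference plays the role of the effective noise, and the $-1/2$ penalty appearing in \eqref{R132cU}-\eqref{R12bU} and \eqref{R132cU2}-\eqref{R32bU} is exactly the rate loss from that lemma. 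Successive cancellation, ordered by the channel gains as in the remark following \eqref{R23uU}, then guarantees that all required sums are recovered reliably.

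For the downlink I would exploit the degradedness induced by $h_1^2\ge h_2^2\ge h_3^2$: whatever $U_3$ can decode, $U_2$ and $U_1$ can also decode, so each user first strips the messages decodable by the weaker users and then decodes its own signals, with reliability governed by \eqref{R13uD}-\eqref{R31uD}. Finally, Lemma~\ref{Lemma:AlignDecode} lets each user invert the decoded lattice sums using its own transmitted codewords as side information. The one subtlety is that a user may have to recover an undesired component first in order to peel it off and obtain a desired one (e.g.\ $U_2$ extracting $m_{13}^\sc$ in order to then obtain the desired $m_{32}^\sc$), so I would check that the side information present at each node is exactly sufficient for every extraction along each cycle. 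This bookkeeping of who can invert which sum is the step I expect to be the main obstacle; once it is verified for both cycles $1\to2\to3\to1$ and $1\to3\to2\to1$, every tuple in $\mathcal{R}_g$ is achievable for the chosen allocation, and taking the union over all power allocations satisfying \eqref{Cond:Alpha1}-\eqref{Cond:Alpha3} and \eqref{Cond:Beta} yields $\mathcal{C}_g\supseteq\underline{\mathcal{C}}_g$.
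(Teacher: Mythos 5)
Your proposal is correct and follows essentially the same route as the paper, whose proof of this theorem is precisely the Section~\ref{Sec:GYCInnerBound} construction you assemble: feasibility under \eqref{Cond:Alpha1}--\eqref{Cond:Alpha3} and \eqref{Cond:Beta}, successive decoding at the relay using Lemma~\ref{Lemma:AlignDecodeRate} for the aligned lattice sums (with the $-1/2$ loss) and Gaussian point-to-point arguments for the uni-directional streams, and downlink decoding via degradedness with extraction through Lemma~\ref{Lemma:AlignDecode}. The bookkeeping subtlety you flag (e.g.\ $U_2$ recovering the undesired $m_{13}^\sc$ from $s_{21}^n$ to then peel off $m_{32}^\sc$ from $s_{32}^n$) is exactly handled in the paper's decoding descriptions at $U_1$, $U_2$, and $U_3$, so no gap remains.
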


Next, we prove that this inner bound is within a constant gap, independent of the channel parameters, of the outer bound $\overline{\mathcal{C}}_g$.

\section{Constant Gap Characterization of $\mathcal{C}_g$}
\label{Sec:Gap}
The provided scheme achieves, within a constant gap, the outer bound $\overline{\mathcal{C}}_g$. Namely, we have the following corollary.

\begin{corollary}
For the given GYC, the region $\underline{\mathcal{C}}'_g$ given by 
\begin{align}
\label{R3132}
R_{31}+R_{32}&\leq C(h_3^2P)-2\\
\label{R1323}
R_{13}+R_{23}&\leq C(h_3^2P)-2\\
\label{R121332}
R_{12}+R_{13}+R_{32}&\leq C(h_2^2P+h_3^2P)-3\\
\label{R132312}
R_{13}+R_{23}+R_{12}&\leq C(h_2^2P+h_3^2P)-3\\
\label{R123132}
R_{12}+R_{31}+R_{32}&\leq C(h_1^2P+h_2^2P)-3\\
\label{R132321}
R_{13}+R_{23}+R_{21}&\leq C(h_1^2P+h_3^2P)-3\\
\label{R213123}
R_{21}+R_{31}+R_{23}&\leq C((h_2+h_3)^2P)-7/2\\
\label{R213132}
R_{21}+R_{31}+R_{32}&\leq C((h_2+h_3)^2P)-7/2,
\end{align}
is achievable.
\end{corollary}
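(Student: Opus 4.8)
The plan is to prove the Corollary by showing that the explicit region $\underline{\mathcal{C}}'_g$ is contained in the achievable region $\underline{\mathcal{C}}_g$ of the previous section. Concretely, for every target tuple $\vec{R}$ satisfying \eqref{R3132}--\eqref{R213132} I would exhibit a rate splitting $R_{ij}=R_{ij}^\sb+R_{ij}^\sc+R_{ij}^\su$ and a power allocation $\{\alpha_{ij}^\sb,\alpha_{ij}^\sc,\alpha_{ij}^\su,\beta_{ij}^\sb,\beta_{ij}^\sc,\beta_{ij}^\su\}$ for which $\vec{R}\in\mathcal{R}_g$, i.e. all the scheme constraints \eqref{R12uU}--\eqref{R32bU} and \eqref{R13uD}--\eqref{R31uD} together with the power budgets \eqref{Cond:Alpha1}--\eqref{Cond:Alpha3} and \eqref{Cond:Beta} hold. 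The splitting is dictated by the deterministic solution of Section~\ref{Sec:DYCAchievability}: set the bi-directional parts as in \eqref{R12b}--\eqref{R23b}, the cyclic parts as in \eqref{D}--\eqref{E}, and leave the residual as uni-directional, so that \eqref{BiDRates} and \eqref{CycRates} hold by construction.

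The uplink half of the gap analysis I would organise by grouping the per-stream constraints according to the channel-gain class ($h_1^2$, $h_2^2$, or $h_3^2$) that multiplies the decoded power, exactly as indicated in the decoding remark. Within each class the successive-decoding constraints form a telescoping chain: adding, for instance, \eqref{R31uU}, \eqref{R32uU}, \eqref{R132cU2}, \eqref{R123cU2}, \eqref{R31bU} and \eqref{R32bU}, the interference terms cancel pairwise and the left-hand sides collapse, via \eqref{BiDRates}--\eqref{CycRates}, to $R_{31}^\su+R_{32}^\su+R_{31}^\sc+R_{32}^\sc+R_{31}^\sb+R_{32}^\sb=R_{31}+R_{32}$, while the right-hand side becomes $C$ of the aggregate $h_3^2$-power minus the four accumulated $1/2$-bit lattice penalties carried by the cyclic and bi-directional layers; choosing the $\alpha$'s in this class to exhaust the budget then gives exactly $R_{31}+R_{32}\le C(h_3^2P)-2$, i.e. \eqref{R3132}. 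The same telescoping applied to the $h_1^2$- and $h_2^2$-chains, together with the per-class power budgets, reproduces the remaining uplink-side bounds.

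For the downlink I would work directly from the closed form \eqref{BetaSigma}. Inverting the decoding constraints \eqref{R13uD}--\eqref{R31uD} expresses the relay power $\beta_\Sigma$ as a sum of three tiers whose exponents, after substituting \eqref{BiDRates} and \eqref{CycRates}, simplify to the desired sums; for example the $h_3^2$-tier exponent reduces to $R_{13}+R_{23}$. Bounding $2^{2x}-1\le 2^{2x}$ and imposing the budget \eqref{Cond:Beta}, $\beta_\Sigma\le 1$, then forces each tier to lie below unity, which yields the clean constraints \eqref{R1323}, \eqref{R132312}, \eqref{R132321} and \eqref{R213132} (and, with the $(|h_2|+|h_3|)^2$ grouping of the aligned cyclic layers, \eqref{R213123} and \eqref{R123132}); the additive constants $2$, $3$ and $7/2$ are precisely what is needed to absorb the $2^{2x}-1$ slack together with the noise-floor terms created by treating higher tiers as noise.

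The main obstacle, I expect, is not any single inequality but the simultaneity of the bookkeeping: one must commit to a single allocation of the roughly twenty power variables and then certify that none of the thirty-odd individual constraints is violated and that both $\sum\alpha\le 1$ and $\beta_\Sigma\le 1$ hold with the needed slack. In particular the delicate point is to show that the per-stream uplink constraints are never the binding ones --- so that after the telescoping only the eight sum constraints survive --- and that the cyclic streams, transmitted twice and hence appearing in two different gain-classes, are charged consistently on both the uplink and the downlink. Verifying that the accumulated penalties never exceed the stated $7/2$ bits, uniformly in $h_1,h_2,h_3$ and $P$, is the crux that makes the gap a genuine constant.
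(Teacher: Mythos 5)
Your proposal matches the paper's proof in essence: the paper likewise shows $\underline{\mathcal{C}}'_g\subseteq\underline{\mathcal{C}}_g$ by fixing the rate split via the min-based formulas \eqref{R21b}--\eqref{Riju} (equivalently your \eqref{R12b}--\eqref{R23b}, \eqref{D}--\eqref{E}), inverting the successive-decoding constraints \eqref{R12uU}--\eqref{R32bU} to obtain explicit $\alpha$'s whose per-user sums telescope geometrically, and checking $\sum\alpha\leq 1$ and $\beta_\Sigma\leq 1$ via \eqref{BetaSigma} against the defining inequalities of $\underline{\mathcal{C}}'_g$. The bookkeeping you flag as the crux is exactly what the paper carries out, organized into the eight sign-pattern sectors \eqref{case:1}--\eqref{case:8} with a further sub-case analysis on the argmin of the cyclic rate in the two sectors where cyclic streams are active.
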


Clearly, $\underline{\mathcal{C}}'_g$ is within a constant gap of at most 7/6 bits per stream, of the outer bound $\overline{\mathcal{C}}_g$. Thus, proving the achievability of $\underline{\mathcal{C}}'_g$ characterizes the capacity of the Y-channel within a constant gap. The remainder of this section is devoted for proving this result.

For this purpose, we need to show that any rate tuple in $\underline{\mathcal{C}}'_g$ is achievable. To simplify the analysis, we split the 6-dimensional space of $\vec{R}$ into 8 different sectors:
\begin{align}
\label{case:1}1)&\hspace{0.5cm}R_{12}\geq R_{21},\ R_{13}\geq R_{31},\ R_{23}\geq R_{32},\\
\label{case:2}2)&\hspace{0.5cm}R_{12}\geq R_{21},\ R_{13}\geq R_{31},\ R_{23}\leq R_{32},\\
\label{case:3}3)&\hspace{0.5cm}R_{12}\geq R_{21},\ R_{13}\leq R_{31},\ R_{23}\geq R_{32},\\
\label{case:4}4)&\hspace{0.5cm}R_{12}\geq R_{21},\ R_{13}\leq R_{31},\ R_{23}\leq R_{32},\\
\label{case:5}5)&\hspace{0.5cm}R_{12}\leq R_{21},\ R_{13}\geq R_{31},\ R_{23}\geq R_{32},\\
\label{case:6}6)&\hspace{0.5cm}R_{12}\leq R_{21},\ R_{13}\geq R_{31},\ R_{23}\leq R_{32},\\
\label{case:7}7)&\hspace{0.5cm}R_{12}\leq R_{21},\ R_{13}\leq R_{31},\ R_{23}\geq R_{32},\\
\label{case:8}8)&\hspace{0.5cm}R_{12}\leq R_{21},\ R_{13}\leq R_{31},\ R_{23}\leq R_{32}.
\end{align}

As we shall see, the third and the sixth cases are particularly important, since in these cases we need the cyclic communication strategy to achieve $\underline{\mathcal{C}}'_g$.  In the other cases, as we see next, the cyclic communications strategy is not necessary for a constant gap characterization of the capacity region.

In the following, the message $m_{ij}$ is split into its three parts $m_{ij}^\sb$, $m_{ij}^\sc$, and $m_{ij}^\su$ according to the following rates (similar to Sections \ref{BDC}, \ref{CC}, and \ref{AUC})
\begin{align}
\label{R21b}
R_{21}^\sb&=\min\{R_{12},R_{21}\}\\
R_{31}^\sb&=\min\{R_{13},R_{31}\}\\
R_{32}^\sb&=\min\{R_{32},R_{23}\}\\
R_{123}^\sc&=\min\{R_{12}-R_{21}^\sb,R_{23}-R_{32}^\sb,R_{31}-R_{31}^\sb\}\\
R_{132}^\sc&=\min\{R_{21}-R_{21}^\sb,R_{32}-R_{32}^\sb,R_{13}-R_{31}^\sb\}\\
\label{Riju}
R_{ij}^\su&=R_{ij}-R_{ij}^\sb-R_{ijk}^\sc,\quad i\neq j\neq k\neq i.
\end{align}
Recall \eqref{BiDRates} and \eqref{CycRates}. We fix the rates of the sub-messages as given above, and then, we show that there exists a valid power allocation that achieves these rates as long as $\vec{R}$ is in $\underline{\mathcal{C}}'_g$. Next, we only consider case 3) in \eqref{case:3}, the remaining cases are similar and are considered in Appendix \ref{GapProof}.

\subsection*{Case 3) $R_{12}\geq R_{21}$, $R_{13}\leq R_{31}$, $R_{23}\geq R_{32}$:}
This is one case where cyclic communication is necessary for achieving $\underline{\mathcal{C}}'_g$. Consider a rate tuple $\vec{R}$ in $\underline{\mathcal{C}}'_g$. In this case, we use \eqref{R21b}-\eqref{Riju} to write
\begin{align}
R_{12}^\sb&=R_{21},\quad R_{12}^\su=R_{12}-R_{21}-R_{123}^\sc\\
R_{13}^\sb&=R_{13},\quad R_{31}^\su=R_{31}-R_{13}-R_{123}^\sc\\
R_{23}^\sb&=R_{32},\quad R_{23}^\su=R_{23}-R_{32}-R_{123}^\sc\\
R_{123}^\sc&=\min\{R_{12}-R_{21},R_{23}-R_{32},R_{31}-R_{13}\}.
\end{align}

Using \eqref{R12uU}-\eqref{R32bU}, we have
\begin{align}
\alpha_{32}^\sb&=\frac{2^{2R_{32}+1}-1}{2h_3^2P}\\
\alpha_{31}^\sb&=\frac{2^{2R_{13}+1}-1}{2h_3^2P}2^{2R_{32}+1}\\
\alpha_{31}^\sc&=\frac{2^{2R_{123}^\sc+1}-1}{2h_3^2P}2^{2R_{32}+2R_{13}+2}\\
\alpha_{31}^\su&=\frac{2^{2(R_{31}-R_{13}-R_{123}^\sc)}-1}{h_3^2P}2^{2R_{123}^\sc+2R_{32}+2R_{13}+3}\\
\alpha_{21}^\sb&=\frac{2^{2R_{21}+1}-1}{2h_2^2P}2^{2R_{31}+2R_{32}+3}\\
\alpha_{23}^\sc&=\frac{2^{2R_{123}^\sc+1}-1}{2h_2^2P}2^{2R_{21}+2R_{31}+2R_{32}+4}\\
\alpha_{23}^\su&=\frac{2^{2(R_{23}-R_{32}-R_{123}^\sc)}-1}{h_2^2P}2^{2R_{21}+2R_{31}+2R_{123}^\sc+2R_{32}+5}\\
\alpha_{12}^\su&=\frac{2^{2(R_{12}-R_{21}-R_{123}^\sc)}-1}{h_1^2P}2^{2R_{23}+2R_{21}+2R_{31}+5}.
\end{align}

Now, we check if this power allocation is valid. Let us consider $U_3$, and check if the power allocation parameters above satisfy the power constraint. We add $\alpha_{32}^\sb$, $\alpha_{31}^\sb$, $\alpha_{31}^\sc$, and $\alpha_{31}^\su$ to obtain
\begin{align*}
&\alpha_{32}^\sb+\alpha_{31}^\sb+\alpha_{31}^\sc+\alpha_{31}^\su\\
&=\frac{2^{2(R_{31}+R_{32}+2)}-1}{2h_3^2P}\\
&\quad+\frac{2^{2(R_{123}^\sc+R_{32}+R_{13}+\frac{3}{2})}-2^{2(R_{123}^\sc+R_{32}+R_{13}+2)}}{2h_3^2P}
\\
&<\frac{2^{2(R_{31}+R_{32}+2)}-1}{2h_3^2P}\\
&\stackrel{\eqref{R3132}}{\leq} 1
\end{align*}
where the last step follows since $\vec{R}\in\underline{\mathcal{C}}'_g$. Thus, the power constraint is satisfied at $U_3$. Now, consider $U_2$. Similarly, we can show that as long as $\vec{R}\in\underline{\mathcal{C}}'_g$, then
\begin{align*}
\alpha_{23}^\sb+\tilde{\alpha}_{23}^\sc+\alpha_{21}^\sb+\alpha_{23}^\sc+\alpha_{23}^\su&<\frac{2^{2(R_{23}+R_{21}+R_{31}+3)}-1}{2h_2^2P}\\
&\stackrel{\eqref{R213123}}{\leq} 1.
\end{align*}
Thus, the power allocation also satisfies the power constraint at $U_2$. At $U_1$, we have
\begin{align*}
\alpha_{13}^\sb&+\alpha_{12}^\sb+\alpha_{12}^\sc+\alpha_{12}^\su\\
&\leq\frac{2^{2(R_{12}-R_{123}^\sc+R_{23}+R_{31}+3)}-1}{2h_1^2P}\\
&\leq\left\{
\begin{array}{l}
\frac{2^{2(R_{23}+R_{12}+R_{13}+3)}-1}{2h_1^2P},\quad \text{ if }R_{123}^\sc=R_{31}-R_{13}\\
\frac{2^{2(R_{23}+R_{21}+R_{31}+3)}-1}{2h_1^2P},\quad \text{ if }R_{123}^\sc=R_{12}-R_{21}\\
\frac{2^{2(R_{12}+R_{31}+R_{32}+3)}-1}{2h_1^2P},\quad \text{ if }R_{123}^\sc=R_{23}-R_{32}
\end{array}
\right.\\
&\leq 1
\end{align*}
which follows from \eqref{R132312}, \eqref{R123132}, and \eqref{R213123}. As a result, this power allocation is valid at all users. In the downlink, we calculate $\beta_\Sigma$ using \eqref{BetaSigma}
\begin{align*}
\beta_\Sigma=&\frac{2^{2(R_{23}+R_{13})}-1}{h_3^2P}+2^{2(R_{13}+R_{23})}\frac{2^{2R_{12}}-1}{h_2^2P}\\
&+2^{2(R_{13}+R_{23}+R_{12})}\frac{2^{2(R_{31}-R_{13}-R_{123}^\sc)}-1}{h_1^2P}\leq1\nonumber
\end{align*}
which holds due to \eqref{R1323}, \eqref{R132312}, \eqref{R123132}, and \eqref{R213123}. Since this power allocation is valid, $\underline{\mathcal{C}}'_g$ is achievable in this case. In Appendix \ref{GapProof}, we show that $\underline{\mathcal{C}}'_g$ is achievable in all the 8 sectors listed above (\eqref{case:1}-\eqref{case:8}). Therefore any rate tuple $\vec{R}$ which lies in $\underline{\mathcal{C}}'_g$ is achievable. We obtain the following theorem.

\begin{theorem}
The capacity region $\mathcal{C}_g$ of the GYC is within 7/6 bits per dimension of the outer bound $\overline{\mathcal{C}}_g$. In other words, if $\vec{R}\in\overline{\mathcal{C}}_g$, then $\vec{R}-(\frac{7}{6},\frac{7}{6},\frac{7}{6},\frac{7}{6},\frac{7}{6},\frac{7}{6})$ is achievable.
\end{theorem}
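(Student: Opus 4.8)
The plan is to derive the theorem directly from the Corollary, which already guarantees that $\underline{\mathcal{C}}'_g$ is achievable (its achievability being established sector by sector: Case~3 in the main text and the remaining seven sectors in the Appendix, each by exhibiting an explicit valid power allocation). Granting this, it suffices to establish the purely geometric fact that shifting the outer bound inward by $7/6$ in every coordinate lands inside $\underline{\mathcal{C}}'_g$; that is, for every $\vec{R}\in\overline{\mathcal{C}}_g$ the tuple $\vec{R}^\star\triangleq\vec{R}-(\tfrac{7}{6},\tfrac{7}{6},\tfrac{7}{6},\tfrac{7}{6},\tfrac{7}{6},\tfrac{7}{6})$ satisfies the eight defining inequalities \eqref{R3132}--\eqref{R213132} of $\underline{\mathcal{C}}'_g$.

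First I would pair up the constraints. Each of the eight inequalities defining $\underline{\mathcal{C}}'_g$ is obtained from one inequality of $\overline{\mathcal{C}}_g$ by lowering its right-hand side by a fixed constant: \eqref{R3132}--\eqref{R1323} come from the two-rate bounds \eqref{2RB1}--\eqref{2RB2} with a reduction of $2$; \eqref{R121332}, \eqref{R132312}, \eqref{R123132}, \eqref{R132321} come from the three-rate bounds \eqref{3RB1}, \eqref{3RB2}, \eqref{3RB5}, \eqref{3RB3} with a reduction of $3$; and \eqref{R213123}--\eqref{R213132} come from \eqref{3RB4}, \eqref{3RB6} with a reduction of $7/2$. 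The other outer-bound inequalities \eqref{2RB3}--\eqref{2RB6} do not enter this step, since $\underline{\mathcal{C}}'_g$ is specified by exactly these eight constraints. Subtracting $7/6$ from each coordinate lowers the left-hand side of any bound involving $t$ rates by exactly $\tfrac{7}{6}\,t$.

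The gap bookkeeping is then immediate. For a two-rate bound ($t=2$) the left-hand side drops by $\tfrac{7}{3}$, which exceeds the required reduction $2$; for a three-rate bound with offset $3$ ($t=3$) it drops by $\tfrac{7}{2}>3$; and for the two three-rate bounds with offset $7/2$ ($t=3$) it drops by exactly $\tfrac{7}{2}$. Hence, using $\vec{R}\in\overline{\mathcal{C}}_g$, each of \eqref{R3132}--\eqref{R213132} holds for $\vec{R}^\star$, so $\vec{R}^\star\in\underline{\mathcal{C}}'_g$. If some coordinate of $\vec{R}^\star$ is negative, I would replace it by zero; since every defining inequality upper bounds a sum of nonnegative rates, this truncation only decreases the left-hand sides and preserves membership in $\underline{\mathcal{C}}'_g$, so the truncated tuple remains achievable.

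The genuinely hard part is not this final bookkeeping but the Corollary it rests on, namely producing a valid power allocation in every sector; the delicate cases are $3$ and $6$, where the cyclic strategy is indispensable and the validity of the allocation (including the downlink check $\beta_\Sigma\leq 1$ via \eqref{BetaSigma}) must be verified against the inner-region constraints. The constant $7/6$ itself is forced by \eqref{R213123}--\eqref{R213132}: their offset $7/2$ spread over three rates yields $7/2$ divided by $3$, i.e. $7/6$, strictly larger than the $1$ bit per stream coming from the other bounds ($2$ over $2$ and $3$ over $3$). Thus $7/6$ is exactly the smallest uniform per-coordinate shift the argument can guarantee, and the two $C((|h_2|+|h_3|)^2P)$-bounds are the binding ones.
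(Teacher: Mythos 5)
Your proposal is correct and follows essentially the same route as the paper: the paper's proof is exactly the observation that $\vec{R}\in\overline{\mathcal{C}}_g$ implies $\vec{R}-(\tfrac{7}{6},\dots,\tfrac{7}{6})\in\underline{\mathcal{C}}'_g$ (stated there as ``easily verified''), followed by an appeal to the achievability of $\underline{\mathcal{C}}'_g$ from the preceding corollary. Your constraint-by-constraint bookkeeping ($\tfrac{7}{3}>2$ for the two-rate bounds, $\tfrac{7}{2}>3$ and $\tfrac{7}{2}=\tfrac{7}{2}$ for the three-rate bounds), together with the truncation of negative coordinates, simply makes that verification explicit and correctly identifies the two $C((|h_2|+|h_3|)^2P)$ bounds as the ones forcing the constant $7/6$.
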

\begin{proof}
It can be easily verified that since $\vec{R}\in\overline{\mathcal{C}}_g$, then $\vec{R}'=\vec{R}-(\frac{7}{6},\frac{7}{6},\frac{7}{6},\frac{7}{6},\frac{7}{6},\frac{7}{6})$ is in $\underline{\mathcal{C}}'_g$. Thus $\vec{R}'$ is achievable by the provided scheme, which proves the statement of the theorem.
\end{proof}

\section{Discussion}
\label{Sec:Discussion}
Contrary to many bi-directional communications scenarios, where the cut-set bounds characterize the capacity of the setup in the linear-shift deterministic case \cite{AvestimehrKhajehnejadSezginHassibi,AvestimehrSezginTse}, we have shown that the cut-set bounds do not characterize the capacity of the DYC. It turns out that in such a multiway relaying setup, further bounds are required. Such bounds are derived based on a genie aided approach, leading to an outer bound on the capacity region of the DYC.

The achievability of this outer bound is established by using network coding ideas. The capacity achieving scheme is based on three different strategies, a bi-directional, a cyclic, and a uni-directional strategy. While the first and the last are used to establish the capacity of the bi-directional relay channel, the second is not. The nature of the DYC problem required the use of this cyclic strategy which takes care of bits communicated between the nodes in a cyclic manner, i.e., one bit from user 1 to user 2, one bit from user 2 to user 3, and one bit from user 3 to user 1, using the least amount of resources of the setup (levels). Showing the achievability of the outer bound, we established the capacity region of the DYC.

To extend this result to the Gaussian case, the GYC, a suitable approach is to use nested lattice codes to construct a scheme which mimics the capacity achieving scheme of the DYC. Owing to the group structure of lattice codes, the superposition of two properly designed codewords can be decoded, which mimics the decoding of the superposition (XOR) of bits at the relay in the DYC. By sending a superposition of codewords designed for bi-directional, cyclic, and uni-directional communication, designing a successive decoding/computation strategy and a forwarding strategy at the relay, and a successive decoding strategy at the users $U_1$, $U_2$ and $U_3$, we were able to characterize the capacity region of the GYC within a gap of 7/6 bits per dimension.

Note that this characterizes the sum-capacity of the GYC within a constant additive gap as well. Note that, in \cite{ChaabanSezginAvestimehr_YC_SC}, the sum-capacity of the GYC was characterized within a smaller gap, by restricting the analysis to the sum-capacity.

\begin{appendices}

\section{Uni-directional Communication over the DYC, Section \ref{AUC} Continued}
\label{App:UniDirectionalCases}
We only indicate the equations that are relevant for showing the sufficiency of the levels at the relay for each case. The analysis follows the same lines as in Section \ref{AUC}.

\subsubsection{Case $(R_{21}'',R_{23}'',R_{31}'')=(0,0,0)$}
Equations \eqref{TRB1}, \eqref{TRB5}, \eqref{CS3}, and \eqref{R1} imply 
\begin{align}
R_{32}''&\leq n_3'',\\
R_{13}''&\leq n_3'',\\
R_{12}''+R_{32}''+R_{13}''&\leq n_2''.
\end{align}

\subsubsection{Case $(R_{13}'',R_{12}'',R_{32}'')=(0,0,0)$}
Equations \eqref{TRB3}, \eqref{TRB4}, \eqref{CS3}, and \eqref{R2} imply 
\begin{align}
R_{31}''&\leq n_3'',\\
R_{23}''&\leq n_3'',\\
R_{21}''+R_{31}''+R_{23}''&\leq n_2''.
\end{align}

\subsubsection{Case $(R_{13}'',R_{21}'',R_{23}'')=(0,0,0)$}
Equations \eqref{TRB1}-\eqref{TRB5}, \eqref{CS3}, \eqref{R1}, and \eqref{R2} imply 
\begin{align}
R_{32}''+R_{31}''&\leq n_3'',\\
R_{32}''+R_{12}''&\leq n_2'',\\
R_{12}''+R_{32}''+R_{31}''&\leq n_1''.
\end{align}

\subsubsection{Case $(R_{12}'',R_{31}'',R_{32}'')=(0,0,0)$}
Equations \eqref{TRB1}, \eqref{TRB3}-\eqref{TRB5}, \eqref{CS3}, \eqref{R1}, and \eqref{R2} imply 
\begin{align}
R_{23}''+R_{13}''&\leq n_3'',\\
R_{21}''+R_{23}''&\leq n_2'',\\
R_{13}''+R_{21}''+R_{23}''&\leq n_1''.
\end{align}

\subsubsection{Case $(R_{12}'',R_{13}'',R_{23}'')=(0,0,0)$}
Equation \eqref{TRB3}, \eqref{TRB4}, \eqref{CS3}, and \eqref{R1} imply 
\begin{align}
R_{31}''+R_{32}''&\leq n_3'',\\
R_{21}''+R_{31}''+R_{32}''&\leq n_2''.
\end{align}

\section{Achievability of $\underline{\mathcal{C}}'_g$}
\label{GapProof}

\subsection{Case 1) $R_{12}\geq R_{21}$, $R_{13}\geq R_{31}$, $R_{23}\geq R_{32}$}
In this case, according to \eqref{R21b}-\eqref{Riju}, we have
\begin{align}
\label{RCase1-1}
R_{12}^\sb&=R_{21},\quad R_{12}^\su=R_{12}-R_{21}\\
\label{RCase1-2}
R_{13}^\sb&=R_{31},\quad R_{13}^\su=R_{13}-R_{31}\\
\label{RCase1-3}
R_{23}^\sb&=R_{32},\quad R_{23}^\su=R_{23}-R_{32}.
\end{align}
The rates of the other messages, i.e., $R_{123}^\sc$, $R_{132}^\sc$, $R_{21}^\su$, $R_{31}^\su$, and $R_{32}^\su$ are set to zero. In order to achieve these rates in the uplink, we substitute their values from \eqref{RCase1-1}-\eqref{RCase1-3} in \eqref{R12uU}-\eqref{R32bU} to obtain 
\begin{align}
\alpha_{32}^\sb&=\frac{2^{2R_{32}+1}-1}{2h_3^2P}\\
\alpha_{31}^\sb&=\frac{2^{2R_{31}+1}-1}{2h_3^2P}2^{2R_{32}+1}\\
\alpha_{21}^\sb&=\frac{2^{2R_{21}+1}-1}{2h_2^2P}2^{2R_{31}+2R_{32}+2}\\
\alpha_{23}^\su&=\frac{2^{2(R_{23}-R_{32})}-1}{h_2^2P}2^{2R_{21}+2R_{31}+2R_{32}+3}\\
\alpha_{13}^\su&=\frac{2^{2(R_{13}-R_{31})}-1}{h_1^2P}2^{2R_{21}+2R_{31}+2R_{23}+3}\\
\alpha_{12}^\su&=\frac{2^{2(R_{12}-R_{21})}-1}{h_1^2P}2^{2R_{21}+2R_{13}+2R_{23}+3}.
\end{align}
The power allocation parameters $\alpha_{23}^\sb$, $\alpha_{12}^\sb$, and $\alpha_{13}^\sb$ are calculated from $\alpha_{32}^\sb$, $\alpha_{21}^\sb$, and $\alpha_{31}^\sb$ by using \eqref{A12bVsA21b}, \eqref{A13bVsA31b}, and \eqref{A23bVsA32b}. All the remaining power allocation parameters in this case are equal to zero. Clearly, all $\alpha_{ij}^\sb,\alpha_{ij}^\su\geq0$. Now we need to show that they add up to a quantity less than 1 at each user, thus satisfying the power constraints. We notice that $\vec{R}\in\underline{\mathcal{C}}'_g$ implies
\begin{align*}
\alpha_{31}^\sb+\alpha_{32}^\sb&=\frac{2^{2R_{32}+2R_{31}+2}-1}{2h_3^2P}\stackrel{\eqref{R3132}}{\leq}1\\
\alpha_{21}^\sb+\alpha_{23}^\sb+\alpha_{23}^\su&\leq\frac{2^{2R_{23}+2R_{21}+2R_{31}+4}-1}{2h_2^2P}\stackrel{\eqref{R213123}}{\leq}1\\
\alpha_{12}^\sb+\alpha_{13}^\sb+\alpha_{12}^\su+\alpha_{13}^\su&\leq\frac{2^{2R_{12}+2R_{13}+2R_{23}+4}-1}{2h_1^2P}\stackrel{\eqref{R132312}}{\leq}1,
\end{align*}
Thus, this power allocation is valid in the uplink since it satisfies the power constraints at the sources. Now we check the achievability in the downlink. We use \eqref{BetaSigma} to calculate
\begin{align*}
\beta_\Sigma&=\frac{2^{2(R_{13}+R_{23})}-1}{h_3^2P}+2^{2(R_{13}+R_{23})}\frac{2^{2R_{12}}-1}{h_2^2P}\stackrel{\eqref{R1323},\ \eqref{R132312}}{\leq}1
\end{align*}
Thus, there exists a power allocation which satisfies the power constraint at the relay and achieves \eqref{RCase1-1}-\eqref{RCase1-3}, which proves the achievability of $\underline{\mathcal{C}}'_g$ in this case.

\subsection{Case 2) $R_{12}\geq R_{21}$, $R_{13}\geq R_{31}$, $R_{23}\leq R_{32}$}
Here, using \eqref{R21b}-\eqref{Riju} we set
\begin{align}
R_{12}^\sb&=R_{21},\quad R_{12}^\su=R_{12}-R_{21}\\
R_{13}^\sb&=R_{31},\quad R_{13}^\su=R_{13}-R_{31}\\
R_{23}^\sb&=R_{23},\quad R_{32}^\su=R_{32}-R_{23},
\end{align}
and we set the remaining rates, $R_{123}^\sc$, $R_{132}^\sc$, $R_{21}^\su$, $R_{23}^\su$, and $R_{31}^\su$, to zero. Using \eqref{R12uU}-\eqref{R32bU} we set 
\begin{align}
\alpha_{32}^\sb&=\frac{2^{2R_{23}+1}-1}{2h_3^2P}\\
\alpha_{31}^\sb&=\frac{2^{2R_{31}+1}-1}{2h_3^2P}2^{2R_{23}+1}\\
\alpha_{32}^\su&=\frac{2^{2(R_{32}-R_{23})}-1}{h_3^2P}2^{2R_{31}+2R_{23}+2}\\
\alpha_{21}^\sb&=\frac{2^{2R_{21}+1}-1}{2h_2^2P}2^{2R_{32}+2R_{31}+2}\\
\alpha_{13}^\su&=\frac{2^{2(R_{13}-R_{31})}-1}{h_1^2P}2^{2R_{21}+2R_{32}+2R_{31}+3}\\
\alpha_{12}^\su&=\frac{2^{2(R_{12}-R_{21})}-1}{h_1^2P}2^{2R_{21}+2R_{32}+2R_{13}+3}
\end{align}
to achieve $\vec{R}$. We calculate $\alpha_{23}^\sb$, $\alpha_{12}^\sb$, and $\alpha_{13}^\sb$ from $\alpha_{32}^\sb$, $\alpha_{21}^\sb$, and $\alpha_{31}^\sb$ by using \eqref{A12bVsA21b}, \eqref{A13bVsA31b}, and \eqref{A23bVsA32b}, and set the remaining power allocation parameters to zero. As long as $\vec{R}\in\underline{\mathcal{C}}'_g$, then
\begin{align*}
\alpha_{31}^\sb+\alpha_{32}^\sb+\alpha_{32}^\su&\leq\frac{2^{2R_{32}+2R_{31}+3}-1}{2h_3^2P}\stackrel{\eqref{R1323}}{\leq}1\\
\alpha_{21}^\sb+\alpha_{23}^\sb&\leq\frac{2^{2R_{32}+2R_{21}+2R_{31}+3}-1}{2h_2^2P}\stackrel{\eqref{R213132}}{\leq}1\\
\alpha_{12}^\sb+\alpha_{13}^\sb+\alpha_{12}^\su+\alpha_{13}^\su&\leq\frac{2^{2R_{12}+2R_{13}+2R_{32}+4}-1}{2h_1^2P}\stackrel{\eqref{R121332}}{\leq}1.
\end{align*}
Thus, this power allocation is valid since it satisfies the power constraints at the sources. In the downlink, we use \eqref{BetaSigma} to write
\begin{align*}
\beta_\Sigma&=\frac{2^{2(R_{13}+R_{23})}-1}{h_3^2P}+2^{2(R_{13}+R_{23})}\frac{2^{2(R_{12}+R_{32}-R_{23})}-1}{h_2^2P}\\
&\stackrel{\eqref{R1323},\ \eqref{R121332}}{\leq}1
\end{align*}
Thus $\underline{\mathcal{C}}'_g$ is also achievable in this case.

\subsection{Case 4) $R_{12}\geq R_{21}$, $R_{13}\leq R_{31}$, $R_{23}\leq R_{32}$}
Here, we have
\begin{align}
R_{12}^\sb&=R_{21},\quad R_{12}^\su=R_{12}-R_{21}\\
R_{13}^\sb&=R_{13},\quad R_{31}^\su=R_{31}-R_{13}\\
R_{23}^\sb&=R_{23},\quad R_{32}^\su=R_{32}-R_{23}.
\end{align}
The rates of the remaining messages, $R_{123}^\sc$, $R_{132}^\sc$, $R_{13}^\su$, $R_{21}^\su$, and $R_{23}^\su$, are set to zero. According to \eqref{R12uU}-\eqref{R32bU}, we set
\begin{align}
\alpha_{32}^\sb&=\frac{2^{2R_{23}+1}-1}{2h_3^2P}\\
\alpha_{31}^\sb&=\frac{2^{2R_{13}+1}-1}{2h_3^2P}2^{2R_{23}+1}\\
\alpha_{32}^\su&=\frac{2^{2(R_{32}-R_{23})}-1}{h_3^2P}2^{2R_{13}+2R_{23}+2}\\
\alpha_{31}^\su&=\frac{2^{2(R_{31}-R_{13})}-1}{h_3^2P}2^{2R_{32}+2R_{13}+2}\\
\alpha_{21}^\sb&=\frac{2^{2R_{21}+1}-1}{2h_2^2P}2^{2R_{32}+2R_{31}+2}\\
\alpha_{12}^\su&=\frac{2^{2(R_{12}-R_{21})}-1}{h_1^2P}2^{2R_{21}+2R_{32}+2R_{31}+3}.
\end{align}
The parameters $\alpha_{23}^\sb$, $\alpha_{12}^\sb$, and $\alpha_{13}^\sb$ are calculated from $\alpha_{32}^\sb$, $\alpha_{21}^\sb$, and $\alpha_{31}^\sb$ by using \eqref{A12bVsA21b}, \eqref{A13bVsA31b}, and \eqref{A23bVsA32b}. Now for $\vec{R}\in\underline{\mathcal{C}}'_g$, then
\begin{align*}
\alpha_{31}^\sb+\alpha_{32}^\sb+\alpha_{32}^\su+\alpha_{31}^\su&\leq\frac{2^{2R_{31}+2R_{32}+3}-1}{2h_3^2P}\stackrel{\eqref{R3132}}{\leq}1\\
\alpha_{21}^\sb+\alpha_{23}^\sb&\leq\frac{2^{2R_{32}+2R_{21}+2R_{31}+3}-1}{2h_2^2P}\stackrel{\eqref{R213132}}{\leq}1\\
\alpha_{12}^\sb+\alpha_{13}^\sb+\alpha_{12}^\su&\leq\frac{2^{2R_{12}+2R_{31}+2R_{32}+4}-1}{2h_1^2P}\stackrel{\eqref{R123132}}{\leq}1,
\end{align*}
Thus, this power allocation is valid since it satisfies the power constraints at the sources. In the downlink, we have
\begin{align*}
\beta_\Sigma=&\frac{2^{2(R_{13}+R_{23})}-1}{h_3^2P}+2^{2(R_{13}+R_{23})}\frac{2^{2(R_{12}+R_{32}-R_{23})}-1}{h_2^2P}\\
&+2^{2(R_{13}+R_{32}+R_{12})}\frac{2^{2(R_{31}-R_{13})}-1}{h_1^2P}\stackrel{\eqref{R1323},\ \eqref{R121332}, \eqref{R123132}}{\leq}1
\end{align*}
Since there exists a power allocation that is valid in both the uplink and the downlink, $\underline{\mathcal{C}}'_g$ is achievable in this case.

\subsection{Case 5) $R_{12}\leq R_{21}$, $R_{13}\geq R_{31}$, $R_{23}\geq R_{32}$}
In this case, we have
\begin{align}
R_{12}^\sb&=R_{12},\quad R_{21}^\su=R_{21}-R_{12}\\
R_{13}^\sb&=R_{31},\quad R_{13}^\su=R_{13}-R_{31}\\
R_{23}^\sb&=R_{32},\quad R_{23}^\su=R_{23}-R_{32},
\end{align}
and the remaining rates, $R_{123}^\sc$, $R_{132}^\sc$, $R_{12}^\su$, $R_{31}^\su$, and $R_{32}^\su$, equal to zero. We set
\begin{align}
\alpha_{32}^\sb&=\frac{2^{2R_{32}+1}-1}{2h_3^2P}\\
\alpha_{31}^\sb&=\frac{2^{2R_{31}+1}-1}{2h_3^2P}2^{2R_{32}+1}\\
\alpha_{21}^\sb&=\frac{2^{2R_{12}+1}-1}{2h_2^2P}2^{2R_{32}+2R_{31}+2}\\
\alpha_{23}^\su&=\frac{2^{2(R_{23}-R_{32})}-1}{h_2^2P}2^{2R_{12}+2R_{31}+2R_{32}+3}\\
\alpha_{21}^\su&=\frac{2^{2(R_{21}-R_{12})}-1}{h_2^2P}2^{2R_{12}+2R_{31}+2R_{23}+3}\\
\alpha_{13}^\su&=\frac{2^{2(R_{13}-R_{31})}-1}{h_1^2P}2^{2R_{21}+2R_{31}+2R_{23}+3}.
\end{align}
The parameters $\alpha_{23}^\sb$, $\alpha_{12}^\sb$, and $\alpha_{13}^\sb$ are calculated from $\alpha_{32}^\sb$, $\alpha_{21}^\sb$, and $\alpha_{31}^\sb$ by using \eqref{A12bVsA21b}, \eqref{A13bVsA31b}, and \eqref{A23bVsA32b}. Notice that $\vec{R}\in\underline{\mathcal{C}}'_g$ implies
\begin{align*}
\alpha_{31}^\sb+\alpha_{32}^\sb&=\frac{2^{2R_{31}+2R_{32}+2}-1}{2h_3^2P}\stackrel{\eqref{R3132}}{\leq}1\\
\alpha_{21}^\sb+\alpha_{23}^\sb+\alpha_{21}^\su+\alpha_{23}^\su&\leq\frac{2^{2R_{21}+2R_{31}+2R_{23}+4}-1}{2h_2^2P}\stackrel{\eqref{R213123}}{\leq}1\\
\alpha_{12}^\sb+\alpha_{13}^\sb+\alpha_{13}^\su&\leq\frac{2^{2R_{13}+2R_{21}+2R_{23}+4}-1}{2h_1^2P}\stackrel{\eqref{R132321}}{\leq}1,
\end{align*}
Thus, this power allocation is valid since it satisfies the power constraints at the sources. In the downlink, we calculate
\begin{align*}
\beta_\Sigma&=\frac{2^{2(R_{13}+R_{23})}-1}{h_3^2P}+2^{2(R_{13}+R_{23})}\frac{2^{2R_{12}}-1}{h_2^2P}\\
&\quad+2^{2(R_{13}+R_{23}+R_{12})}\frac{2^{2(R_{21}-R_{12})}-1}{h_1^2P}\stackrel{\eqref{R1323},\ \eqref{R132312},\ \eqref{R132321}}{\leq}1
\end{align*}
Thus, $\underline{\mathcal{C}}'_g$ is achievable in this case since there exists a power allocation that satisfies the power constraints and achieves any rate in $\underline{\mathcal{C}}'_g$.

\subsection{Case 6) $R_{12}\leq R_{21}$, $R_{13}\geq R_{31}$, $R_{23}\leq R_{32}$}
This is another case where cyclic communication is necessary for achieving $\underline{\mathcal{C}}$. Let 
\begin{align}
R_{12}^\sb&=R_{12},\quad R_{21}^\su=R_{21}-R_{12}-R_{132}^\sc\\
R_{13}^\sb&=R_{31},\quad R_{13}^\su=R_{13}-R_{31}-R_{132}^\sc\\
R_{23}^\sb&=R_{23},\quad R_{32}^\su=R_{32}-R_{23}-R_{132}^\sc\\
R_{132}^\sc&=\min\{R_{21}-R_{12},R_{32}-R_{23},R_{13}-R_{31}\}\\
\end{align}
and set the remaining rates, $R_{123}^\sc$, $R_{12}^\su$, $R_{23}^\su$, and $R_{31}^\su$, to zero. In order to achieve these rates, using \eqref{R12uU}-\eqref{R32bU} we set
\begin{align}
\alpha_{32}^\sb&=\frac{2^{2R_{23}+1}-1}{2h_3^2P}\\
\alpha_{31}^\sb&=\frac{2^{2R_{31}+1}-1}{2h_3^2P}2^{2R_{23}+1}\\
\alpha_{32}^\sc&=\frac{2^{2R_{132}^\sc+1}-1}{2h_3^2P}2^{2R_{23}+2R_{31}+2}\\
\alpha_{32}^\su&=\frac{2^{2(R_{32}-R_{23}-R_{132}^\sc)}-1}{h_3^2P}2^{2R_{132}^\sc+2R_{23}+2R_{31}+3}\\
\alpha_{21}^\sb&=\frac{2^{2R_{12}+1}-1}{2h_2^2P}2^{2R_{32}+2R_{31}+3}\\
\alpha_{21}^\sc&=\frac{2^{2R_{132}^\sc+1}-1}{2h_2^2P}2^{2R_{12}+2R_{32}+2R_{31}+4}\\
\alpha_{21}^\su&=\frac{2^{2(R_{21}-R_{12}-R_{132}^\sc)}-1}{h_2^2P}2^{2R_{12}+2R_{32}+2R_{132}^\sc+2R_{31}+5}\\
\alpha_{13}^\su&=\frac{2^{2(R_{13}-R_{31}-R_{132}^\sc)}-1}{h_1^2P}2^{2R_{21}+2R_{32}+2R_{31}+5}.
\end{align}
The parameters $\alpha_{23}^\sb$, $\alpha_{12}^\sb$, and $\alpha_{13}^\sb$ are calculated from $\alpha_{32}^\sb$, $\alpha_{21}^\sb$, and $\alpha_{31}^\sb$ by using \eqref{A12bVsA21b}, \eqref{A13bVsA31b}, and \eqref{A23bVsA32b}. The parameters $\alpha_{13}^\sc$ and $\tilde{\alpha}_{13}^\sc$ are calculated from $\alpha_{32}^\sc$ and $\alpha_{21}^\sc$ using \eqref{A13cVsA32c} and \eqref{A13cVsA21c}, respectively. The remaining power allocation parameters are set to zero. Now, we check if this power allocation is valid. At $U_2$ and $U_3$, we have
\begin{align*}
\alpha_{32}^\sb+\alpha_{31}^\sb+\alpha_{32}^\sc+\alpha_{32}^\su&\leq\frac{2^{2(R_{32}+R_{31}+2))}-1}{2h_3^2P}\stackrel{\eqref{R3132}}{\leq} 1\\
\alpha_{23}^\sb+\alpha_{21}^\sb+\alpha_{21}^\sc+\alpha_{21}^\su&\leq\frac{2^{2(R_{21}+R_{32}+R_{31}+3)}-1}{2h_2^2P}\stackrel{\eqref{R213132}}{\leq} 1,
\end{align*}
At $U_1$, we have
\begin{align*}
\alpha_{13}^\sb&+\alpha_{12}^\sb+\alpha_{13}^\sc+\tilde{\alpha}_{13}^\sc+\alpha_{13}^\su\\
&\leq\frac{2^{2(R_{13}+R_{21}+R_{32}-R_{132}^\sc+3)}-1}{2h_1^2P}\\
&\leq\left\{
\begin{array}{l}
\frac{2^{2(R_{13}+R_{21}+R_{23}+3)}-1}{2h_1^2P},\quad \text{ if } R_{132}^\sc=R_{32}-R_{23}\\
\frac{2^{2(R_{13}+R_{12}+R_{32}+3)}-1}{2h_1^2P},\quad \text{ if } R_{132}^\sc=R_{21}-R_{12}\\
\frac{2^{2(R_{21}+R_{32}+R_{31}+3)}-1}{2h_1^2P},\quad \text{ if } R_{132}^\sc=R_{13}-R_{31}
\end{array}
\right.\nonumber\\
&\leq 1
\end{align*}
which follows from \eqref{R121332}, \eqref{R132321}, and \eqref{R213132}. As a result, this power allocation is valid in the uplink in this case. In the downlink, we calculate $\beta_\Sigma$ from \eqref{BetaSigma}
\begin{align*}
\beta_\Sigma&=\frac{2^{2(R_{13}+R_{23})}-1}{h_3^2P}+2^{2(R_{13}+R_{23})}\frac{2^{2(R_{32}-R_{23}+R_{12})}-1}{h_2^2P}\\
&+2^{2(R_{13}+R_{32}+R_{12})}\frac{2^{2(R_{21}-R_{12}-R_{132}^\sc)}-1}{h_1^2P}\leq1
\end{align*}
from \eqref{R1323}, \eqref{R121332}, \eqref{R132321}, and \eqref{R213132}. There exists a power allocation which achieves these rates, which proves the achievability of $\underline{\mathcal{C}}'_g$ in this case.

\subsection{Case 7) $R_{12}\leq R_{21}$, $R_{13}\leq R_{31}$, $R_{23}\geq R_{32}$}

Here, we have
\begin{align}
R_{12}^\sb&=R_{12},\quad R_{21}^\su=R_{21}-R_{12}\\
R_{13}^\sb&=R_{13},\quad R_{31}^\su=R_{31}-R_{13}\\
R_{23}^\sb&=R_{32},\quad R_{23}^\su=R_{23}-R_{32}.
\end{align}
The remaining rates, $R_{123}^\sc$, $R_{132}^\sc$, $R_{12}^\su$, $R_{13}^\su$, and $R_{32}^\su$, are zero. We set
\begin{align}
\alpha_{32}^\sb&=\frac{2^{2R_{32}+1}-1}{2h_3^2P}\\
\alpha_{31}^\sb&=\frac{2^{2R_{13}+1}-1}{2h_3^2P}2^{2R_{32}+1}\\
\alpha_{31}^\su&=\frac{2^{2(R_{31}-R_{13})}-1}{h_3^2P}2^{2R_{13}+2R_{32}+2}\\
\alpha_{21}^\sb&=\frac{2^{2R_{12}+1}-1}{2h_2^2P}2^{2R_{32}+2R_{31}+2}\\
\alpha_{23}^\su&=\frac{2^{2(R_{23}-R_{32})}-1}{h_2^2P}2^{2R_{12}+2R_{31}+2R_{32}+3}\\
\alpha_{21}^\su&=\frac{2^{2(R_{21}-R_{12})}-1}{h_2^2P}2^{2R_{12}+2R_{31}+2R_{23}+3}.
\end{align}
The parameters $\alpha_{23}^\sb$, $\alpha_{12}^\sb$, and $\alpha_{13}^\sb$ are calculated from $\alpha_{32}^\sb$, $\alpha_{21}^\sb$, and $\alpha_{31}^\sb$ by using \eqref{A12bVsA21b}, \eqref{A13bVsA31b}, and \eqref{A23bVsA32b}. Since $\vec{R}\in\underline{\mathcal{C}}'_g$ then
\begin{align*}
\alpha_{31}^\sb+\alpha_{32}^\sb+\alpha_{31}^\su&\leq\frac{2^{2R_{31}+2R_{32}+3}-1}{2h_3^2P}\stackrel{\eqref{R3132}}{\leq}1,\\
\alpha_{21}^\sb+\alpha_{23}^\sb+\alpha_{21}^\su+\alpha_{23}^\su&\leq\frac{2^{2R_{21}+2R_{31}+2R_{23}+4}-1}{2h_2^2P}\stackrel{\eqref{R213123}}{\leq}1,\\
\alpha_{12}^\sb+\alpha_{13}^\sb&\leq\frac{2^{2R_{12}+2R_{32}+2R_{31}+3}-1}{2h_1^2P}\stackrel{\eqref{R123132}}{\leq}1.
\end{align*}
Thus, this power allocation is valid since it satisfies the power constraints at the sources. In the downlink, we calculate
\begin{align*}
\beta_\Sigma&=\frac{2^{2(R_{13}+R_{23})}-1}{h_3^2P}+2^{2(R_{13}+R_{23})}\frac{2^{2R_{12}}-1}{h_2^2P}\\
&\quad+2^{2(R_{13}+R_{23}+R_{12})}\frac{2^{2(R_{21}-R_{12}+R_{31}-R_{13})}-1}{h_1^2P}\\
&\stackrel{\eqref{R1323},\ \eqref{R132312},\ \eqref{R213123}}{\leq}1
\end{align*}
Thus $\underline{\mathcal{C}}'_g$ is also achievable in this case.

\subsection{Case 8) $R_{12}\leq R_{21}$, $R_{13}\leq R_{31}$, $R_{23}\leq R_{32}$}

In this case, we have
\begin{align}
R_{12}^\sb&=R_{12},\quad R_{21}^\su=R_{21}-R_{12}\\
R_{13}^\sb&=R_{13},\quad R_{31}^\su=R_{31}-R_{13}\\
R_{23}^\sb&=R_{23},\quad R_{32}^\su=R_{32}-R_{23}.
\end{align}
The remaining rates, $R_{123}^\sc$, $R_{132}^\sc$, $R_{12}^\su$, $R_{13}^\su$, and $R_{23}^\su$, are zero. We set
\begin{align}
\alpha_{32}^\sb&=\frac{2^{2R_{23}+1}-1}{2h_3^2P}\\
\alpha_{31}^\sb&=\frac{2^{2R_{13}+1}-1}{2h_3^2P}2^{2R_{23}+1}\\
\alpha_{32}^\su&=\frac{2^{2(R_{32}-R_{23})}-1}{h_3^2P}2^{2R_{13}+2R_{23}+2}\\
\alpha_{31}^\su&=\frac{2^{2(R_{31}-R_{13})}-1}{h_3^2P}2^{2R_{13}+2R_{32}+2}\\
\alpha_{21}^\sb&=\frac{2^{2R_{12}+1}-1}{2h_2^2P}2^{2R_{32}+2R_{31}+2}\\
\alpha_{21}^\su&=\frac{2^{2(R_{21}-R_{12})}-1}{h_2^2P}2^{2R_{12}+2R_{31}+2R_{32}+3}.
\end{align}
The parameters $\alpha_{23}^\sb$, $\alpha_{12}^\sb$, and $\alpha_{13}^\sb$ are calculated from $\alpha_{32}^\sb$, $\alpha_{21}^\sb$, and $\alpha_{31}^\sb$ by using \eqref{A12bVsA21b}, \eqref{A13bVsA31b}, and \eqref{A23bVsA32b}. As long as $\vec{R}\in\underline{\mathcal{C}}'_g$, then
\begin{align*}
\alpha_{31}^\sb+\alpha_{32}^\sb+\alpha_{31}^\su+\alpha_{32}^\su&\leq\frac{2^{2R_{31}+2R_{32}+3}-1}{2h_3^2P}\stackrel{\eqref{R3132}}{\leq}1\\
\alpha_{21}^\sb+\alpha_{23}^\sb+\alpha_{21}^\su&\leq\frac{2^{2R_{21}+2R_{31}+2R_{32}+4}-1}{2h_2^2P}\stackrel{\eqref{R213132}}{\leq}1\\
\alpha_{12}^\sb+\alpha_{13}^\sb&\leq\frac{2^{2R_{12}+2R_{32}+2R_{31}+3}-1}{2h_1^2P}\stackrel{\eqref{R123132}}{\leq}1.
\end{align*}
Thus, this power allocation is valid since it satisfies the power constraints at the sources. In the downlink, we calculate
\begin{align*}
\beta_\Sigma&=\frac{2^{2(R_{13}+R_{23})}-1}{h_3^2P}+2^{2(R_{13}+R_{23})}\frac{2^{2(R_{32}-R_{23}+R_{12})}-1}{h_2^2P}\\
&\quad+2^{2(R_{13}+R_{32}+R_{12})}\frac{2^{2(R_{21}-R_{12}+R_{31}-R_{13})}-1}{h_1^2P}\\
&\stackrel{\eqref{R1323},\ \eqref{R121332},\ \eqref{R213132}}{\leq}1
\end{align*}
Thus, $\underline{\mathcal{C}}$ is achievable in this case.

\end{appendices}

\bibliography{myBib}

\end{document}